\crefname{figure}{Fig.}{Figs.}
\crefname{algorithm}{Protocol}{Protocols}
\newcommand{\optionaldesc}[2]{%
  \phantomsection
  #1\protected@edef\@currentlabel{#1}\label{#2}%
}
\newtheorem*{rep@theorem}{\rep@title}
\newcommand{\newreptheorem}[2]{%
\newenvironment{rep#1}[1]{%
 \def\rep@title{#2 \ref{##1}}%
 \begin{rep@theorem}}%
 {\end{rep@theorem}}}
\newtheorem{theorem}{Theorem}
\newtheorem{observation}[theorem]{Observation}
\newtheorem{definition}[theorem]{Definition}
\newtheorem{corollary}[theorem]{Corollary}
\newtheorem{techlemma}{Lemma}[section]
\newtheorem{techcorollary}[techlemma]{Corollary}
\DeclarePairedDelimiter{\bra}{\langle}{\vert}
\DeclarePairedDelimiter{\ket}{\vert}{\rangle}
\DeclarePairedDelimiterX\braket[2]{\langle}{\rangle}%
  {#1\kern0.15ex\delimsize\vert\kern0.15ex\mathopen{}#2}
\DeclarePairedDelimiterX\ketbra[2]{\vert}{\vert}%
  {#1\kern0.15ex\delimsize\rangle\delimsize\langle\kern0.15ex\mathopen{}#2}
\DeclarePairedDelimiterX{\abs}[1]{\lvert}{\rvert}{%
  \ifblank{#1}{\,\cdot\,}{#1}
}   % absolute value
\DeclarePairedDelimiterX\norm[1]\lVert\rVert{%
  \ifblank{#1}{\,\cdot\,}{#1}
}   % norm
\newcommand{\bk}[2]{\braket{#1}{#2}}
\newcommand{\kb}[2]{\ketbra{#1}{#2}}
\newcommand{\pure}[1]{\kb{#1}{#1}}
\newcommand{\1}{\openone}
\newcommand{\CC}{\mathbb{C}}
\newcommand{\RR}{\mathbb{R}}
\newcommand{\NN}{\mathbb{N}}
\newcommand{\LL}{\mathcal{L}}
\newcommand{\HH}{\mathcal{H}}
\newcommand{\U}{\mathrm{U}}
\newcommand{\ii}{\ensuremath\mathrm{i}}
\newcommand{\e}{\ensuremath\mathrm{e}} % Euler e
\newcommand{\dist}{\mathrm{inFid}}
\newcommand{\veps}{\varepsilon}
\newcommand{\vbt}[1]{{\ttfamily #1}} % accept and reject
\newcommand{\iS}{\mathrm{s}}
\newcommand{\iH}{\mathrm{h}}
\newcommand{\iU}{\mathrm{u}}
\newcommand{\iCX}{\mathrm{cx}\,}
\newcommand{\Cl}{\mathcal{C}l}
\newcommand{\argdot}{\,\cdot\,}
\newcommand{\XX}{\mathrm{X}}
\newcommand{\PP}{\mathbb{P}}
\renewcommand{\d}{\mathrm{d}}
\newcommand{\eutp}{\stackrel{\propto}{=}}%Equal Up To Phase
\newcommand{\ins}[2]{#1\oplus_{#2}} % Insert #1 in the string at position #2 
\newcommand{\wout}[2]{{#1}_{:#2}} % a string #1 without #2-th element
\providecommand\given{}
\newcommand\SetSymbol[1][]{%
  \nonscript\:#1\vert
  \allowbreak
  \nonscript\:
  \mathopen{}}
\DeclarePairedDelimiterX\Set[1]\{\}{%
\renewcommand\given{\SetSymbol[\delimsize]}   
#1
}
\renewcommand{\AA}{\mathrm{A}}
\renewcommand{\vec}[1]{\mathbf{#1}}
\DeclareMathOperator{\CPTP}{CPTP}
\DeclareMathOperator{\DM}{\mathcal{D}} % density matrices/states
\DeclareMathOperator{\diag}{diag}
\DeclareMathOperator{\Tr}{Tr} % trace 
\DeclareMathOperator{\tr}{Tr}
\DeclareMathOperator{\LandauO}{\mathcal{O}} % Landau big-O notation
\DeclareMathOperator{\inFid}{\inFid}
\DeclareMathOperator{\F}{F}
\newcommand{\Fid}{\mathrm{F}}
\newcommand{\avg}{\mathrm{avg}}
\newacro{RB}{randomized benchmarking}
\newacro{GST}{gate set tomography}
\newacro{POVM}{positive operator-valued measure}
\newacro{PVM}{projection-valued measure}
\newacro{CP}{completely positive}
\newacro{CPTP}{completely positive trace preserving}
\newacro{PSD}{positive semidefinite}
\newacro{NISQ}{noisy intermediate-scale quantum}
\newacro{SPAM}{state preparation and measurement} 
\newacro{ONB}{orthonormal basis}
\newacro{SDI}{semi-device independent}
\newacro{QSQ}{quantum system quizzing}
\begin{document}

\title{Sound certification of memory-bounded quantum computers}

\author{Jan N\"oller}
\email{jan.noeller@tu-darmstadt.de}
\affiliation{Department of Computer Science, Technical University of Darmstadt, Darmstadt, 64289 Germany}
\author{Nikolai Miklin}
\affiliation{Hamburg University of Technology, Institute for Quantum Inspired and Quantum Optimization, Hamburg, Germany}
\affiliation{Institute for Applied Physics, Technical University of Darmstadt, Darmstadt, Germany}
\author{Martin Kliesch}
\affiliation{Hamburg University of Technology, Institute for Quantum Inspired and Quantum Optimization, Hamburg, Germany}
\author{Mariami Gachechiladze}
\affiliation{Department of Computer Science, Technical University of Darmstadt, Darmstadt, 64289 Germany}

\begin{abstract}
The rapid advancement of quantum hardware calls for the development of reliable methods to certify its correct functioning.
However, existing certification tests often fall short: they either rely on flawless state preparation and measurement or lack soundness guarantees, meaning that they do not rule out incorrect implementations of the target operations by a quantum device.
We introduce an approach, which we call quantum system quizzing, for the certification of quantum gates in a practical server-user scenario, where a classical user tests the results of quantum computation performed by a quantum server by checking its responses to a set of predesigned small-sized computational problems.
Importantly, this approach does not require trusted state preparation and measurement and is thus inherently free from the associated systematic errors.
For a wide range of relevant gate sets, including a universal one, we prove our certification protocol to be \emph{sound}; i.e., it is guaranteed to reject any incorrect gate implementation, under the assumptions of a known Hilbert space dimension and context independence of error.
A major technical challenge that we are first to resolve is recovering the tensor product structure of a multi-qubit system in the memory-bounded single-device setup. 
Finally, we prove the robustness of our protocol and validate its sample and computational efficiency through extensive numerical experiments.
Our protocol is platform-agnostic and introduces a new paradigm for benchmarking and comparing diverse quantum architectures.
\end{abstract}

\maketitle

\makeatletter
\hypersetup{pdftitle = {Sound certification of memory-bounded quantum computers},
       pdfauthor = {Jan Nöller, Nikolai Miklin, Martin Kliesch, Mariami Gachechiladze},
       pdfsubject = {Quantum computing},
       pdfkeywords = { 
              quantum, certification, verification, quizzing, dimension, assumption, 
              self-testing, soundness, sound,
              verifier, user, remote, computer, computation, 
              dynamics, gate, circuit, algorithm, protocol, 
              clifford, universal,
              gauge, freedom, 
              }
      }

\section{Introduction}
As quantum hardware advances rapidly, its certification is becoming increasingly important to ensure reliability and consistency.
However, existing certification methods face certain challenges. 
Many of these methods verify the proper functioning of some components of a device based on the assumption that other elements (e.g., measurement devices) operate essentially perfectly~\cite{Eisert2020Quantumcertification, kliesch2021theory,liu2020efficient}, possibly introducing systematic errors and leading to hardware miscalibration~\cite{kliesch2021theory}. 
For instance, quantum process tomography is particularly susceptible to \ac{SPAM} errors~\cite{mohseni2008quantum}. 
While alternative methods, such as \ac{RB} \cite{EmeAliZyc05,LevLopEme07,EmeSilMou07,KniLeiRei08,DanCleEme09,Helsen2022framework,Heinrich22GeneralGuarantees} and \ac{GST}~\cite{merkel2013self, BluGamNie13, Nielsen2021gatesettomography,brieger2023compressive} avoid \ac{SPAM} errors, they do not guarantee that all erroneous implementations of quantum gates will be correctly identified~\cite{proctor2017WhatRandomizedBenchmarking}. 
This gap highlights the need for improved methods essential for precise calibration and control of scalable quantum devices.

Any certification method must satisfy two key properties: \textit{completeness} and \textit{soundness}~\cite{kliesch2021theory}. 
Completeness means that the certification test should accept a correctly implemented quantum operation, a requirement that the currently used \ac{SPAM}-robust characterization methods meet~\cite{Helsen2022framework}. 
However, these methods, if used for certification, lack a proof of soundness. 
A test is sound if it accepts \textit{only} those operations that are correctly implemented. 
In \ac{RB} and \ac{GST} literature, the lack of soundness for the certification task is hinted to be connected to \emph{the gauge problem}~\cite{proctor2017WhatRandomizedBenchmarking,Nielsen2021gatesettomography} but no solution is provided.

In the field of the foundations of quantum mechanics, a complete and sound approach to certifying quantum states, measurements, and certain channels, known as self-testing, has been developed~\cite{mayers2003self,Supic2020selftestingof,sekatski2018certifying,Tsirelson93, POPESCU1992which,Acin2012Randomness, Yang2013Robust, Yang2014Robust, Bancal2015Physical, coopmans2019Robust, mckague2010selftestinggraphstates, Baccari2020Scalable, Wu2014Self, Supic2018Self-testingMPE, Supic2023Quantumnetworks, McKague2012RobustSinglet, Supic2016chainedBI, Bamps2015SOS, sekatski2018certifying, Wagner2020DICquantuminstruments,Miklin2020a}. 
It is based solely on the observed statistics, making it free from systematic errors, and has found application in various quantum information processing tasks~\cite{Devetak2005Distillation, Acin2012Randomness}. 
However, self-testing is typically based on the distribution of shared entanglement between spatially separated parties, an assumption that is hard to justify within a single noisy device exhibiting crosstalk~\cite{rybotycki2024violation}.  Recently, self-testing on a single device was proven under computational assumptions based on post-quantum cryptographic primitives~\cite{mahadev2018classical,metger2021self}. 
Realizing such a scheme, however, requires large-scale quantum computation and thus seems out of reach for near-term quantum devices~\cite{stricker2022towards}. 
Finally, under the assumption that single-qubit gates are implemented with high accuracy, the so-called accreditation protocols~\cite{Ferracin18AccreditingOutputsOf, Ferracin21ExperimentalAccreditationOf} ascertain the correctness of the outputs of arbitrary-sized noisy quantum computers. 
However, they provide certificates for a different abstraction level, the correct outputs but not the inner workings of the quantum computer. 

In this work, we propose a method called \ac{QSQ} for the simultaneous certification of multi-qubit quantum gates, state preparation, and measurement.
Due to the simultaneous nature of the certification, quantum gates are certified in a way that does not require trusting the pre-calibration of \ac{SPAM}. 
The method does not require auxiliary qubits or post-processing of measurement data. 
Instead, the certification procedure checks consistency with the expected outcomes for a set of predesignated, computationally small problems in each experimental round, thereby eliminating the need to estimate probabilities. The construction of such sets for universal gate sets constitutes one of the technical contributions of this work.
These features make the method well-suited for experimental implementation on current quantum hardware due to its efficient sample complexity, a claim that we substantiate through an extensive numerical study.

As our main technical contribution, we provide soundness and completeness guarantees for the certification of several quantum gate sets, including a universal gate set, under the assumptions of known underlying Hilbert space dimension and context-independence of the gate implementations. 
These assumptions are also among those considered in popular characterization methods~\cite{Helsen2022framework,Nielsen2021gatesettomography}, and in an earlier proposal for self-testing quantum gates in a single device setup~\cite{vanDam2007self-testing}.
Importantly, in contrast to \cite{vanDam2007self-testing}, we do not assume \emph{addressability} of individual qubits, i.e., tensor product structure of the implemented operations, and instead recover it from the results of the certification test by leveraging input-output relations. Finally, we provide general robustness analysis and extensive numerical experiments suggesting an optimal scaling of sample complexity of the \ac{QSQ} protocol.

\begin{figure}
    \centering
    \includegraphics[width=0.9\linewidth]{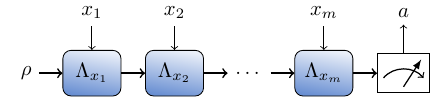}
    \caption{The considered scenario: a quantum system is initialized, undergoes a series of transformations, specified by classical instructions $x_1,x_2,\dots,x_m$, chosen from a finite set $\XX$, and is measured producing an outcome $a\in \AA$.}
    \label{fig:scenario}
\end{figure}

% ------------------------------------------------------------------
\subsection{Notation \& preliminaries}
The sets of density and unitary operators on a Hilbert space $\HH$ are denoted by $\DM(\HH)$ and $\U(\HH)$, respectively.
The Pauli matrices are denoted by $X, Y$ and $Z$, with their corresponding eigenstates given by $\ket{\pm},\ket{\pm_y},\{\ket{0},\ket{1}\}$, respectively.
We denote the Hadamard gate by $H=(X+Z)/\sqrt{2}$, $S$-gate by $S=\kb{0}{0}+\ii\kb{1}{1}$, and define $C_\iH X\coloneqq \kb{+}{+}\otimes\1+\kb{-}{-}\otimes X$, and a two-qubit non-Clifford phase gate $CS \coloneqq \kb{0}{0}\otimes\1+\kb{1}{1}\otimes S$. 
We use a superscript $U^{(i)}$ to denote the canonical embedding of the unitary $U$ acting on qubit $i$ to the $n$-qubit space.

We consider the quantum computation scenario shown in \cref{fig:scenario}. 
A quantum system is prepared in an initial state, undergoes a series of transformations instructed by an input string $\vec{x}=x_1\dots x_l\in \XX^*$, where $\XX^*$ denotes the set of all strings over the alphabet $\XX$, and is finally measured producing the outcome $a\in \AA$, with $\AA$ being the set of possible outcomes. For a string $\vec{x}\in\XX^\ast$, we abbreviate $\vec x^k\coloneqq\vec x\dots \vec x$ to denote its $k$-fold repetition, with the convention that $\vec{x}^0=\epsilon$ is the \emph{empty string}.
We assume that each instruction $x\in \XX$ is associated with the corresponding operation performed by a quantum computer, described by a quantum channel $\Lambda_x$, which is independent of other operations performed in that experiment, an assumption we refer to as \emph{context independence}. 
Under this assumption, an experiment in the considered scenario is described by \emph{a quantum model}.
\begin{definition}\label{def:model}
    A quantum model $\mathcal{M}$ is a $3$-tuple $\mathcal{M}=(\rho,\{\Lambda_x\}_{x\in \XX},\{M_a\}_{a\in \AA})$, specifying an initial state $\rho$, a set $\{\Lambda_x\}_{x\in \XX}$ of quantum channels, each with an associated label, and a \ac{POVM} $\{M_{a}\}_{a\in \AA}$, all defined over a finite-dimensional Hilbert space $\HH$.
\end{definition}
We refer to $\dim(\mathcal{H})$ simply as the dimension of the model.
For quantum models with unitary channels, $\Lambda_x(\argdot)=U_x(\argdot)U_x^\dagger $, and a pure initial state $\rho=\kb{\psi}{\psi}$, we write $\mathcal{M}=(\ket{\psi},\{U_x\}_{x\in \XX},\{M_a\}_{a\in \AA})$.
A quantum model $\mathcal{M}$ augmented by a set of channels $\{\Lambda_x\}_{x\in \XX'}$ with an additional set of labels $\XX'$, is denoted as
\begin{equation}
\mathcal{M}+\{\Lambda_x\}_{x\in \XX'}\coloneqq(\rho,\{\Lambda_x\}_{x\in \XX\cup \XX'},\{M_a\}_{a\in \AA}). 
\end{equation}
We choose labels matching a target gate that the quantum computer should implement.
For instance, for the $S$ gate we use the label ``$\iS$''.

We consider three classes of quantum models:
\begin{enumerate}\itemsep-1em
    \item the $n$-qubit $S$-gate model,
    \begin{equation}\label{eq:def_model_s}
        \mathcal{S}_n \coloneqq\left(\ket{+}^{\otimes n},\{S^{(k)}\}_{k=1}^n,\{\kb{J}{J}\}_{J\in\{\mathrm{+,-}\}^n}\right),
    \end{equation}
    \item the $n$-qubit Clifford group model,
    \begin{equation}\label{eq:def_model_cl}
        \mathcal{C}l_n \coloneqq\mathcal{S}_n+\{H^{(1)}\} + \{C_\iH X^{(1,k)}\}_{k=2}^n,
    \end{equation}
    \item the $n$-qubit universal model,
    \begin{equation}\label{eq:def_model_u}
        \mathcal{U}_n \coloneqq \mathcal{C}l_n + \{CS^{(1,2)}\}.
    \end{equation}
\end{enumerate}
We explain our choice of $CS$, instead of, for example, $T$-gate as a non-Clifford gate, after presenting our results.
Finally, we need a notion of \emph{equivalence} between quantum models.
 
\begin{definition}\label{def:equiv}
    We say that two quantum models $\mathcal{M} = (\rho,\{\Lambda_x\}_{x\in \XX},\{M_a\}_{a\in \AA})$ and $\mathcal{N}=(\tilde\rho,\{\tilde\Lambda_x\}_{x\in \XX},\{\tilde M_a\}_{a\in \AA})$, both defined over $\HH$, are equivalent (denoted as $\mathcal{M}\sim_u\mathcal{N}$), if there exists a unitary $U$ (possibly followed by a complex conjugation) such that $U\tilde\rho U^\dagger = \rho$, $U\tilde M_a U^\dagger = M_a$ for all $a\in \AA$ and $ U\tilde\Lambda_x(U^\dagger (\argdot) U)U^\dagger=\Lambda_x(\argdot)$ for all $x\in \XX$.
\end{definition}
Equivalent models describe the same experiment~\cite{wigner1931gruppentheorie}, and in the following, we refer to this unitary (or anti-unitary) degree of freedom collectively as \emph{unitary gauge}.

% ------------------------------------------------------------------
\section{Quantum system quizzing protocol}
In the self-testing literature~\cite{Supic2020selftestingof}, one would say that a quantum model $\mathcal{M}$ is \emph{self-tested} by the observed correlations $\mathbb{P}(a\vert x_1 \dots x_l)=\tr[M_a \Lambda_{x_l}\circ\dots\circ\Lambda_{x_1}(\rho)]$, if any other model defined over the same Hilbert space and producing the same correlations is equivalent to $\mathcal{M}$. 
Here, we propose a stronger notion of certification that avoids reliance on estimating correlations or any figure of merit, and instead tests only for the \emph{compliance} with correct outcomes, which we call quantum system quizzing.
For a quantum model $\mathcal{M}$, we define the \emph{output map} $a_\mathcal{M}$
that assigns to each input string $\vec{x}=x_1\dots x_l$ the set of outcomes that can be observed with non-zero probability, 
\begin{equation}\label{eq:outputMap}
    a_\mathcal{M}(\vec{x}) \coloneqq \Set{a\in \AA \given \tr[M_a \Lambda_{x_l}\circ\dots \circ\Lambda_{x_1}(\rho)]>0}. 
\end{equation}
Now we formally introduce the~\ac{QSQ} protocol:
\begin{algorithm}[H]
\caption{Quantum system quizzing (QSQ)}\label{protocol}
\begin{algorithmic}
\STATE \textbf{Input:} $N$ -- number of repetitions, $\mathcal{X}\subset \XX^*$ -- finite set of gate sequences, $a_\mathcal{M}$ -- output map for a target model $\mathcal{M}$.
\FOR{$i\in \{1,\dots, N\}$}
\STATE draw $\vec{x}$ from $\mathcal{X}$ uniformly at random;
\STATE initialize a system, run the quantum circuit corresponding to $\vec{x}$, record the outcome $a$;
\IF{$a\notin a_\mathcal{M}(\vec{x})$}
\STATE{output \vbt{reject} and end the protocol;}
\ENDIF
\ENDFOR
\STATE output \vbt{accept}.
\end{algorithmic}
\end{algorithm}

Each run of the \ac{QSQ} protocol starts with an input $\vec{x}$ selected at random from the predetermined set of sequences, or \emph{quizzes}, $\mathcal{X}$, after which a quantum system is initialized, the sequence of gates corresponding to $\vec{x}$ is performed, the system is measured with the outcome $a$ recorded.
If the response is not among the measurement outcomes $a_\mathcal{M}(\vec{x})$ expected from the target quantum model $\mathcal{M}$, the protocol aborts by rejecting the implemented quantum model. 
Clearly, \cref{protocol} leads to a complete test: if the implemented model $\mathcal{N}$ is equivalent to the target model $\mathcal{M}$, the protocol outputs \vbt{accept} for any $N$.
On the contrary, if \cref{protocol} accepts for any $N$, then it must be that $a_\mathcal{N}(\vec{x})\subseteq a_\mathcal{M}(\vec{x})$, $\forall \vec{x}\in \mathcal{X}$ for any $\mathcal{N}$.
Investigating whether this implies that $\mathcal{N}$ is equivalent to $\mathcal{M}$ for the set of quizzes $\mathcal{X}$, i.e., whether the test is \emph{sound}, is precisely the focus of the current work.

\begin{definition}
\label{def:self-testing}
    Let $\mathcal{M} = (\rho,\{\Lambda_x\}_{x\in \XX},\{M_a\}_{a\in \AA})$ be a $d$-dimensional quantum model.
    We say that a set of inputs $\mathcal{X}\subset \XX^*$ certifies $\mathcal{M}$, if every $d$-dimensional model $\mathcal{N}=(\tilde\rho,\{\tilde\Lambda_x\}_{x\in \XX},\{\tilde M_a\}_{a\in \AA})$ satisfying $a_\mathcal{N}(\vec{x}) \subseteq a_{\mathcal{M}}(\vec{x})$ for all $\vec{x}\in\mathcal{X} $ is equivalent to $\mathcal{M}$. 
\end{definition}

In practice, \cref{protocol} rejects any implementation if we let $N\to \infty$ due to errors in the gates or \ac{SPAM}. Analogously, in practice, we can only assert the condition $a_\mathcal{N}(\vec{x})\subseteq a_\mathcal{M}(\vec{x})$ up to some level of confidence.
Intuitively, the closer the implementation is to the target one, the higher the probability that \cref{protocol} accepts it for a given $N$, and vice-versa. 
Here, we first answer the question of whether the models in \cref{eq:def_model_s,eq:def_model_cl,eq:def_model_u} can, \emph{in principle}, be certified for $N\to \infty$, and later discuss its robustness and the effects of finite $N$, along with the discussion of the method's sample complexity.

% ------------------------------------------------------------------
\section{Certification of multi-qubit models}
First, we note that the inputs $\mathcal{X}_1\coloneqq \Set{(\iS\iS)^i\given i\in\Set{0,1,2}}$ certify the $\mathcal{S}_1$ model (see \cref{app:1_qubit}), which is a simplified argument of Ref.~\cite{noller2025classical}.
Generalizing this result to the two-qubit model $\mathcal{S}_2$ is already nontrivial, as we assume only that the overall system is four-dimensional, and the tensor product structure between the qubit subsystems must be reconstructed from the responses in the \ac{QSQ} protocol. 
Specifically, we consider the following quizzes:
\begin{equation}\label{eq:two-qubits-S-inputs}
    \mathcal{X}_2\coloneqq\mathcal{X}_a\cup\mathcal{X}_b\cup\Set*{(\iS_a\iS_b)^2,(\iS_b\iS_a)^2},
\end{equation}
with the sets $\mathcal{X}_a\coloneqq\Set{\iS_b^{2j}\iS_a^{i}\given j\in \{0,1\},0\leq i\leq 4}$ and $\mathcal{X}_b \coloneqq \Set{\iS_a^{2i}\iS_b^{j}\given i\in\{0,1\},0\leq j\leq 4}$. The subscripts $a$ and $b$ of the classical labels $\iS_a$ and $\iS_b$ are used to denote instructions for implementing the $S$-gate on qubit $A$ and qubit $B$, respectively. 
\begin{theorem}\label{res:S_2}
    The input set $\mathcal{X}_2$ defined in \cref{eq:two-qubits-S-inputs} certifies the two-qubit quantum model $\mathcal{S}_2$.
\end{theorem}

We emphasize again that, apart from the labels, \emph{a priori}, we do not assume any structure that would differentiate between the corresponding implemented channels $\Lambda_{\iS_a}$ and $\Lambda_{\iS_b}$.  
We present a sketch of the proof below, while a detailed proof can be found in \cref{app:2_qubits}. 
The analogous $n$-qubit results are stated and proven as \cref{res:S_n} in \cref{app:n_qubits}. 
To start, we need the notion of a \emph{subchannel}. 

\begin{definition}\label{def:subchannels}
For a channel $\Lambda$ on a bipartite system $\HH_A\otimes \HH_B$ we define its subchannel on subsystem $A$, associated to $\rho\in \DM(\mathcal{H}_B)$ as the map $\Lambda\vert_A^\rho:\DM(\HH_A)\to \DM(\HH_A)$  via $\Lambda\vert^\rho_A(\sigma) \coloneqq  \tr_B[\Lambda(\sigma\otimes\rho)]$, for all $\sigma\in \DM(\HH_A)$.
\end{definition}

\begin{proof}[Proof sketch]
Let $\mathcal{N} = (\rho, \{\Lambda_{\iS_a},\Lambda_{\iS_b}\}, \{M_{J}\}_{J\in\{+,-\}^2})$ be a model defined over $\HH\cong\CC^4$ that satisfies $a_{\mathcal{N}}(\vec{x})\subseteq a_{\mathcal{S}_2}(\vec{x})$ for all $\vec{x}\in \mathcal{X}_2$. 
Abbreviate $\Lambda_a\coloneqq \Lambda_{\iS_a},\Lambda_b:=\Lambda_{\iS_b}$. First of all, the measurement distinguishes perfectly between the states $\rho_{ij}\coloneqq \Lambda^{2i}_{a}\circ\Lambda_{b}^{2j}(\rho)$ for $i,j\in\{0,1\}$. 
The dimension constraint then dictates that $\{M_{J}\}_{J\in\{+,-\}^2}$ is a rank-$1$ projective measurement in the \ac{ONB} $\{\ket{+},\ket{-}\}^{\otimes 2}$ by choosing a suitable unitary gauge. 
This induces an effective tensor-product structure $\HH=\HH_A\otimes\HH_B$.

Testing the inputs $\mathcal{X}_a$ shows that the second bit of the outputs is independent of how often we call the instruction $\iS_a$. 
Then, we conclude that $\Lambda_a$ acts trivially on the $B$-subsystem if the latter is in one of the reference basis states $\{\ket{+},\ket{-}\}\in \mathcal{H}_B$. 
We exploit this fact to certify the single-qubit subchannels $\Lambda_{a}\vert_A^{\ket{+}},\Lambda_{a}\vert_A^{\ket{-}}$, since the quizzes $\mathcal{X}_1$ for $\mathcal{S}_1$ certification are contained in $\mathcal{X}_a$. 
We conclude that both subchannels act like the $S$-gate. Similarly, we certify the subchannels $\Lambda_{b}\vert_B^{\ket{+}},\Lambda_{b}\vert_B^{\ket{-}}$ from $\mathcal{X}_b$.

Note that the unitarity of the subchannels does not guarantee the unitarity of the channel acting on the whole Hilbert space.
However, to prove unitarity of, e.g., $\Lambda_a$, it  suffices to find a state $\ket{\varphi}\in \HH_B$ with $\braket{+}{\varphi},\braket{-}{\varphi}\neq 0$, such that the associated subchannel $\Lambda_{a}\vert_A^{\ket{\varphi}}$ has a pure state in its image (see~\cref{lemma:unitarity_from_subchannels} in \cref{app:lemmas}). 
The last two instructions in the set $\mathcal{X}_2$ ensure precisely that such states are prepared in the protocol for both $\Lambda_a$ and $\Lambda_b$.
Lastly, one reconciles the unitary gauges arising in the certification of the two gates.
\end{proof}

We show next that once we have certified a quantum model $\mathcal{M}$, we can also certify augmented quantum models $\mathcal{M}+\{U\}$ for a large class of unitaries $U$ with little effort.
Particularly interesting is that one can add \emph{entangling gates}, which we demonstrate by augmenting the $\mathcal{S}_2$-model with the $C_\iH X$-gate in \cref{res:C_hX}.
To state our result concisely, we formalize a notion of full coherence for \acp{ONB} with more than two elements.
\begin{definition}\label{def:coherence_graph}
    Given an \ac{ONB} $\mathcal{B} = \{\ket{\psi_i}\}_{i=0}^{d-1}$ of a $d$-dimensional Hilbert space $\HH$ and a finite set $\mathcal{C}\subseteq\DM(\HH)$, define the coherence graph of $\mathcal{C}$ with respect to $\mathcal{B}$ as a simple graph with vertices corresponding to the elements of $\mathcal{B}$ and each pair of vertices $(i, j)$ connected by an edge if and only if $\bra{\psi_i}\sigma\ket{\psi_j}\neq 0$ for some $\sigma\in\mathcal{C}$. 
    We say that the set $\mathcal{C}$ is fully coherent with respect to $\mathcal{B}$ if the coherence graph is connected. 
\end{definition}
Let us denote by $\mathcal{D}_\mathcal{M}$ the set of states attainable in a model $\mathcal{M}$ by acting with finite channel compositions on the initial state. 
We are now ready to state our second main result, which discusses the certification of augmented models.
\begin{theorem}\label{cor:adding_inputs}
    Let a $d$-dimensional model $\mathcal{M} = (\ket{\psi_0},\{U_x\}_{x\in \XX}, $ $\{\kb{\psi_a}{\psi_a}\}_{a\in[d]})$ defined over $\HH$ be certified by a set of quizzes $\mathcal{X}$, and let $U \in \U(\HH)$. 
    If the following conditions are satisfied,
    \begin{enumerate}
    \itemsep0em
    \item[{\crtcrossreflabel{\textit{(i)}}[prop:1]}] All gates $\{U_x\}_{x\in \XX}$ are of finite order;
    \item[{\crtcrossreflabel{\textit{(ii)}}[prop:2]}] $U$ has an eigenbasis $\mathcal{B}\,\subseteq\DM_\mathcal{M}$;
    \item[{\crtcrossreflabel{\textit{(iii)}}[prop:3]}] There exists a finite subset $\mathcal{C}\subseteq\mathcal{D}_\mathcal{M}$ of the attainable states which is fully coherent with respect to $\mathcal{B}$ and satisfies $U\mathcal{C} U^\dagger\subseteq \mathcal{D}_\mathcal{M}$;
    \end{enumerate}
    then the augmented model $\mathcal{M}+\{U\}$ is certified by a set of quizzes with cardinality $d+\abs{\mathcal{C}}+\abs{\mathcal{X}}$.
\end{theorem}
\begin{proof}
We denote the channel associated with the unitary $U_x$ as $\Lambda_x$.
Let ``$\iU$'' be the label corresponding to the additional gate $U$, and $\Lambda_\iU$ the corresponding implemented operation. 
We abbreviate the composition of the channels, corresponding to the input string $\vec{x}=x_1\dots x_l\in \XX^*$, as $\Lambda_{\vec{x}}\coloneqq\Lambda_{x_l}\circ\dots\circ\Lambda_{x_1}$.
For simplicity, we assume that the unitary gauge which realizes the equivalence between $\mathcal{M}$ and its implementation is the identity, i.e., that $\mathcal{M}+\{\Lambda_\iU\}$ is the implemented augmented model.
Otherwise, when the gauge unitary $W$ is $W\neq \openone$, the same proof as below can be applied to the augmented model $\mathcal{M}+\{WUW^\dagger\}$.

We begin by choosing sequences $\vec{x}_j\in \XX^*$ for $j\in[d]$, such that $\pure{\phi_j}\coloneqq \Lambda_{\vec{x}_j}(\pure{\psi_0})$ form an \ac{ONB} of eigenstates of $U$, guaranteed by \ref{prop:2}. 
Additionally, we can choose complementary sequences $\vec{x}_j'$ and outcomes $a_j$ such that $a_\mathcal{M}(\vec{x}_j\vec{x}'_j)=\{a_j\}$, since  \ref{prop:1} guarantees that we can implement the inverses of all gates in $\mathcal{M}$. Since $\mathcal{M}$ can be certified in the \ac{QSQ} protocol, we have that $\pure{\psi_{a_j}} = \Lambda_{\vec{x}'_j}(\pure{\phi_j}), \forall j\in [d]$.
Then, from the condition that the sequences $\vec{x}_j \mathrm{u}\vec{x}'_j$ also result in outcomes $a_j$, for $j\in [d]$, we deduce that the channel $\Lambda_\iU$ acts trivially on the states $\{\pure{\phi_j}\}_{j\in [d]}$. 

Let $\mathcal{C}$ be a finite subset of states, as in \ref{prop:3}. 
For every $\sigma\in\mathcal{C}$ we then find $\vec{y}_\sigma\in\XX^*$ such that $\sigma=\Lambda_{\vec{y}_\sigma}(\pure{\psi_0})$. 
From the assumptions, we know also that $U\sigma U^\dagger\in\mathcal{D}_\mathcal{M}$ for all $\sigma\in\mathcal{C}$ and again with \ref{prop:1}, we can choose sequences $\vec{y}'_\sigma$ and outcomes $a_\sigma\in [d]$ which are deterministically observed for the instructions $\vec{y}_\sigma\mathrm{u}\vec{y}_\sigma'$. 
If the model $\mathcal{M}+\{\Lambda_\iU\}$ produces only the correct outputs $a_\sigma$ for the respective sequences, we infer that $\Lambda_\mathrm{u}(\sigma)=U\sigma U^\dagger $ for all $\sigma\in\mathcal{C}$. 
Then we can apply \cref{lemma:channel=U} to conclude that indeed the channel is unitary and $\Lambda_\mathrm{u}(\argdot) = U(\argdot) U^\dagger$. 
\end{proof}
\begin{corollary}\label{res:C_hX} 
The input set $\mathcal{X}_2 \cup \mathcal{X}_{\iCX}$ with $\mathcal{X}_2$ as in \cref{eq:two-qubits-S-inputs} and $
\mathcal{X}_{\iCX}\coloneqq\Set{\iS_a^{2i}\iS_b^{2j}\iCX\given i,j\in\{0,1\}}\cup\Set{\iS_b\iCX\iS_b,\iS_a\iCX\iS_a,\iS_a\iS_b^2\iCX\iS_a}$ certifies the two-qubit quantum model $\mathcal{S}_2+\{C_\iH X\}$.
\end{corollary}

To certify the $\Cl_2$ model, we need to augment the $\mathcal{S}_2+\{C_\iH X\}$ model with the Hadamard gate. 
Since $C_\iH X$ and $S$-gates can implement a SWAP operator, it suffices to certify the $H$-gate on one qubit only. 
This can be proven similarly to the certification of the $S$-gate via subchannels. 
Therefore, we omit the proof here and leave the details to \cref{app:2_qubits} (\cref{res:Clifford_2}). 
Instead, we formulate our main result for a universal $n$-qubit model. 

\begin{figure*}
    \centering
    \includegraphics[width=.32\linewidth]{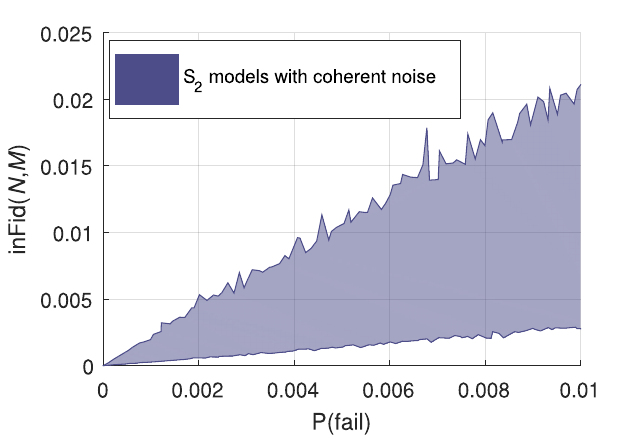}
    \includegraphics[width=.32\linewidth]{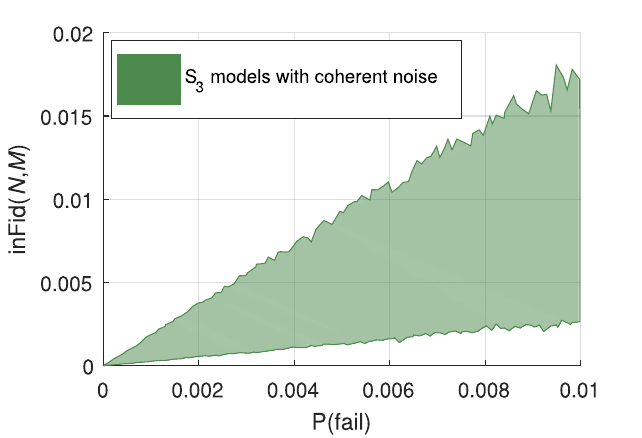}
    \includegraphics[width=.32\linewidth]{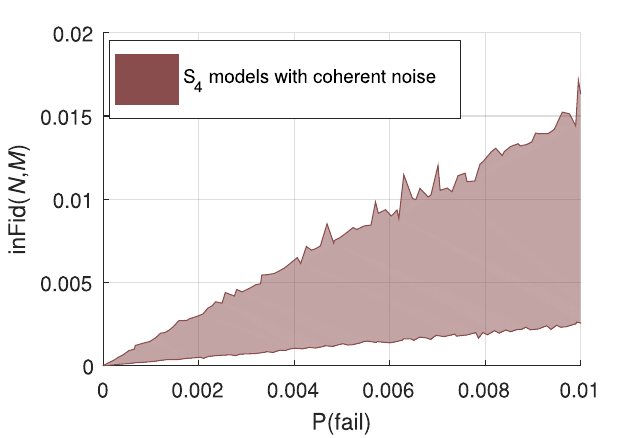}
    \caption{Results of the numerical investigation for $\mathcal{S}_2$ (left),  $\mathcal{S}_3$ (center), and $\mathcal{S}_4$ (right). The coefficients for the soundness (upper borderline) are approx.~$2.6447,2.1993$, and $1.9274$, respectively, and the coefficients for the completeness (lower borderline) are approx.~$0.2448,0.2217$, and $0.2096$, respectively. Each of the plots was created from sampling $10^7$ random noisy models.}
    \label{fig:numSn}
\end{figure*}

\begin{theorem}\label{res:universal}
For every $n\geq 2$, the $n$-qubit model $\mathcal{U}_n = \Cl_n + \{CS^{(1,2)}\}$ can be certified by a quiz set $\mathcal{X}^u_n$ with cardinality $\abs{\mathcal{X}^u_n}\in \LandauO(n 2^n)$.
\end{theorem} 
A set of instructions $\mathcal{X}_n$ to certify $\mathcal{S}_n$ is explicitly stated in \cref{res:S_n} in \cref{app:n_qubits}, while the sets for the Clifford and universal models are obtained by augmenting the $\mathcal{S}_n$ model. 
See the proof in \cref{app:n_qubits}. 
Note that the cardinality of the quiz set is asymptotically optimal in the sense that any certifying set of quizzes, as defined in~\cref{def:self-testing}, requires at least $2^n$ input sequences just to test all the measurement effects $\{M_J\}_{J\in\{+,-\}^n}$.
This, however, does not directly imply that the number of repetitions $N$ in \cref{protocol} must scale exponentially with $n$, as we argue in the robustness analysis section.

Before, let us comment on the choice of universal gate sets. 
The most common universal model to consider is $\mathcal{C}l_n+\{T^{(1)}\}$.
However, this model does not fit exactly in the scope of \ac{QSQ}, since its output map \eqref{eq:outputMap} is equal to the output map of $\mathcal{C}l_n+\{(TZ)^{(1)}\}$, and these two models are \emph{not} equivalent~\cite{schroeder2025certifying}.
This ambiguity can be resolved by estimating probabilities~\cite{noller2025classical}, which requires a relaxation of \cref{def:self-testing}. 
On the other hand, a universal model with $S$, $H$, and $CCZ$-gates can also be certified in the~\ac{QSQ} protocol using~\cref{cor:adding_inputs}. We choose the $CS$ gate in $\mathcal{U}_n$ because it allows universal computation already for $n=2$.

% ------------------------------------------------------------------
\subsection{Robustness analysis}
Next, we discuss the effect of finite $N$. We first provide a general analytical result, which guarantees a polynomial dependence on the targeted fidelity by applying methods from algebraic geometry. We then supplement these results with extensive numerical experiments. 

In order to discuss robustness, we need to introduce a notion of closeness. For this purpose, we introduce the infidelity between two quantum models, $\mathcal{N} = (\tilde\rho,\Set{\tilde\Lambda_x}_{x\in \XX},\Set{\tilde M}_{a\in \AA})$ and $\mathcal{M} = (\rho,\Set{\Lambda_x}_{x\in \XX},\Set{M_a}_{a\in \AA})$, as the maximum among the infidelity of the states, defined with standard Uhlmann-fidelity, the average gate infidelities for the gates, and the total variation distance between the measurements, globally 
minimized over unitary equivalence:
\begin{equation}\label{eq:distNM}
    \dist(\mathcal{M},\mathcal{N}) \coloneqq \min_{\mathcal{N'}\sim_u\mathcal{N}}\dist^*(\mathcal{M},\mathcal{N'}),
\end{equation}
with $\dist^\ast(\mathcal{M},\mathcal{N}):=\max\{1-\F(\tilde\rho,\rho),\max_x (1-\F_{\avg}(\tilde\Lambda_x,\Lambda_x)),\d_{TV}(\{M_a\},\{\tilde M_a\})\}$. 
Assuming the independence of repetitions, the probability of \cref{protocol} to output \vbt{accept} can be bounded as $\PP(\text{\vbt{accept}}) \leq  (1-\PP(\text{\vbt{fail}}))^N$, where $\PP(\text{\vbt{fail}}) = \frac{1}{\abs{\mathcal{X}}}\sum_{\vec{x}\in\mathcal{X}}\PP(a\notin a_\mathcal{M}(\vec{x})\vert \vec{x})$ is the minimum probability of an implemented model to fail a single repetition of \cref{protocol} among $N$ repetitions. Then, following an approach in \cite{vanDam2007self-testing}, we obtain a general robustness result.

\begin{theorem}\label{th:robustness}
    The \ac{QSQ} protocol is robust in the sense that there exist integer $k_1,k_2\geq 1$, and real $C_1,C_2>0$, such that the following holds
    \begin{equation}\label{eq:robust}
    \begin{split}
       C_1\dist(\mathcal{N},\mathcal{M})^{k_1}\leq \PP(\text{\textnormal{\vbt{fail}}})\leq C_2\dist(\mathcal{N},\mathcal{M})^{\frac{1}{k_2}}.
    \end{split}    
    \end{equation}
\end{theorem}

See \cref{app:numerics} for a proof. In Ref.~\cite{noller2025classical}, we proved the bound in ~\cref{eq:robust} with $k_1=1$ for the model $\mathcal{S}_1$.
Our numerical investigation suggests that this linear scaling in the robustness also holds for the models $\mathcal{S}_n$ (see \cref{fig:numSn} and \cref{fig:S2num}) for $n=2,3,4$, and $\mathcal{C}l_2$ (see \cref{fig:num}).

\subsection{Numerical investigation}
We begin by giving more details about the general approach. 
The general procedure we follow is to sample a noisy model $\mathcal{N}$, calculate its probability $\PP(\text{\vbt{pass}})$ to pass a single repetition of \cref{protocol}, and then optimize over the unitary gauge to find the infidelity with respect to the target model $\mathcal{M}$, $\dist(\mathcal{N},\mathcal{M})$, defined in \cref{eq:distNM}.
For each sampled model, we add a point to a plot with the axes representing the calculated infidelity and the probability $\PP(\text{\vbt{fail}}) = 1-\PP(\text{\vbt{pass}})$ of the model failing a single round.
Such scatter plots allow us to observe the worst-case behaviors of robustness.
In the small-error regime, the region occupied by these points is well approximated by two linear functions, one providing a lower bound and the other an upper bound. The noise model that we consider in our numerical investigation is the following:\\

\noindent\emph{State:} The initial quantum state is subject to white noise,
    \begin{equation}
        \tilde\rho = (1-\eta)\rho+\eta\frac{\1}{2^n},
    \end{equation}
    where $\eta\in [0,1]$ is the noise parameter.\\
    
\noindent\emph{Measurement:} The measurement is subject to a statistical read-out error
    \begin{equation}
        \tilde M_a = \bigotimes_{i=1}^n \Big((1-\mu_{i})\kb{a_i}{a_i}+\mu_{i}\kb{\bar{a}_i}{\bar{a}_i}\Big),
    \end{equation}
    for $a\in \{0,1\}^n$, $\mu_i\in [0,1]$ the noise parameters, and where $\bar{a}_i$ corresponds to the opposite value of the bit $a_i$.\\
    
    \noindent\emph{Channels:} Each gate $\Lambda_x$ is subject to a coherent error and depolarizing noise
    \begin{equation}
        \tilde \Lambda_x(\argdot) = (1-\chi_x)\tilde U_x(\argdot) \tilde U_x^\dagger + \chi_x \frac{\1}{2^n}\Tr[\argdot],
    \end{equation}
    for $x\in \XX$, where $\chi_x\in [0,1]$ is the depolarizing noise parameter, and $\tilde U_x$ is a unitary close to $U_x$.
    
\begin{figure}
    \centering
    \includegraphics[width=\linewidth]{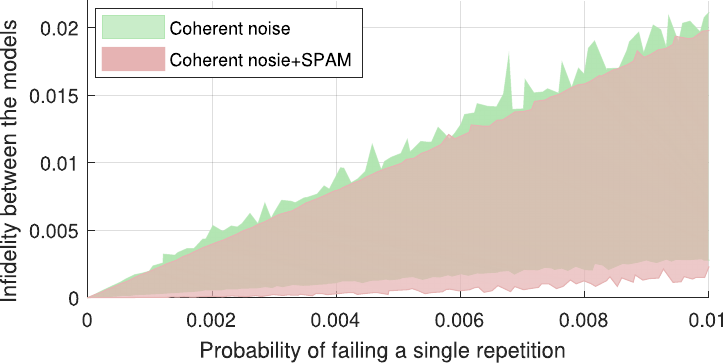}
    \caption{Numerics for robustness for the $\mathcal{S}_2$-model.
    Infidelity between the implemented and target models is plotted against the probability of the former failing a single repetition of \cref{protocol}.
    In the region with higher infidelity values, \ac{SPAM} errors are set to $0$. In the region with lower infidelity values, initial state and measurement are affected by depolarizing noise and statistical errors, respectively, both up to $5\%$. 
    Each region contains $\approx 2\times 10^6$ randomly generated models.
    } 
    \label{fig:S2num}
\end{figure}

If we were to sample the parameters $\eta,\mu_i,\chi_x$, or $\tilde U_x$ uniformly (w.r.t. the Haar measure for the unitary), we would obtain mostly points on the plot that have both high $\PP(\text{\vbt{fail}})$ and high $\dist(\mathcal{N},\mathcal{M})$.
In order to observe the behavior in the low-noise regime, we sample these parameters uniformly in the restricted regions.
For the parameters $\eta,\mu_i,\chi_x$, this is rather trivial to achieve by simply sampling them from some interval $[0,\veps_\mathrm{max}]$.
For $\tilde U_x$, we do the following. If $U_x = V\diag(\lambda_1,\lambda_2,\dots, \lambda_{2^n})V^\dagger\in \U(\HH)$ is the target unitary, then (for each $x$ independently) we draw $\tilde\lambda_i\in \CC$, with $\abs{\tilde\lambda_i}=1$ close to $\lambda_i$, and $W\in \U(\HH)$ close to the identity, and take $\tilde U_x = WV\diag(\tilde\lambda_1,\tilde\lambda_2,\dots, \tilde\lambda_{2^n})V^\dagger W^\dagger$.
These two types of coherent noises represent the over-rotation and misalignment errors. 
Additionally, we define parameters that control the distribution of the values of the above parameters over $x\in \XX$ to account for the edge cases when only certain gates are affected by the noise.
Finally, when calculating the infidelity between the models as given in \cref{eq:distNM}, we optimize (with the gradient descent method) only over the unitary matrices $U$ that are diagonal in the computational basis to avoid the problem of the optimization getting stuck in local minima.
More technical details on the implementation can be found in the programs provided online~\cite{git}.

\begin{figure}
    \centering
    \includegraphics[width=\linewidth]{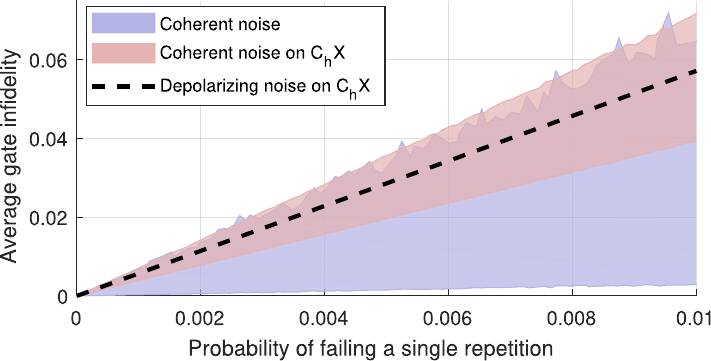}
    \caption{Numerics for robustness for the $\mathcal{C}l_2$-model.
    The worst average gate infidelity among the implemented gates is plotted against the probability of the corresponding noisy model failing a single repetition of \cref{protocol}.
    \ac{SPAM} errors are set to $0$.
    The larger region corresponds to coherent noise applied to all the gates, and contains $\approx 2\times 10^6$ random models, while the smaller region represents the case of coherent noise affecting only the $C_\iH X$ gate and contains $\approx 7\times 10^6$ randomly drawn models.
    The dashed line represents the case where only the $C_\iH X$ gate is affected by depolarizing noise.
    } 
    \label{fig:num}
\end{figure}

The results of our numerical investigation of the robustness for the models $\mathcal{S}_n$ for $n=2,3,4$ are summarized in \cref{fig:numSn}. Additionally, we investigate the two-qubit Clifford model, in particular, the entangling gate in \cref{fig:num}. All plots indicate that, also in the multi-qubit case, we recover the linear robust soundness (i.e., having $k_1=1$ in \cref{th:robustness}) as the analytical bounds obtained in \ac{QSQ}-protocol for  $\mathcal{S}_1$ model~\cite{noller2025classical}.  
In  \cref{fig:S2num}, we analyze the interplay between errors affecting the different components of $\mathcal{S}_2$-model in more detail. Here, the \ac{SPAM} noise only increases the falling probability until it is high enough so that the maximum in \cref{eq:distNM} corresponds to the infidelity of the state or the variational distance for the measurement.
We also observe that the coherent noise corresponding to the over-rotation and misalignment errors leads to worse behavior in terms of soundness than the depolarizing noise in terms of the ratio $\dist(\mathcal{N},\mathcal{M})/\PP(\text{\vbt{fail}})$.
For this reason, we set $\eta=0$, $\mu_i=0$, $\forall i\in \{1,2,\dots, n\}$, and $\chi_x=0$, $\forall x\in \XX$ in the numerical simulations that we present in \cref{fig:numSn}, in order to detect the worst behavior for the robustness of soundness.

% ------------------------------
\section{Discussions \& outlook}
Quantum certification has traditionally been driven by practical considerations. Here, we revisit the problem from a foundational perspective: How can one certify that a quantum computer implements the intended operations while ruling out all erroneous implementations?  We present the first provably sound protocol for certifying the internal operations of multi-qubit models without assuming trusted \ac{SPAM} pre-calibration. Being platform agnostic, our protocol provides a new paradigm for testing quantum computing architectures and opens multiple directions for future work in quantum system certification and characterization.

This work invites further research within the \ac{QSQ} framework.
First, novel techniques are required to prove analytically the linear robustness observed in the numerics. Such techniques, potentially drawing on the representation theory of groups, could also provide insight into the relationship between a given gate set and the corresponding set of quizzes.
In this work, we design the set of quizzes constructively. Instead of calculating the probabilities in the output map for an arbitrary combination of gates, we design quizzes in a way that the output set has low cardinality. This is easy to achieve for the local phase gate model for $n$ qubits, and extending it to a universal model only requires a few additional quizzes. While this construction works, further research is needed to understand if it is minimal. 

Second, our proof in the limit of infinite repetition hinges on inferring exact addressability of the subsystems. Although the robustness analysis certifies closeness to the tensor product model, in the finite-statistics regime, a more targeted approach is required to certify \textit{approximate} addressability by deriving bounds on an appropriate quantitative measure of crosstalk, a quantity of central importance for current hardware.

Finally, in analyzing the robustness, it is important to investigate the effect of testing longer sequences, that is, deeper circuits. Increasing the sequence length amplifies gate errors and may, in turn, reduce the number of repetitions of the \ac{QSQ} protocol required for certification. We believe this effect underlies the behavior observed in our numerical investigation, where the coefficient in the robustness bounds appearing in the soundness guarantees improves with the number of qubits. Moreover, the analysis of long sequences can be used to partially identify and separate \ac{SPAM} errors from gate errors, following well-established practices in the literature, but here for the first time in a fully sound manner.

\emph{Acknowledgments --} We thank Tommaso Calarco and Markus Heinrich for inspiring discussions.
This research was funded by the Deutsche Forschungsgemeinschaft
(DFG, German Research Foundation), project numbers 441423094, 236615297 - SFB 1119, the German Federal Ministry of Education and Research (BMBF) within the funding program ``quantum technologies - from basic research to market'' via the joint project MIQRO (grant number 13N15522), and the Fujitsu Germany GmbH as part of the endowed professorship ``Quantum Inspired and Quantum Optimization''.

\onecolumngrid
\newpage

\section*{Appendix}
\begin{appendix}

\onecolumngrid
\setcounter{figure}{2}

In this Appendix, we provide technical details that support the claims in the main text.
In \cref{app:1_qubit}, we state and prove a simplified single-qubit certification result from Ref.~\cite{noller2025classical}, in the case of ideal statistics.
In \cref{app:2_qubits}, we start by giving a detailed proof of the first central result of this paper, \cref{res:S_2}.
In the same section, we identify an input set (\cref{res:Clifford_2}) for the certification of a $2$-qubit quantum model containing a set of gates that generates the $2$-qubit Clifford group.
In \cref{app:n_qubits}, we analyze the general case of $n$ qubits.
First, we prove \cref{res:S_n}, which generalizes \cref{res:S_2} from the main text.
This generalization not only makes the proof more technical but also brings some conceptual challenges connected to having to consider various bipartitions. 
Later in the same section, we explicitly construct input sets for $n$-qubit universal quantum models, to give a proof for \cref{res:universal}.
In \cref{app:numerics}, we give details on the analytical robustness analysis, including a proof of \cref{th:robustness}.
Finally, in \cref{app:lemmas} we present a number of supporting lemmata.

\subsection*{Additional notation}
In this section, we use a common shorthand notation $\psi\coloneqq \kb{\psi}{\psi}$ for rank-1 projectors.
An $n$-fold composition of a channel $\Lambda$ we denote by $\Lambda^n(\argdot)\coloneqq\Lambda\circ\Lambda\circ\dots \Lambda(\argdot)$.
The unitary group of a Hilbert space $\HH$ we denote by $\U(\HH)$.
In the proofs, we need to write many equivalences between state vectors and between unitary operators which hold up to a phase, i.e., a complex factor with absolute value $1$.
For that purpose, we use the notation $\eutp$.
Schatten $p$-norms of linear operators are denoted as $\norm{\argdot}_p$, and the diamond norm as $\norm{\argdot}_\diamond$.

% ------------------------------------------------------------------
\section{Single-qubit results revisited}\label{app:1_qubit}
In the proofs in this paper, we use the following single-qubit result as a ``subroutine''.

\begin{theorem}\label{res:S_1}
    An input set $\mathcal{X}_1\coloneqq \{\epsilon,\iS^2,\iS^4\}$ certifies the single-qubit quantum model $\mathcal{S}_1$.
\end{theorem}
\begin{proof}
Let $\mathcal{M}\coloneqq (\rho,\{\Lambda_\iS\},\{M_0,M_1\})$ be a $2$-dimensional model defined over $\HH\cong \CC^2$ that is accepted by \cref{protocol} for the inputs $\mathcal{X}_1$.
We begin the proof by observing that instructions $\epsilon$ and $\iS^2$ translate to the conditions $\tr[M_0\rho]=\tr[M_1 \Lambda_\iS^2(\rho)]=1$. These necessitate that $M_0=\rho=\psi$, where $\psi$ is a pure state, and that $\Lambda_\iS^2(\psi)=M_1=\psi^\perp$, where by $\ket{\psi^\perp}$ we denote a vector orthogonal to $\ket{\psi}$, i.e., $\bk{\psi^\perp}{\psi}=0$.
From the instruction $\iS^4$, we obtain that $\Lambda_\iS^4(\psi)=\psi$, and, hence, we can infer that $\Lambda_\iS^2(\psi^\perp)=\psi$.
Therefore, we have
\begin{equation}\label{app:eq:proof_S_1_1}
    \Lambda_\iS\circ\Lambda_\iS(\psi) = \psi^\perp, \quad \Lambda_\iS\circ\Lambda_\iS(\psi^\perp) = \psi. \quad
\end{equation}
We notice that $\Lambda_\iS$ maps two states, $\Lambda_\iS(\psi)$ and $\Lambda_\iS(\psi^\perp)$, to two orthogonal ones, and hence, these states must be orthogonal themselves (see \cref{lemma:orthogonality} for a short proof), which also implies that they are pure, since the system we are dealing with is $2$-dimensional.

Let $\phi\coloneqq \Lambda_\iS(\psi)$, and $\phi^\perp=\Lambda_\iS(\psi^\perp)$.
It is not hard to see that $\phi\notin\{\psi,\psi^\perp\}$ as either of the cases would lead to a contradiction with \cref{app:eq:proof_S_1_1}. 
Therefore, $\ket{\phi}$ is coherent in the basis $\{\ket{\psi},\ket{\psi^\perp}\}$ and since $\Lambda_\iS(\phi)=\psi^\perp$ is a pure state, and $\Lambda_\iS$ maps an \ac{ONB} to an \ac{ONB}, we conclude that the channel $\Lambda_\iS$ must be unitary (see \cref{corr:unitarity}). 

With this conclusion, we can set $\Lambda_\iS(\argdot)\eqqcolon U_\iS(\argdot)U_\iS^\dagger$, for a unitary operator $U_\iS$ to be identified.
We have already established that $\Lambda_\iS^4(\psi)=\psi$ and $\Lambda_\iS^4(\psi^\perp)=\Lambda_\iS^2(\psi) = \psi^\perp$. 
We can additionally see that $\Lambda^4_\iS(\phi)=\Lambda^5_\iS(\psi)=\Lambda_\iS\circ \Lambda^4_\iS(\psi)=\Lambda_\iS(\psi)=\phi$.
In other words, $\Lambda^4_\iS$ acts like the identity channel on an \ac{ONB} and a state which is coherent with respect to it, which lets us conclude that $U_\iS ^4 \eutp \openone$ (see \cref{lemma:channel=U}). Let $U_\iS$ be diagonalized as
\begin{equation}\label{app:eq:proof_S_1_2}
    U_\iS \eutp V (\ketbra{0}{0} + \e^{\ii\alpha}\ketbra{1}{1}) V^\dagger,
\end{equation}
up to some global phase, for some unitary $V$ and $\alpha\in [0,2\pi)$. 
From $U_\iS ^4 \eutp \1$, it follows that $\alpha=\frac{k\pi}{2}$, for some $k\in [4]$. 
Since $\ket{\psi} \notin \{U_\iS \ket{\psi},U_\iS^2 \ket{\psi}\}$, we conclude that $k\notin\{0,2\}$ and, thus, either $k=1$ or $k=3$. 
This already establishes that $U_\iS$ is the desired unitary $S = \kb{0}{0}+\ii\kb{1}{1}$, up to the unitary gauge $V$ and possible complex conjugation. 
However, to obtain the model $\mathcal{S}_1$, we still need to adjust the gauge $V$, such that it maps $\kb{+}{+}$ to $\psi$. For either of the cases $k=1$ or $k=3$, we get that $U_\iS^2 \eutp V Z V^\dagger$, and since $U_\iS^2\ket{\psi} = \ket{\psi^\perp}$, we have
$Z V^\dagger \ket{\psi} = V^\dagger\ket{\psi^\perp}$, and, therefore, 
\begin{equation}
V^\dagger\ket{\psi} 
\eutp \frac{1}{\sqrt{2}}\left(\ket{0}+\e^{\ii \theta} \ket{1}\right),
\end{equation}
for some $\theta \in \RR$. 
In the case of $k=1$, we can set $U=V (\kb{0}{0}+\e^{\ii\theta}\kb{1}{1})$ as the new unitary gauge, which clearly does not affect the equivalence in \cref{app:eq:proof_S_1_2}, but achieves $U^\dagger\ket{\psi}\eutp\ket{+}$, thus establishing equivalence of model $\mathcal{M}$ to the model $\mathcal{S}_1$. 
For $k=3$, we set $U=V X(\kb{0}{0}+\e^{\ii\theta}\kb{1}{1})$, with the additional $X$-gate achieving the mapping $S^\dagger\mapsto S$, and eliminating the need to additionally apply the complex conjugation.
This completes the proof. 
\end{proof}
\Cref{res:S_1} is a simplified version of the main result of Ref.~\cite{noller2025classical} proven for a model which includes only the $S$-gate, as opposed to two gates, $S$ and $S^\dagger$, considered there, and it only considers the ideal case.
Note, that in the above proof, we did not need to consider anti-unitary gauge transformations.
Instead, we could resolve the situation which seem to require complex conjugation by applying an additional $X$-gate.
This can be seen as a special instance of the following observation that will be useful later.
\begin{observation}\label{obs:S_1_gauges}
    If two $2$-dimensional quantum models $(\ket{\psi},\{U_\iS\},\{\psi,\psi^\perp\})$ and $(\ket{\psi},\{\tilde U_\iS\},\{\psi,\psi^\perp\})$, with the measurement in the same basis, are both equivalent to $\mathcal{S}_1$, 
    with the corresponding gauge unitaries being $V$ and $\tilde V$, respectively, then $V^\dagger\tilde{V} \eutp X^r\coloneqq\kb{+}{+}+\e^{\ii\pi r}\kb{-}{-}$ for some $r\in\RR$.
\end{observation}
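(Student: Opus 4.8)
The plan is to read off from the two equivalences only the constraints that the gauges place on the \emph{state} and the \emph{measurement}: these already determine $V^\dagger\tilde V$ up to a phase on each basis vector, so the gate conditions $V S V^\dagger\eutp U_\iS$ and $\tilde V S\tilde V^\dagger\eutp\tilde U_\iS$ play no role and I would discard them at the outset. The content of the observation is thus purely that the initial state and the measurement basis are shared between the two models.

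First I would fix the direction of the gauge, which is the one point that genuinely needs care. Following the proof of \cref{res:S_1}, the gauge realizing the equivalence of an implementation to $\mathcal{S}_1$ maps the \emph{target} objects onto the \emph{implemented} ones, so the state condition together with the measurement conditions $V\kb{+}{+}V^\dagger=\psi$ and $V\kb{-}{-}V^\dagger=\psi^\perp$ give, up to phases, $V\ket{+}\eutp\ket{\psi}$ and $V\ket{-}\eutp\ket{\psi^\perp}$. Because both models share the initial state $\ket{\psi}$ and the measurement basis $\{\psi,\psi^\perp\}$ with the same outcome labelling, the identical two relations hold for $\tilde V$.

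Then I would simply compose. From $V\ket{+}\eutp\ket{\psi}$ one gets $V^\dagger\ket{\psi}\eutp\ket{+}$, so $V^\dagger\tilde V\ket{+}\eutp V^\dagger\ket{\psi}\eutp\ket{+}$, and likewise $V^\dagger\tilde V\ket{-}\eutp V^\dagger\ket{\psi^\perp}\eutp\ket{-}$. Hence $V^\dagger\tilde V$ is diagonal in the basis $\{\ket{+},\ket{-}\}$, i.e. $V^\dagger\tilde V=\e^{\ii\alpha}\kb{+}{+}+\e^{\ii\beta}\kb{-}{-}$ for some phases; absorbing the global phase $\e^{\ii\alpha}$ and setting $r\coloneqq(\beta-\alpha)/\pi\in\RR$ yields $V^\dagger\tilde V\eutp\kb{+}{+}+\e^{\ii\pi r}\kb{-}{-}=X^r$, which is the claim.

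The computation is essentially a one-liner, so the only real obstacle is bookkeeping the gauge convention consistently with \cref{res:S_1}: composing the two gauges in the wrong order, or with the wrong adjoints, would instead produce an operator diagonal in $\{\ket{\psi},\ket{\psi^\perp}\}$ rather than in $\{\ket{+},\ket{-}\}$, and hence the wrong statement. I would also note explicitly that the hypothesis that both gauges are \emph{unitary} rather than anti-unitary is precisely what the extra $X$-gate trick in the proof of \cref{res:S_1} secures, so that $V^\dagger\tilde V$ is a genuine unitary diagonal operator and the factored phases are well defined.
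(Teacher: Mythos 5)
Your proposal is correct and takes essentially the same route as the paper: the paper likewise uses only the state/measurement conditions to write $V\eutp \kb{\psi}{+}+\e^{\ii r_1}\kb{\psi^\perp}{-}$ and $\tilde V\eutp \kb{\psi}{+}+\e^{\ii r_2}\kb{\psi^\perp}{-}$, and then multiplies to get $V^\dagger\tilde V \eutp \kb{+}{+}+\e^{\ii(r_2-r_1)}\kb{-}{-}$, which is exactly your composition argument with the same (correct) gauge-direction convention.
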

Indeed, it must be that $V\eutp \kb{\psi}{+}+\e^{\ii r_1}\kb{\psi^\perp}{-}$ and $\tilde V\eutp \kb{\psi}{+}+\e^{\ii  r_2}\kb{\psi^\perp}{-}$ for some $r_1,r_2\in\RR$. 
This amounts to $V^\dagger\tilde V \eutp \kb{+}{+}+\e^{\ii(r_2-r_1)}\kb{-}{-}$.

% ------------------------------------------------------------------
\section{The two-qubit results}
\label{app:2_qubits}
We start by giving a detailed proof of \cref{res:S_2} from the main text.

\begin{reptheorem}{res:S_2}[restated]
    The two-qubit quantum model $\mathcal{S}_2$ is certified by an input set
    \begin{equation}\label{app:eq:2_qubits_inputs}
        \mathcal{X}_2\coloneqq\mathcal{X}_a\cup\mathcal{X}_b\cup\Set*{(\iS_a\iS_b)^2,(\iS_b\iS_a)^2},
    \end{equation}
    where $\mathcal{X}_a\coloneqq\Set{\iS_b^{2j}\iS_a^{i}\given j\in \{0,1\},i\in [5]}$ and $\mathcal{X}_b \coloneqq \Set{\iS_a^{2i}\iS_b^{j}\given i\in\{0,1\},j\in[5]}.$
\end{reptheorem} 
\begin{proof}
Let $\mathcal{M}\coloneqq (\rho,\{\Lambda_{\iS_a},\Lambda_{\iS_b}\},\{M_{ij}\}_{i,j\in\{0,1\}})$ be a quantum model defined over $\HH\cong \CC^4$ satisfying $a_\mathcal{M}(\vec{x})\subseteq a_{\mathcal{S}_2}(\vec{x})$ for all $\vec{x}\in \mathcal{X}_2$.
For brevity, let us denote $\Lambda_a \coloneqq \Lambda_{\iS_a}$ and $\Lambda_b \coloneqq \Lambda_{\iS_b}$.
The proof starts similarly as in \cref{res:S_1}.
From $a_\mathcal{M}(\iS_b^{2j}\iS_a^{2i})\subseteq a_{\mathcal{S}_2}(\iS_b^{2j}\iS_a^{2i})=\{ij\}$, for $i,j\in\{0,1\}$, we find that the four states $\Lambda_a^{2i}\circ\Lambda_b^{2j}(\rho)$ are distinct and perfectly distinguishable by the measurement. 
Therefore, we conclude that these states, and the corresponding measurement effects $M_{ij}$, are rank-1 projector, which we denote as $\psi_{ij}\coloneqq \kb{\psi_{ij}}{\psi_{ij}}$, and moreover, $\Set{\ket{\psi_{ij}}}_{i,j\in\{0,1\}}$ is an \ac{ONB}.
Next, we consider the input sequences $\iS_a^{2i}\iS_b^{2j}$, for $i,j\in\{0,1\}$ which give the same outputs as $\iS_b^{2j}\iS_a^{2i}$, and, thus, the corresponding states should also be equal to $\psi_{ij}$.
We can, therefore, conclude that 
\begin{equation}\label{app:eq:proof_S_2_exchange_AB}
  \Lambda_a^{2i}\circ\Lambda_b^{2j}(\rho_0)=\Lambda_b^{2j}\circ\Lambda_a^{2i}(\rho_0)=\psi_{ij}, \quad i,j\in \{0,1\}.  
\end{equation}
Clearly, from the empty input string we find that $\rho=\psi_{00}$. 

In the next step, we introduce a bipartite tensor product structure on $\mathcal{H}$ in the following way.
Consider two Hilbert spaces $\HH_A\cong \CC^2$ and $\HH_B\cong\CC^2$, each with an \ac{ONB}
$\{\ket{\psi_i}\}_{i\in\{0,1\}}$.
Let us consider the unitary
\begin{equation}\label{app:eq:S_2_u_otimes}
U_\otimes\coloneqq \sum_{i,j\in \{0,1\}}\ket{\psi_i}_A\otimes\ket{\psi_j}_B\bra{\psi_{ij}}.
\end{equation}
In the following, we omit the subscripts for the subsystems $A$ and $B$, and keep their order in the tensor product.
If we apply $U_\otimes$ to all the elements of the model $\mathcal{M}$, it will map the states $\psi_{ij}$, and thus also the measurement effects $M_{ij}$ and the states in \cref{app:eq:proof_S_2_exchange_AB}, to the product states $\psi_i\otimes\psi_j$ of a bipartite system. 
Since we did not yet make any characterization of the channels $\Lambda_a$ and $\Lambda_b$, we will simply assume from now on that the gauge $U_\otimes$ is already applied to the model $\mathcal{M}$ and proceed working in the convention $\psi_{ij}=\psi_i\otimes\psi_j$. 

We have established the bipartition on the level of the measurement and some of the attainable states.
However, propagating this structure to the channels, i.e., proving addressability of subsystems, is much more challenging.
For now, we focus on the channel $\Lambda_{a}$, as the other case then just follows from exchanging the roles of $A$ and $B$ everywhere. 
First, we need to move away from the purely deterministic protocol to \emph{quasi-deterministic},
and allow for ``intermediate'' steps, where we require only the marginal probabilities to be deterministic.
In other terms, in \cref{protocol}, we check only one of the bits ($i$ or $j$) of the outcome $ij$, to be correct, where the other outcome can be arbitrary.
This analysis -- which we conduct rigorously below -- ensures, that the mapping of channels to subchannels is a homomorphism, such that we can reduce the problem to the two-dimensional subchannels in the first step.

We consider the inputs $\iS_b^{2j}\iS_a^{2i+1}$ from $\mathcal{X}_a$ for $i,j\in\{0,1\}$, for which the condition on passing \cref{protocol} reads $a_\mathcal{M}(\iS_b^{2j}\iS_a^{2i+1}) \subseteq \{0j,1j\}$, which, in turn, implies that
\begin{align}
\begin{split}
1=&\ \tr[(M_{0j}+M_{1j})\Lambda_a^{2i+1}\circ \Lambda^{2j}_b(\psi_{00})] =  \tr[(M_{0j}+M_{1j})\Lambda_a(\psi_{ij})] \\
=&\ \tr[(\psi_0\otimes\psi_j + \psi_1\otimes\psi_j)\Lambda_a(\psi_{ij}) ] = \tr[\openone\otimes\psi_j\Lambda_a(\psi_{ij})]=\tr[\psi_j\tr_A[\Lambda_a(\psi_{ij})]].
\end{split}
\end{align}
Since $\Lambda_a(\psi_{ij})$ is normalized, we can conclude that  
\begin{align}\label{app:eq:res_S_2_basis_cond}
 \tr_A[\Lambda_{a}(\psi_i\otimes\psi_j)] =\psi_j\;\text{ for all } i,j\in \{0,1\}.
\end{align}
At this point, we need the concept of subchannels, as given by \cref{def:subchannels}.
From \cref{app:eq:res_S_2_basis_cond} and \cref{lemma:marginal_homomorphism} we can infer that the mapping of channels to subchannels, taken with respect to the basis states $\psi_j$ of the subsystem $B$, is a homomorphism. 
In particular, we have 
\begin{equation}\label{app:eq:res_S_2_subchannel_a}
\Lambda_{a}(\rho\otimes \psi_j) = \Lambda_{a}\vert_A^{j}(\rho)\otimes\psi_j,
\end{equation}
for all $\rho\in \DM(\HH_A)$, $j\in\{0,1\}$, and again from \cref{lemma:marginal_homomorphism}, the above equation hold for any ``power'' of $\Lambda_a$, i.e., composition of the channel $\Lambda_a$ with itself. 
This allows us to study the action of the channel $\Lambda_a$ on the subsystem $A$, as long as the subsystem $B$ remains in the state $\psi_j$.

In particular, for fixed $j\in\{0,1\}$, we reduce the problem to the certification of the reduced single-qubit models $(\psi_0,\{\Lambda_a\vert_A^{j}\},\{M_0,M_1\})$, with $M_i = \psi_i$, for $i\in\{0,1\}$. 
Indeed, the input sequences $\iS_b^{2j}\iS_a^{2i}$ for $i\in\{0,1,2\}$ correspond to the sequences $\{\epsilon,\iS^2,\iS^4\}$ for certification of the single-qubit quantum model $\mathcal{S}_1$ in \cref{res:S_1}.
The proof of \cref{res:S_1} can then be repeated for each of the subchannels $\Lambda_a\vert_A^{j}$, where all the elements of the model are in a tensor-product with the state $\psi_j$ of the subsystem $B$ as in \cref{app:eq:res_S_2_subchannel_a}. 
This allows us to conclude that for each $j\in\{0,1\}$ the model $(\psi_0,\{\Lambda_a\vert_A^{\psi_j}\},\{M_0,M_1\})$ is equivalent to $\mathcal{S}_1$.

All the previous steps can be repeated for $\Lambda_b$ upon exchanging the roles of $A$ and $B$, which is possible because all the input sequences are symmetric with respect to the exchange of the subsystems, and due to \cref{app:eq:proof_S_2_exchange_AB}. 
We conclude that the subchannels of both $\Lambda_a,\Lambda_b$ are unitary, i.e., there exist unitary operators $U_{a,j}$ and $U_{b,i}$, such that 
\begin{equation}
    \Lambda_a\vert_A^{\psi_j}(\argdot) = U_{a,j}(\argdot)U_{a,j}^\dagger,\quad \Lambda_b\vert_B^{\psi_i}(\argdot) = U_{b,i}(\argdot)U_{b,i}^\dagger,
\end{equation}
for $i,j\in\{0,1\}$.
Moreover, there exist unitaries $V_j\in \U(\HH_A)$, for $j\in\{0,1\}$ and $W_i\in\U(\HH_B)$, for $i\in\{0,1\}$, such that
\begin{align}\label{app:eq:certified_subchannels}
\begin{split}
    \psi_0 = V_j\kb{+}{+}V_j^\dagger,& \qquad U_{a,j} \eutp V_j S V_j^\dagger,\quad \forall j\in \{0,1\},\\ 
    \psi_0  = W_i \kb{+}{+}W_i^\dagger,& \qquad U_{b,i} \eutp W_i S W_i^\dagger,\quad \forall i \in \{0,1\}. 
    \end{split}
\end{align}

From \cref{obs:S_1_gauges}, we additionally know that $V_0^\dagger V_1 \eutp X^r$ and $W_0^\dagger W_1 \eutp X^s$, for some $r,s\in [0,2]$.

The unitarity of the subchannels is not sufficient for proving unitarity of the global channel. 
Channel $\Lambda_{a}$, for example, could act decoherently on the subsystem $B$, e.g., we could have
$\Lambda_a(\argdot)=\sum_{j\in\{0,1\}} (\psi_j\otimes U_{a,j})(\argdot)(\psi_j\otimes U_{a,j}^\dagger)$.
To exclude such a possibility, we need to know how $\Lambda_a$ acts on a state, in which the subsystem $B$ is in a coherent superposition of the basis states $\psi_j$. 
If there is any decoherence caused by $\Lambda_a$ on $B$, the resulting state would be mixed.
This intuition is made rigorous by  \cref{lemma:unitarity_from_subchannels}. 
In order to prove the unitarity of $\Lambda_{a}$ it suffices to check the purity of the state 
\begin{equation}\label{app:eq:S_2_proof_ab_psi00}
    \Lambda_{a}\circ\Lambda_{b}(\psi_{00})=\Lambda_{a}(\psi_0\otimes U_{b,0}\psi_0 U_{b,0}^\dagger),
\end{equation} 
as the state on subsystem $B$ satisfies the coherence requirement
$\abs{\bra{\psi_0} U_{b,0}^\dagger\ket{\psi_0}} = \abs{\bra{+}S^\dagger \ket{+}}=\frac{1}{\sqrt{2}}>0.$
Since each $\Lambda_{a}$ and $\Lambda_{b}$ maps an \ac{ONB} to an \ac{ONB}, neither of them can increase the purity of any quantum state, according to \cref{lemma:ONB_channel_purity}. 
At the same time, $\Lambda_{a}\circ\Lambda_{b}\circ\Lambda_{a}\circ\Lambda_{b}(\psi_{00})$ is a pure state, which follows from a deterministic outcome of the input sequence $(\iS_b\iS_a)^2$.
Thus, we conclude that the state in \cref{app:eq:S_2_proof_ab_psi00} must also be pure, and by virtue of 
 \Cref{lemma:unitarity_from_subchannels}, the channel $\Lambda_a$ must be unitary.
From the same lemma, we also know that if $U_a$ is the corresponding unitary operator, it can be represented as
\begin{align}
    \label{eq:preliminary_S_A}U_{a} &= V_0 S V_0^\dagger \otimes\psi_0 + \e^{\ii\pi\alpha}V_1 S V_1^\dagger \otimes\psi_1.
\end{align}
for suitable $\alpha\in \RR$. 
By switching the roles of $A$ and $B$ in the above argumentation, we find that
\begin{align}\label{app:eq:S_2_Ub_subchannels}
    U_{b} &= \psi_0\otimes W_0 SW_0^\dagger + \psi_1\otimes \e^{\ii\pi\beta} W_1 S W_1^\dagger,
\end{align}
for some $\beta\in\RR$, where $U_b$ is the unitary operator of the channel $\Lambda_b$.
Let $\Theta \coloneqq V_{0}^\dagger\otimes W_0^\dagger$, to which we refer as a preliminary gauge unitary.
Since we know that $V_0^\dagger V_1 \eutp X^r$ and $W_0^\dagger W_1\eutp X ^s$ for some $r,s\in[0,2]$, we find that
\begin{align}\begin{split}\label{app:eq:S_2_theta_UaUb}
    \Theta U_{a}\Theta^\dagger &= S \otimes \kb{+}{+} + \e^{\ii\pi\alpha}X^r S X^{-r}\otimes\kb{-}{-},\\
    \Theta U_{b}\Theta^\dagger  & = \kb{+}{+}\otimes S+\e^{\ii\pi\beta}\kb{-}{-}\otimes X^sS X^{-s},
\end{split}\end{align}
where we used the relations from \cref{app:eq:certified_subchannels}.
Now we modify the preliminary gauge by additionally applying $C_\iH X^{-s} = \kb{+}{+}\otimes\1+\kb{-}{-}\otimes X^{-s}$, i.e., we set $\Theta \coloneqq C_\iH X^{-s}(V_0\otimes W_0)^\dagger$.
With the modified gauge we achieve
\begin{align}\begin{split}\label{app:eq:S_2_proof_semifinal}
    \Theta  U_{a} \Theta^\dagger&= S\otimes\kb{+}{+} + \e^{\ii\pi\alpha} X^{t}SX^{t}\otimes\kb{-}{-},\\
    \Theta U_{b}\Theta^\dagger  &= X^{\beta} \otimes S,
\end{split}\end{align}
where we set $t\coloneqq r-s$. 
At this point, we have exhausted our possibilities to structurally simplify the problem by modifying the gauge freedom. 
In order to show that the remaining undetermined parameters vanish, namely that $t=\alpha=\beta=0$, we use the deterministic output of the input sequences in which we transition through states where neither of the two subsystems $A$ or $B$ is in one of the reference basis states.
Specifically, we consider the sequence $(\iS_a\iS_b)^2$, for which the deterministic outcome $a_\mathcal{M}\big((\iS_a\iS_b)^2\big)=\{11\}$ dictates that
\begin{equation}
\begin{tikzcd}[row sep = 4em, column sep = 4em]
    \ket{\psi_{00}}\arrow{r}{U_{b}U_{a}}  &\ket{\phi}\arrow{r}{U_{b}U_{a}} 
    &\ket{\psi_{11}},
\end{tikzcd}\end{equation}
for some appropriate $\ket{\phi}$.
From the above diagram, we find $U_{b} U_{a}\ket{\psi_{00}}\eutp (U_{b} U_{a})^\dagger \ket{\psi_{11}}$ must hold. 
On the other hand, the product $U_{b}U_{a}$, to which the gauge $\Theta$ is applied is given by
\begin{equation}
\Theta U_{b} U_{a} \Theta^\dagger =
X^\beta S\otimes S\kb{+}{+}+ \e^{\ii\pi\alpha}X^{\beta + t}S X^{-t}\otimes S\kb{-}{-}.
\end{equation}
Since $\Theta^\dagger\ket{++}\eutp \ket{\psi_{00}}$ and $\Theta^\dagger\ket{--}\eutp \ket{\psi_{11}}$, we must have
\begin{equation}\label{app:eq:S_2_proof_UbUa_eq}
\Theta U_{b} U_{a} \Theta^\dagger\ket{\mathrm{++}} \eutp \Theta(U_{b} U_{a})^\dagger\Theta^\dagger \ket{\mathrm{--}}.
\end{equation}
The left-hand side of \cref{app:eq:S_2_proof_UbUa_eq} can be easily found to be $\Theta U_{b} U_{a} \Theta^\dagger\ket{\small{++}} \eutp X^{\beta}\ket{+_y}\otimes \ket{+_y}$, and the right-hand side is 
\begin{align}
\begin{split}
\Theta(U_{b} U_{a})^\dagger\Theta^\dagger \ket{\small{--}}&\eutp  \left( S^\dagger X^{-\beta}\otimes\kb{+}{+}S^\dagger + \e^{-\ii\pi\alpha} X^t S^\dagger X^{-(\beta+t)}\otimes \kb{-}{-} S^\dagger\right) \ket{\small{--}}\\
&= \frac{1}{\sqrt{2}}\e^{-\ii\pi\beta}\left(\e^{\frac{\ii\pi}{4}}\ket{+_y}\otimes\ket{+} + \e^{-\ii\pi(\alpha+t+\frac{1}{4})} X^t \ket{+_y}\otimes \ket{-}\right).
\end{split}
\end{align}
Equating the both sides of \cref{app:eq:S_2_proof_UbUa_eq}, and multiplying the equation from the left first by $\openone\otimes\bra{+}$ and then again by $\openone\otimes\bra{-}$, we obtain that 
\begin{equation}
    X^{\beta}\ket{+_y} \eutp X^t\ket{+_y}\eutp\ket{+_y}.
\end{equation}
Clearly, this is satisfied only if $X^\beta =X^t=\openone$. 
Taking this into account, the condition \cref{app:eq:S_2_proof_UbUa_eq} simplifies as
\begin{equation}
    \ket{+_y}\otimes\ket{+_y} \eutp \ket{+_y}\otimes\frac{1}{\sqrt{2}}\left(\ket{+}+\e^{-\ii\pi(\alpha+\frac{1}{2})}\ket{-}\right),
\end{equation}
which only holds if $\alpha=0$.
By inserting $\alpha=\beta=t=0$ in \cref{app:eq:S_2_proof_semifinal} we get that
$\Theta U_{a}\Theta^\dagger = S\otimes \openone,\;\Theta U_{b}\Theta^\dagger = \openone\otimes S$,
which concludes the proof.
\end{proof}

The generalization of the above result to $n$-qubits is presented in \cref{app:n_qubits}. 
Next, we prove a result on the certification of a $2$-qubit model with gates generating the Clifford group. 
This result comes after, and partially builds on, \cref{res:C_hX}, which we present in the main text.

\begin{theorem}\label{res:Clifford_2}
The quantum model $\mathcal{C}l_2=\mathcal{S}_2+\{H^{(1)}\} + \{C_\iH X\}$ is certified by the following input set: 
$\mathcal{X}_{Cl}=\mathcal{X}_{2}\cup\mathcal{X}_\iCX\cup \mathcal{X}_\iH $, where $\mathcal{X}_2$ is defined in \cref{app:eq:2_qubits_inputs}, and 
\begin{align}\label{eq:H_inputs}
    \mathcal{X}_{\iCX}& \coloneqq\bigcup_{i,j\in\{0,1\}}\Set{\iS_a^{2i}\iS_b^{2j}\iCX }\cup\Set{\iS_b\iCX\iS_b,\iS_a\iCX\iS_a,\iS_a\iS_b^2\iCX\iS_a},\\
    \mathcal{X}_\iH & \coloneqq \Set*{\iS_b^{2i}\vec{x} \given \vec{x}\in\{\iH,\iS_a^2\iH,\iH\iH,\iS_a\iH\iS_a,\iS_a^3\iH\iS_a,\iH\iS_a\iH\},\,i\in\{0,1\}}\cup\Set*{\iS_a\iS_b\iH^{j}\iS_a\iS_b\given j\in\{1,2\}}.
\end{align}
\end{theorem}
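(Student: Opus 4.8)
The plan is to certify the three sub-instruction sets one after another. The words in $\mathcal{X}_2$ carry only the labels $\iS_a,\iS_b$, so \cref{res:S_2} applies verbatim and certifies the $S$-part of the model; after fixing the corresponding gauge I may assume the standard form $\HH=\HH_A\otimes\HH_B$, $\rho=\pure{++}$, $M_{ij}=\psi_i\otimes\psi_j$ with $\psi_0=\pure{+},\psi_1=\pure{-}$, and $\Lambda_{\iS_a}=S\otimes\1$, $\Lambda_{\iS_b}=\1\otimes S$ as channels. The stabilizer of these data under (anti-)unitary gauge is trivial, so no gauge freedom is left for the added gates. The words in $\mathcal{X}_{\iCX}$ carry only $\iS_a,\iS_b,\iCX$, so \cref{res:C_hX} certifies $\Lambda_{\iCX}(\argdot)=C_\iH X(\argdot)C_\iH X^\dagger$. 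What remains is to show that the words in $\mathcal{X}_\iH$ force $\Lambda_\iH(\argdot)=(H\otimes\1)(\argdot)(H\otimes\1)^\dagger$, and for this I would imitate the subchannel strategy of \cref{res:S_2}.

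Every word of $\mathcal{X}_\iH$ touches qubit $B$ only through the prefix $\iS_b^{2i}$, which leaves $B$ in the basis state $\psi_i$ while the remaining gates act on $A$; reading off the deterministic $B$-bit of the responses yields $\tr_A[\Lambda_\iH(\psi_i\otimes\psi_j)]=\psi_j$ for all $i,j$. By \cref{lemma:marginal_homomorphism} the map $\Lambda_\iH\mapsto\Lambda_\iH\vert_A^{\psi_j}$ is then a homomorphism with $\Lambda_\iH(\sigma\otimes\psi_j)=\Lambda_\iH\vert_A^{\psi_j}(\sigma)\otimes\psi_j$, so the task reduces to certifying each single-qubit subchannel $\Gamma_j\coloneqq\Lambda_\iH\vert_A^{\psi_j}$ as a Hadamard, with the already certified $S$-subchannel at hand. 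Global unitarity and the correct inter-block phase would then follow exactly as at the end of \cref{res:S_2}: the coherence words $\iS_a\iS_b\iH^j\iS_a\iS_b$ put $B$ into the state $\ket{+_y}$, which is coherent in the reference basis, before $\iH$ acts, and their deterministic outputs combined with \cref{lemma:ONB_channel_purity} and \cref{lemma:unitarity_from_subchannels} upgrade the block-diagonal $\Lambda_\iH$ to a genuine unitary $U_H\otimes\1$.

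The difficult part, and the one I expect to dominate, is the single-qubit certification of $\Gamma_j$, which is genuinely harder than for the $S$-gate: since $H$ sends the measurement basis $\{\ket{+},\ket{-}\}$ to the complementary basis $\{\ket{0},\ket{1}\}$, a bare $\iH$ gives uniformly random outcomes and cannot be pinned down by direct deterministic responses as in \cref{res:S_1}. Instead I would use the $S$-conjugated composite words of $\mathcal{X}_\iH$, which do return deterministic outcomes: $\iS_a\iH\iS_a$ and $\iS_a^3\iH\iS_a$ give $\Gamma_j(\pure{+_y})=\pure{-_y}$ and $\Gamma_j(\pure{-_y})=\pure{+_y}$, so $\Gamma_j$ maps the $Y$-eigenbasis to itself; $\iH\iH$ gives $\Gamma_j^2(\pure{+})=\pure{+}$, whence \cref{lemma:ONB_channel_purity} forces the intermediate state $\Gamma_j(\pure{+})$ to be pure and, being coherent in the $Y$-eigenbasis, makes $\Gamma_j$ unitary by \cref{corr:unitarity}. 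Writing $\Gamma_j(\argdot)=U_j(\argdot)U_j^\dagger$, the $Y$-swap makes $U_j$ antidiagonal in the $Y$-eigenbasis, and $\iH\iS_a\iH$ fixes the relative phase of its two entries up to a sign.

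That remaining sign is exactly where I expect the main obstacle to sit: words measured only in $\{\ket{+},\ket{-}\}$ and probing $\iH$ only on $Y$-eigenstates cannot by themselves separate $H$ from its conjugate $ZHZ=S^2HS^2$, since the two agree on the $Y$-axis and differ only in the $X\leftrightarrow Z$ action, which a $\pm$-measurement of $A$ alone cannot resolve (for any word acting on $A$ as a unitary $g$, $g\ket{+}\in\{\ket{+},\ket{-}\}$ forces $ZgZ\ket{+}$ to the same measurement outcome). To break this residual symmetry I would route the otherwise invisible relative sign of $\Gamma_j(\pure{+})$ in the $\{\ket{+},\ket{-}\}$ basis onto qubit $B$ through the already certified gate $C_\iH X$, whose control is not invariant under the local $Z$-conjugation, and read it out deterministically. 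Once this fixes $U_j\eutp H$ for both $j$, one obtains $\Lambda_\iH(\argdot)=(H\otimes\1)(\argdot)(H\otimes\1)^\dagger$, and together with the certified $\mathcal{S}_2$ and $C_\iH X$ this establishes equivalence of the implemented model to $\mathcal{C}l_2$.
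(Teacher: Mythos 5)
Your first three steps track the paper's proof: the reduction of $\Lambda_\iH$ to subchannels via the marginal conditions and \cref{lemma:marginal_homomorphism}, the certification of the subchannels on the $Y$-eigenbasis from the words $\iS_a\iH\iS_a$, $\iS_a^3\iH\iS_a$, $\iH\iH$, $\iH\iS_a\iH$, and the unitarity/phase-fixing step using $\iS_a\iS_b\iH^{j}\iS_a\iS_b$ together with \cref{lemma:ONB_channel_purity,lemma:unitarity_from_subchannels} are all exactly what the paper does (the paper outsources the single-qubit part to Theorem~5 of Ref.~\cite{noller2024classical}, but your sketch of it is consistent). The gap is in your final step, and it is not a technical detail but a wrong idea.

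Your premise that ``the stabilizer of these data under (anti-)unitary gauge is trivial'' is false. The anti-unitary $V\coloneqq(X\otimes X)\circ K$, with $K$ complex conjugation in the computational basis, stabilizes everything certified up to that point: $V\ket{++}=\ket{++}$, $VM_{ij}V^{-1}=M_{ij}$, $V(S\otimes\1)V^{-1}=XS^\dagger X\otimes\1\eutp S\otimes\1$ (likewise for $\1\otimes S$), and $VC_\iH XV^{-1}=(X\otimes X)C_\iH X(X\otimes X)=C_\iH X$ exactly. Consequently, the model in which $\Lambda_\iH$ implements $XHX\otimes\1\eutp ZHZ\otimes\1$ is the image under $V$ of the model implementing $H\otimes\1$, so the two have identical output maps on \emph{every} word --- in particular on any word involving $\iCX$. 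Your plan to ``route the sign onto qubit $B$ through $C_\iH X$'' therefore cannot work: the gate $C_\iH X$ is invariant under the relevant residual symmetry. (Your local-$Z$-conjugation picture is what misleads you here: $Z\otimes\1$ by itself is indeed not a symmetry of $C_\iH X$, but it is also not a stabilizer of the already-certified state, measurement, and $S$-gates; the actual residual symmetry is the anti-unitary $V$.) As a secondary point, such a test would require words combining $\iH$ with $\iCX$, which do not occur in $\mathcal{X}_{Cl}$, so it could not prove the theorem as stated in any case.

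The correct resolution --- and the one in the paper --- is that the $H$ versus $XHX$ ambiguity is a genuine gauge freedom that must be absorbed, not broken: if the certified form is $U_\iH=XHX\otimes\1$, one appends $X\otimes X$ to the gauge (this leaves the state, measurement, and $C_\iH X$ invariant but exchanges $S\leftrightarrow S^\dagger$ on both qubits) and then composes with complex conjugation, i.e., invokes the anti-unitary gauge explicitly permitted by the definition of equivalence, which restores $S$ and maps $XHX$ to $H$. Without this step your argument cannot close, and with it no additional instructions are needed.
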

Note that since we have a Hadamard gate only on the first qubit in the target model, we omit a subscript for the corresponding instruction ``$\iH$''.
\begin{proof}
Let $\mathcal{N}\coloneqq (\rho,\{\Lambda_{\iS_a},\Lambda_{\iS_b},\Lambda_{\iCX},\Lambda_{\iH}\},\{M_{ij}\}_{ij})$ be a 4-dimensional quantum model defined over $\HH\cong\CC^4$ satisfying $a_{\mathcal{N}}(\vec{x}) \subseteq a_{\mathcal{C}l_2}(\vec{x})$ for all $\vec{x}\in\mathcal{X}_{Cl}.$
We begin with the observation that the inputs $\mathcal{X}_{2}\cup\mathcal{X}_{\iCX}$ certify $\mathcal{S}_2+\{C_\iH X\}$ due to \cref{res:C_hX}. We then may directly assume that $\mathcal{N}$ is of the form $\mathcal{N}=\mathcal{S}_2+\{C_\iH X\}+\{\Lambda_\iH\}$.
In particular, when characterizing $\Lambda_{\iH}$, we assume that the unitary gauge freedom in certification of the submodel $\mathcal{S}_2+\{C_\iH X\}$ is already exploited, i.e., we have two Hilbert spaces $\HH_A\cong \CC^2$ and $\HH_B\cong\CC^2$, and $M_{ij}$ is the tensor product of the two $X$-basis measurement on both qubits and $\rho = \kb{\mathrm{++}}{\mathrm{++}}$.

Since the eigenbasis of $H^{(1)}$ is not contained in the attainable states of the model $\mathcal{S}_2+\{C_\iH X\}$, we cannot simply apply \cref{cor:adding_inputs} to include the Hadamard gate to this model. 
Instead, we proceed the same way as in the proof of the $\mathcal{S}_2$ model, with many steps being completely analogous. 
First of all, we use the instructions $\{\iS_b^{2i}\iS_a^{2j}\iH\}_{i,j\in\{0,1\}}\subseteq\mathcal{X}_{\iH}$ to prove that $\Lambda_{\iH}(\rho_A\otimes\kb{\pm}{\pm}) = \Lambda_\iH\vert_A^{\scaleobj{.8}{\ket{\pm}}}\otimes\kb{\pm}{\pm}$ holds. 
Then we can identify the subchannels $\Lambda_\iH\vert_A^{\scaleobj{.8}{\ket{\pm}}}$ using the instructions $\iS_b^{2i}\{\iH\iH,\iS_a\iH\iS_a,\iS_a^3\iH\iS_a,\iH\iS_a\iH\}$ building on the results of Ref.~\cite{noller2025classical}. Before we continue with the proof, we point out two subtleties, which arise when applying the results from Ref.~\cite{noller2025classical}. 
First, due to having fixed a unitary gauge (and not antiunitary) from the certification of the $\mathcal{S}_2$ model, we find that the unitaries associated to $\Lambda_\iH\vert_A^{\scaleobj{.8}{\ket{\pm}}}$ can so far be constrained to be proportional to either of the two candidates $\{H,XHX\}$.
Moreover, in Ref.~\cite{noller2025classical} we determined inputs for the certification of the model $(\ket{+},\{S,S^\dagger,H\},\{\kb{+}{+},\kb{-}{-}\})$. Similar to \cref{res:S_1}, we do not have a dedicated instruction for the $S^\dagger$-gate here, and instead achieve it by implementing $S^3$. 
We do not repeat the construction from Ref.~\cite{noller2025classical} here, and refer the reader to the proof of Theorem 5 therein.

We proceed by proving unitarity of the full channel $\Lambda_\iH$, for which we use the instruction $\vec{x}=\iS_a\iS_b\iH\iS_a\iS_b$. It shows that $\Lambda_{\iH}$ preserves the purity of the state $\ket{+_y}\ket{+_y}$, in which the subsystem $B$ is in a coherent state with respect to the basis $\{\ket{+},\ket{-}\}$, and thus unitarity follows again by applying \cref{lemma:unitarity_from_subchannels}.
Let $U_\iH$ be the corresponding unitary operator.
Therefore, in analogy to \cref{eq:preliminary_S_A} we have  
\begin{equation}\label{eq:preliminary_H}
    U_{\iH} \eutp X^{\alpha_+} H X^{\alpha_+}\otimes \kb{+}{+} + \e^{\ii\pi\beta} X^{\alpha_-}HX^{\alpha_-}\otimes \kb{-}{-},
\end{equation}
for some $\alpha_+,\alpha_-\in \{0,1\}$ and $\beta\in [0,2]$ 
where we took into account the ambiguity of choosing $\{H,XHX\}$ discussed above. 
The instructions $\iS_a\iS_b\iH^j\iS_a\iS_b$ for $j=1,2$, with the deterministic outcomes $01$ and $11$, respectively, show that
\begin{equation}\label{eq:H_gauge_tests}
    U_{\iH}\ket{+_y}\ket{+_y}\eutp\ket{-_y}\ket{+_y},\quad U_{\iH}^2\ket{+_y}\ket{+_y}\eutp\ket{+_y}\ket{+_y}.
\end{equation}
We combine these conditions with \cref{eq:preliminary_H}, and use a relation $X^{\alpha}HX^{\alpha}\ket{+_y}=\e^{\ii\frac{\pi}{4}}\ii^\alpha\ket{-_y}$ for $\alpha\in\{0,1\}$, to conclude that the following needs to hold,
\begin{equation}
    \ii^{\alpha_+}=\ii^{\alpha_-}\e^{\ii\pi\beta},\text{ and } \e^{2\pi\ii\beta}=1,
\end{equation}
from where we 
find $\alpha_+=\alpha_-\eqqcolon \alpha$, and $\beta=0$.
This leaves us with the following form of $U_\iH$,
\begin{equation}
    U_{\iH}=X^{\alpha}HX^{\alpha}\otimes\openone,
\end{equation}
for $\alpha\in\{0,1\}$.
Just like in the single-qubit case, we can eliminate the parameter $\alpha\in\{0,1\}$ by absorbing $X\otimes X$ into the gauge when $\alpha=1$. 
This, of course, leads to interchanging $S\leftrightarrow S^\dagger$ for both subsystems $A$ and $B$ and, therefore, can be counteracted by applying the complex conjugation on the entire system, or in other words, by allowing for antiunitary gauges. 
Note that adding $X\otimes X$ to the gauge leaves $C_\iH X$-gate invariant, and therefore, we have proven that $\mathcal{N}$ is equivalent to the target model $\mathcal{C}l_2$.
\end{proof}

% ------------------------------------------------------------------
\section{The $n$-qubit results}
\label{app:n_qubits}
Generalizing \cref{res:S_2} to $n$ qubits is not entirely straightforward.
In the bipartite case, after we have identified the subchannels of one subsystem, we had to reconcile only a few relative phases that could affect the other subsystem. 
For $n$-qubit case, there is an exponential in $n$ number of phases that we need to fix.
Adapting the steps of the two-qubit proof after \cref{app:eq:certified_subchannels}, therefore, requires additional types of input sequences, which do not appear in the $n=2$ case. 

For simplicity of the presentation, we replace the target model $\mathcal{S}_n$ by an equivalent model $(\ket{0}^{\otimes n},\{S_y^{(1)},\dots,S_y^{(n)}\},\{\kb{J}{J}\}_{J\in\{0,1\}^n})$ where $S_y \coloneqq \sqrt{Y}$. 
The equivalence can be realized via conjugation with the unitary $\kb{+_y}{0}+\kb{-_y}{1}$.
Note, that due to this change of the reference model, the gauge freedom, which is left after identifying the single-qubit subchannels, as given by \cref{obs:S_1_gauges}, is no longer $X^r$, but $Z^r\coloneqq \kb{0}{0}+\e^{\ii\pi r}\kb{1}{1}$ for $r\in[0,2]$.

\begin{theorem}\label{res:S_n}
For $n\geq 3$ let $\mathcal{X}_n \coloneqq \bigcup_{k=1}^n\mathcal{X}^{\mathrm{loc}}_{n,k}\cup \{\vec{x}_{n,k},\vec{y}_{n,k}\} \cup \bigcup_{m=1}^{n-1}\mathcal{X}^{\mathcal{G}}_{n,m}$, where 
\begin{equation}\label{eq:n-qubits_inputs}
    \begin{split}
      \mathcal{X}^{\mathrm{loc}}_{n,k}=&\Set{\iS_1^{j_1}\dots\cancel{\iS_k^{j_k}}\dots\iS_n^{j_n}\iS_k^{j_k}\given j_k\in[5], j_{i}\in\{0,2\} \text{ for } i\neq k},\\
      \vec{x}_{n,k} =& (\iS_{k-1}\dots\iS_1\iS_n\dots\iS_{k})^2,\\
       \vec{y}_{n,k}=&(\iS_n\dots\iS_1\iS_k)^2,\\
\mathcal{X}_{n,m}^{\mathcal{G}}=&\Set{\iS_n\dots\iS_m\iS_l\given l = m+1,\dots,n}.
    \end{split}
\end{equation}
Then the input set $\mathcal{X}_n$ certifies the $n$-qubit quantum model 
$\left(\ket{0}^{\otimes N},\{S_y^{(k)}\}_{k=1}^n,\{\kb{J}{J}\}_{J\in\{0,1\}^n}\right)$. 
Here $\cancel{\iS_k}$ indicates that the specific element is omitted in the sequence.
\end{theorem}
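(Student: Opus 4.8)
The plan is to lift the four-stage architecture of the proof of \cref{res:S_2} to $n$ qubits, with essentially all of the new difficulty concentrated in the final phase-fixing stage. First I would reconstruct the tensor-product structure: restricting $\mathcal{X}^{\mathrm{loc}}_{n,k}$ to the sub-family with all exponents even feeds $2^n$ mutually distinguishable deterministic outcomes into the $2^n$-dimensional space, and exactly as in \cref{res:S_2} this forces the relevant attainable states and all measurement effects to be rank-$1$ projectors forming an orthonormal basis, which a gauge unitary $U_\otimes$ maps to the computational-basis product states $\ket{J}$, $J\in\{0,1\}^n$. From here I work in the convention that the measurement is the product computational-basis measurement.

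Second, for each qubit $k$ I would establish the subchannel homomorphism. The odd exponents $\iS_k^{2i+1}$ inside $\mathcal{X}^{\mathrm{loc}}_{n,k}$, read by checking only the bits $\neq k$ of the outcome, yield an $n$-qubit analogue of the deterministic-marginal condition \cref{app:eq:res_S_2_basis_cond}; via \cref{lemma:marginal_homomorphism} this shows $\Lambda_k$ leaves the remaining $n-1$ qubits invariant whenever they sit in a computational-basis configuration $J\in\{0,1\}^{n-1}$, producing a single-qubit subchannel $\Lambda_k\vert^J$. The even exponents $j_k\in\{0,2,4\}$ then supply precisely the instructions $\{\epsilon,\iS^2,\iS^4\}$ of \cref{res:S_1}, certifying each $\Lambda_k\vert^J$ to be unitary and $S_y$-like up to a residual single-qubit gauge, now of the $Z^r$ form dictated by \cref{obs:S_1_gauges}.

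Third, I would upgrade subchannel unitarity to unitarity of the full channels $\Lambda_k$. The sequences $\vec{x}_{n,k}$ and $\vec{y}_{n,k}$ route the register through states in which the complement of qubit $k$ is fully coherent in the computational basis; their deterministic outcomes certify that $\Lambda_k$ preserves purity on such a state, so \cref{lemma:unitarity_from_subchannels} forces the associated unitary into the block-diagonal form $U_k\eutp\sum_{J\in\{0,1\}^{n-1}}\e^{\ii\pi\gamma_{k,J}}\,\big(Z^{r_{k,J}}S_yZ^{-r_{k,J}}\big)^{(k)}\otimes\kb{J}{J}$, exactly mirroring \cref{eq:preliminary_S_A} but with exponentially many blocks rather than two.

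Finally — and this is where I expect the real obstacle — I must show that all $\mathcal{O}(n2^n)$ parameters $\{\gamma_{k,J},r_{k,J}\}$ collapse to the target values. A diagonal gauge $\diag(\e^{\ii\theta_L})_{L\in\{0,1\}^n}$ absorbs only a spanning-tree's worth of relative phases, so the remaining constraints must be \emph{propagated} combinatorially rather than solved head-on. I would read the commutation relations forced by $\vec{x}_{n,k}$ and $\vec{y}_{n,k}$ — that the implemented gates on distinct qubits compose consistently on coherent states — as local edge relations that pin the $r_{k,J}$ to $0$ and identify neighbouring $\gamma$'s, just as $(\iS_a\iS_b)^2$ fixed $\alpha=\beta=t=0$ in the two-qubit argument. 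The family $\mathcal{X}^{\mathcal{G}}_{n,m}$ plays the genuinely new role: its sequences $\iS_n\dots\iS_m\iS_l$ over $m=1,\dots,n-1$ and $l>m$ furnish a bipartite ``bridge'' across each cut, making the induced coherence graph on the $2^{n-1}$ configurations connected in the sense of \cref{def:coherence_graph}. I would then run an induction over the bipartitions $m=1,\dots,n-1$: at each step the $\mathcal{G}$-sequences across the $m$-th cut tie every newly exposed phase to one already fixed, and connectivity of the coherence graph guarantees that a single reference choice is propagated to all $2^n$ configurations. The main obstacle is precisely organizing this induction so that each of the exponentially many phases is reached by some chain of constraints while keeping the instruction count at $\mathcal{O}(n2^n)$; the combinatorial design of $\mathcal{X}^{\mathcal{G}}_{n,m}$ together with graph connectivity is what makes both simultaneously achievable, after which inserting $\gamma_{k,J}=r_{k,J}=0$ yields $U_k\eutp S_y^{(k)}$ and completes the proof.
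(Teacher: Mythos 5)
Your proposal follows essentially the same four-stage route as the paper's proof: tensor-structure recovery and subchannel self-testing from $\mathcal{X}^{\mathrm{loc}}_{n,k}$, global unitarity of each $\Lambda_k$ via purity preservation and \cref{lemma:unitarity_from_subchannels} (giving exactly the block-diagonal form you write), a diagonal gauge that absorbs only a spanning tree's worth of phases, and then propagation of the remaining constraints across the cuts $m=1,\dots,n-1$ using $\mathcal{X}^{\mathcal{G}}_{n,m}$ to reach all $2^{n-1}$ configurations. The only discrepancy is bookkeeping in the final stage: in the paper the $\mathcal{G}$-sequences alone pin the conjugation parameters $\alpha^{(k)}_I$ (your $r_{k,J}$) to zero, by forcing them to equal phase differences $\Delta_{I,k}$ that are manifestly independent of the leading bits, after which the sequences $\vec{y}_{n,k}$ together with $\vec{x}_{n,1}$ fix the block phases $\beta^{(k)}_I$ (your $\gamma_{k,J}$) — your sketch blurs which input family does which job, but the mechanism you describe is the one the paper executes.
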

The different types of sequences in $\mathcal{X}_n$ achieve the following:
$\mathcal{X}_{n,k}^{\mathrm{loc}}$ reduces the problem to qubit subchannels, and then allow to reconstruct these. 
For each $k$, the string $\vec{x}_{n,k}$ is used to subsequently establish unitarity of $\Lambda_k$. 
In the two-qubit case, these are also sufficient to fix the gauge completely.
However, for $n>2$, we also require the additional inputs $\mathcal{X}_{n,m}^{\mathcal{G}}$ which are used to show that no relative gauge shift can occur when reconstructing the subchannels of $\Lambda_k$. 
Finally, the relative phases can be fixed by considering the sequences $\vec{y}_{n,k}$ for all $k\in\{1,2,\dots,n\}$.

\begin{proof}
Let $\mathcal{M}\coloneqq(\rho,\{\Lambda_{\iS_k}\}_{k=1}^n,\{M_J\}_{J\in\{0,1\}^n})$ be a $2^n$-dimensional quantum model defined over Hilbert space $\HH\cong{\CC^{2^n}}$ that passes \cref{protocol} for the input instructions $\mathcal{X}_n$. 
In the following, we set $\Lambda_{k}\coloneqq \Lambda_{\iS_k}$, and we will also denote by $U_k$ the corresponding unitary operator, after we prove the unitarity of $\Lambda_k$.
Analogously to the proof of \cref{res:S_2}, from the inputs of the type $\iS_1^{j_1}\dots\cancel{\iS^{j_k}_k}\dots\iS_n^{j_n}\iS^{j_k}_k\in \mathcal{X}^{\mathrm{loc}}_{n,k}$ for all $(j_1,j_2\dots j_n)\in\{0,2\}^n$ and the dimension constraint, we establish that $\{M_J\}$ is a rank-1 projective measurement, $\rho$ is a pure state, and $\rho=M_{00\dots 0}$.
Let $\HH_{A_k}\cong \CC^2$ for $k=1,\dots,n$ be Hilbert spaces with which we associate qubit subsystems $A_1,\dots,A_n$.
We denote $\HH_{B_k}\coloneqq \bigotimes_{j\neq k} \HH_{A_j}$ as the subsystem complementary to $A_k$.
Let us assume directly that we have fixed the preliminary gauge unitary which maps the basis of the measurement $\{M_J\}_{J\in\{0,1\}}$ to the product computational basis $\{\ket{I}\}_{I\in\{0,1\}^n}$ of $\bigotimes_{k=1}^{n}\HH_{A_k}$, so that $M_I = \kb{I}{I}$, for $I\in\{0,1\}^n$. 
It then follows that also $\rho=M_0=\kb{0}{0}^{\otimes n}$.
We consider the bipartition $A_k\vert B_k$ for each party $A_k$ for $k=1,\dots,n$ in turn.
Note that none of the steps in the proof of \cref{res:S_2} for the $2$-qubit case up until \cref{app:eq:certified_subchannels} explicitly relies on the fact that it is a qubit-qubit system.
For that reason, we state and prove \cref{lemma:orthogonality} for a general case of a qudit-qudit system.
Therefore, we can establish that the mappings to the subchannels are homomorphisms using the inputs $\mathcal{X}_{n,k}^{\mathrm{loc}}$ and then reconstruct the subchannels $\Lambda_{k}\vert_{A_k}^{I}$ with respect to the computational basis states $\ket{I}_{B_k}$ for $I\in\{0,1\}^{n-1}$ on the subsystem $B_k$. 
Moreover, the corresponding unitary operators of the subchannels take the form 
\begin{equation}
U_{k,I}\eutp Z^{\alpha^{(k)}_I}S_y Z^{-\alpha^{(k)}_I}
\end{equation}
with $\alpha^{(k)}_I\in \RR$ for all $k\in \{1,\dots,n\}$, and $I\in\{0,1\}^{n-1}$, where we assume $\alpha^{(k)}_{00\dots 0}\coloneqq 0$ for every $k$ without loss of generality.
This is analogous to the discussion after \cref{app:eq:S_2_Ub_subchannels} in the $2$-qubit proof. In particular, we find that all the Kraus operators $\{K^j\}$ of some fixed $\Lambda_k$ are also block-diagonal, in the sense that for suitable $K_j^I$ we have
\begin{equation}\label{eq:Kraus_structure}
    K_j = \sum_{I\in\{0,1\}^{n-1}}\kb{I}{I}_{B_k}\otimes (K_j^I)_{A_k}.
\end{equation}
This structure of all the Kraus operators is important to deduce unitarity in the next step, which is similar to the two-qubit case; however, it becomes already slightly more involved.
The reason is that we need to ``combine'' the certified unitary subchannels into the global unitary $U_k$ by dealing with multiple bipartitions. 
For that, we consider the input sequences $\vec{x}_{n,k}$ and focus on the case $k=1$ first.
Since it has to hold that $a_\mathcal{M}((\iS_n\iS_{n-1}\dots\iS_1)^2)=\{11\dots 1\}$, we find that 
\begin{equation}
\Lambda_1\circ\dots\circ\Lambda_n\circ\Lambda_1\circ\dots\circ\Lambda_n(\kb{0}{0}^{\otimes n})=\kb{1}{1}^{\otimes n}.
\end{equation}
Since all the subchannels of each $\Lambda_{k}$ w.r.t.~the computational basis states of $B_k$ are unitary, we know that the channels themselves map the \ac{ONB} $\{\ket{J}\}_{J\in\{0,1\}^n}$ to an \ac{ONB}. 
Then \cref{lemma:ONB_channel_purity}
asserts that none of the channels $\Lambda_{1},\Lambda_{2},\dots,\Lambda_{n}$ can increase the purity of any input state.
This means, in particular, that $\Lambda_1\circ\Lambda_2\circ\dots\circ\Lambda_n(\kb{0}{0}^{\otimes n})$ must be a pure state. 
By iteratively applying a technical \cref{lemma:coherent_state}, which uses the structure of the Kraus operators in \cref{eq:Kraus_structure}, we can iteratively prove that the state
$\Lambda_{k}\circ\Lambda_{k+1}\circ\dots\circ\Lambda_{n}(\kb{0}{0}^{\otimes n})$
is fully coherent with respect to the reference \ac{ONB} on parties $A_k,A_{k+1},\dots,A_{n}$. 
Once we prove it for $k=2$ we can apply \cref{lemma:unitarity_from_subchannels}, which then shows that the channel $\Lambda_{1}$ is unitary. 
Naturally, all the above steps can be repeated for other $k\in \{2,3,\dots,n\}$, and we obtain that all the channels $\Lambda_{2},\dots,\Lambda_{n}$ are unitary.
The important property of the sequences $\vec{x}_{n,k}$ is that their first half ends with $\iS_k$. This ensures that the channel $\Lambda_k$ is applied to a fully coherent state on $B_k$, which allows us to reiterate the above argument. 
\Cref{lemma:unitarity_from_subchannels} additionally asserts that there exist phases $\beta^{(k)}_I\in [0,2]$, such that we can write the corresponding unitaries as
\begin{equation}\label{eq:global_channel_form}
  U_{k} \eutp \sum_{I\in\{0,1\}^{n-1}} \e^{\ii\pi\beta^{(k)}_I} \kb{I}{I}_{B_k}\otimes (Z^{\alpha^{(k)}_I} S_y Z^{-\alpha^{(k)}_I})_{A_k}.  
\end{equation}
We directly set $\beta^{(k)}_{0\dots 0}=0$ for all $k$, as the global phases of the $U_k$ do not play any role.
In the following, if the tensor factors appear in the correct order $A_1,A_2,\dots,A_n$, we will drop the subscripts for better readability.
Our goal now is to modify the gauge unitary in a way which forces all the coefficients $\alpha^{(k)}_I,\beta^{(k)}_I$ to be zero. 
We begin by considering
\begin{equation}
 \Theta_1 \coloneqq \sum_{I\in\{0,1\}^{n-1}} Z^{-\alpha^{(1)}_{I}}\otimes\kb{I}{I} =  \sum_{J\in\{0,1\}^{n}} \e^{\ii\pi\gamma_{J}}\kb{J}{J},   
\end{equation}
where $\gamma_{0I}\coloneqq 0$ and $\gamma_{1I} \coloneqq -\alpha^{(1)}_{I}$ for $I\in\{0,1\}^{n-1}$.
At this point, we need to introduce an additional notation for concatenating a bit to a string of bits at some given position. 
For a bit string $I = i_1i_2\dots i_{n-1}\in\{0,1\}^{n-1}$ of length $(n-1)$, a bit $i'\in\{0,1\}$ and $k\in\{1,2,\dots,n\}$, we define $\ins{i'}{k}I\coloneqq i_1i_2\dots i_{k-1}i'i_{k}\dots i_{n-1}\in \{0,1\}^n$, i.e., a string of length $n$ with $i'$ inserted in the position $k$ of $I$.
We keep using the notation $i'I$ for $\ins{i'}{1}I$ and $Ii'$ for $\ins{i'}{n}I$.\\

First, we apply the gauge $\Theta_1$ to $U_{1}$, and see that it removes all the phases $\alpha^{(1)}_{I}$ as intended,
\begin{equation}
    \Theta_1 U_{1}\Theta_1^\dagger \eutp S_y\otimes\sum_{I\in\{0,1\}^{n-1}} \e^{\ii\pi\beta^{(1)}_I}\kb{I}{I}.
\end{equation}
Next, we calculate the action of the gauge $\Theta_1$ on $U_{k}$ for $k\in\{2,3,\dots n\}$,
\begin{align}\label{app:eq:S_n_Theta1Uk}
\begin{split}
    \Theta_1 U_{k}\Theta_1^\dagger &\eutp \sum_{I\in\{0,1\}^{n-1}} \e^{\ii\pi\beta^{(k)}_I}\kb{I}{I}_{B_k}\otimes \Bigg(\Big(\sum_{i\in\{0,1\}}\e^{\ii\pi\gamma_{\ins{i}{k}I}}\kb{i}{i}\Big) Z^{\alpha^{(k)}_I}S_yZ^{-\alpha^{(k)}_I}\Big(\sum_{j\in\{0,1\}}\e^{-\ii\pi\gamma_{\ins{j}{k}I}}\kb{j}{j}\Big)\Bigg)_{A_k}\\
    &=\sum_{I\in\{0,1\}^{n-1}} \e^{\ii\pi\beta^{(k)}_I}\kb{I}{I}_{B_k}\otimes \left(Z^{(\gamma_{\ins{1}{k}I}-\gamma_{\ins{0}{k}I})}Z^{\alpha^{(k)}_I}S_yZ^{-\alpha^{(k)}_I}Z^{-(\gamma_{\ins{1}{k}I}-\gamma_{\ins{0}{k}I})}\right)_{A_k}\\
    &=\sum_{I\in\{0,1\}^{n-1}} \e^{\ii\pi\beta^{(k)}}\kb{I}{I}_{B_k}\otimes \left(Z^{\tilde\alpha^{(k)}_I}S_yZ^{-\tilde\alpha^{(k)}_I}\right)_{A_k},
\end{split}
\end{align}
where $\tilde\alpha^{(k)}_I \coloneqq \alpha^{(k)}_I+\gamma_{\ins{1}{k}I}-\gamma_{\ins{0}{k}I}\in \RR$. 
By redefining $\alpha^{(k)}_I\coloneqq \tilde\alpha^{(k)}_I$ for $k\in\{2,3,\dots,n\}$, we therefore have achieved to eliminate all $\alpha^{(1)}_I$. 
Next consider 
\begin{equation}
\Theta_2 \coloneqq \sum_{I\in\{0,1\}^{n-2}} \openone\otimes Z^{-\alpha^{(2)}_{0I}}\otimes\kb{I}{I}.
\end{equation}
Since $U_1$ commutes with $\Theta_2$, conjugation with $\Theta_2$ leaves the former invariant.
By acting with the gauge $\Theta_2$ on $U_2$ we obtain the following
\begin{equation}
\Theta_{2} U_{2}\Theta_2^\dagger \eutp \sum_{I\in\{0,1\}^{n-2}}\left(\kb{0}{0}\otimes \e^{\ii\pi\beta^{(2)}_{0I}} S_y+\kb{1}{1}\otimes \e^{\ii\pi\beta^{(2)}_{1I}}Z^{\alpha^{(2)}_{1I}-\alpha^{(2)}_{0I}}S_yZ^{-\alpha^{(2)}_{1I}+\alpha^{(2)}_{0I}}\right)\otimes\kb{I}{I}.
\end{equation}
For $k\in\{3,4,\dots,n\}$, conjugation of $U_k$ by $\Theta_2$ can be accounted for by changing the corresponding $\alpha^{(k)}_I$, similar to the calculations in \cref{app:eq:S_n_Theta1Uk}.
That is, additionally to $\alpha^{(1)}_I$, we managed to eliminate $\alpha^{(2)}_{0I}$ for all $I\in\{0,1\}^{n-2}$.
Iterating on this procedure, we find that by choosing an appropriate gauge unitary, we can set 
\begin{equation}\label{eq:preliminary_alpha}
  \alpha^{(k)}_{0\dots 0I}=0,\quad \forall I\in\{0,1\}^{n-k},  
\end{equation} 
for all $k\in \{1,\dots,n-1\}$.
For $k=n$, we can set $\alpha^{(n)}_{0\dots 0}=0$.
Since the subscripts of $\alpha^{(k)}_I$ and $\beta^{(k)}_I$ are strings of length $n-1$, and there is no ambiguity in the number of $0$-s in \cref{eq:preliminary_alpha}, we keep using this notation in the rest of the proof.

As the next step, we show that all the remaining $\alpha^{(k)}_I$ must be $0$. 
For that, we consider the sequences in $\mathcal{X}^{\mathcal{G}}_{n,1}$,
and, therefore, first calculate the product
\begin{equation}\label{eq:fully_coherent_state}
    U_{1}U_{2}\dots U_{n}\ket{0}^{\otimes n}\eutp \ket{+}\otimes\sum_{J=j_1\dots j_{n-1}\in\{0,1\}^{n-1}} \exp\left(\ii\pi\sum_{l=1}^{n-1}\beta_{0\dots 0j_{l}\dots j_{n-1}}^{(l)}\right)\kb{J}{J}\ket{+}^{\otimes n-1}.
\end{equation}
For better readability, we conduct the following calculation only for the sequence $\iS_n\dots\iS_1\iS_n\in\mathcal{X}^{\mathcal{G}}_{n,1}$ explicitly, while the other cases where the last put is $\iS_k$ and $k\in\{2,\dots,n-1\}$ follow analogously.
In order to keep the calculations tractable, we introduce the following additional notation. 
For $J=j_1\dots j_{n-1}\in\{0,1\}^{n-1}$, we set $\theta(J)\coloneqq\sum_{l=1}^{n-1}\beta^{(l)}_{0\dots 0j_l\dots j_{n-1}}$ and for $I=i_1\dots i_l\in\{0,1\}^{l}$, we denote $\wout{I}{k}\coloneqq i_1\dots i_{k-1}i_{k+1}\dots i_l$, i.e., the sequence $I$ with the $k$-th element removed. 
Also, for $k\in\{2,3,\dots,n\}$ and $I=i_1\dots i_{n-1}\in\{0,1\}^{n-1}$ we define 
\begin{align}\label{eq:Delta}
    \Delta_{I,k} &\coloneqq \theta(i_2i_3\dots i_{k-1} 1 i_{k},\dots i_{n-1})-\theta(i_2i_3 \dots i_{k-1} 0 i_{k}\dots i_{n-1})=\theta(\wout{I}{1}\oplus_{k-1} 1)-\theta(\wout{I}{1}\oplus_{k-1}0),\\
    \varphi(I,k)&\coloneqq\beta^{(k)}_{I}+\theta(i_2i_3\dots i_{k-1}0 i_{k}\dots i_{n-1})=\beta_I^{(k)}+\theta(\wout{I}{1}\oplus_{k-1}0),
\end{align}
where the term $\varphi(I,k)$ will be used to collect some phases that do not affect the proof.
We also use $U_{k,I}$ for the unitary operator of the subchannel, i.e., without spelling it out as $Z^{\alpha_I^{(k)}} S_y Z^{-\alpha_I^{(k)}}$.
Using the new notation and \cref{eq:fully_coherent_state}, we calculate 
\begin{align}\label{app:eq:S_n_U_nU_1dotsU_n}
\begin{split}
    U_{n}U_{1}U_{2}\dots U_{n}\ket{0}^{\otimes n} \eutp &\frac{1}{\sqrt{2}^{n-1}}\left(\sum_{I\in\{0,1\}^{n-1}} \e^{\ii\pi\beta^{(n)}_{I}}\kb{I}{I}\otimes U_{n,I}\right)\left( \ket{+}\otimes\sum_{J\in\{0,1\}^{n-1}} \e^{\ii\pi\theta(J)}\ket{J}\right)\\
    =&\frac{1}{\sqrt{2}^{n-1}}\sum_{I,J\in \{0,1\}^{n-1}}\ket{i_1}\braket{i_1}{+} \e^{\ii\pi(\beta^{(n)}_{I}+\theta(J))}\ket{\wout{I}{1}}\braket{\wout{I}{1}}{\wout{J}{n-1}}\otimes U_{n,I}\ket{j_{n-1}}\\
    =&\frac{1}{\sqrt{2}^{n}}\sum_{I\in\{0,1\}^{n-1}}\left(\e^{\ii\pi\beta^{(n)}_{I}} \ket{I}\otimes \sum_{j\in\{0,1\}} \e^{\ii\pi\theta(\wout{I}{1}j)}U_{n,I}\ket{j}\right)\\
    =&\frac{1}{\sqrt{2}^{n}} \sum_{I\in\{0,1\}^{n-1}}\Bigg(\e^{\ii\pi(\beta_I^{(n)}+\theta(\wout{I}{1}0))}\ket{I}\otimes  U_{n,I}\left(\ket{0}+\e^{\ii\pi(\theta(\wout{I}{1}1)-\theta(\wout{I}{1}0))}\ket{1}\right)\Bigg)\\
    =&\frac{1}{\sqrt{2}^{n-1}}\sum_{I\in\{0,1\}^{n-1}}\e^{\ii\pi\varphi(I,n)}\ket{I}\otimes U_{n,I}Z^{\Delta_{I,n}}\ket{+}.
     \end{split}
    \end{align}
In the general case of $k\in\{2,\dots,n-1\}$, we find analogously
\begin{equation}\label{eq:aux_state}
    U_k U_1U_2 \dots U_n\ket{0}^{\otimes n}\eutp \frac{1}{\sqrt{2}^{n-1}}\sum_{I\in\{0,1\}^{n-1}} \e^{\ii\pi\varphi(I,k)}\ket{I}_{B_k}\otimes \left(U_{k,I} Z^{\Delta_{I,k}}\ket{+}\right)_{A_k}.
\end{equation}
Next, we compute the reduced state of the subsystem $A_k$ by tracing out $B_k$, which a priori could be a mixed state
\begin{align}\label{app:eq:S_n_rho_k}
    \rho_k \coloneqq \frac{1}{2^{n-1}}\sum_{I\in\{0,1\}^{n-1}} U_{k,I} Z^{\Delta_{I,k}}\kb{+}{+} Z^{-\Delta_{I,k}}U_{k,I}^\dagger.
\end{align}
However, from the measurement outcomes $a_\mathcal{M}(\iS_n\iS_{n-1}\dots\iS_1\iS_k)\subseteq\Set{J\in\{0,1\}^n\given j_k=1}$ we find that this partial state must be pure and equal to $\kb{1}{1}$. 
Hence, we establish that the state in \cref{eq:aux_state} must be a product state with respect to the bipartition $A_k\vert B_k$, and by spelling out $U_{k,I}$ in \cref{app:eq:S_n_rho_k} we find
\begin{equation}\label{eq:alpha_Delta_condition}
  Z^{\alpha^{(k)}_{I}}S_yZ^{-\alpha^{(k)}_{I}} Z^{\Delta_{I,k}}\ket{+}\eutp\ket{1},\quad \forall I\in\{0,1\}^{n-1}.  
\end{equation}
From the above, we can conclude that $\alpha_{I}^{(k)}=\Delta_{I,k}$ must hold. 
Since $\Delta_{I,k}$ is independent of $i_1$ (see \cref{eq:Delta}), we know that 
\begin{align}\label{eq:parameter_i1_elimination}
    \alpha^{(k)}_{0I}=\alpha^{(k)}_{1I},\quad \forall I\in\{0,1\}^{n-2}.
\end{align}
For $k=2$, it then follows that $\alpha^{(2)}_{1I}=0$ for all $I\in\{0,1\}^{n-2}$, since previously we achieved $\alpha^{(2)}_{0I}=0, \forall I\in\{0,1\}^{n-2}$ by choosing an appropriate gauge.
Hence, $\alpha^{(2)}_I=0$ for all $I\in \{0,1\}^{n-1}$.
We have therefore successfully eliminated the dependence on $i_1$ for all the $\alpha^{(k)}_I$ parameters using the inputs $\mathcal{X}_{n,1}^{\mathcal{G}}$. This is fundamentally due to $U_1$ being applied last to reach the fully coherent state, before again acting with the different $U_k$ in \cref{eq:aux_state}. 
Therefore, we proceed to iterate the scheme by analyzing the input sequences in $\mathcal{X}_{n,2}^{\mathcal{G}}$, which are given by $\iS_n\iS_{n-1}\dots\iS_2\iS_k$ with $k=3,\dots,n$. 
The analogous calculations and reasoning then show that $\alpha^{(k)}_{00I}=\alpha^{(k)}_{01I}$ for all $I\in\{0,1\}^{n-3}$ and $k=3,\dots,n$. 
The calculations are similar to Eqs.~(\ref{app:eq:S_n_U_nU_1dotsU_n}-\ref{eq:alpha_Delta_condition}), with the difference being that the subsystem $A_1$ is essentially unaffected and remains in the state $\ket{0}$.
We combine this with the constraints in \cref{eq:parameter_i1_elimination}, to find 
\begin{equation}
\alpha^{(k)}_{11I}=\alpha^{(k)}_{01I}=\alpha^{(k)}_{00I}=\alpha^{(k)}_{10I},\quad \forall I\in\{0,1\}^{n-3}, 
\end{equation}
for all $k\geq 3$, which shows that the $\alpha_I^{(k)}$-parameters are independent of the first two bits of their index.
From \cref{eq:preliminary_alpha} we then obtain that $\alpha^{(3)}_I=0$ for all $I\in\{0,1\}^{n-1}$. 
Iterating further, by considering all the sequences of the type $\iS_n\iS_{n-1}\dots\iS_m\iS_k\in\mathcal{X}_{n,m}^{\mathcal{G}}$ for $m=3,\dots,n-1$, we can conclude that $\alpha^{(k)}_I=0$ for all $k$ and $I\in\{0,1\}^{n-1}$.\\

It is left to show that $\beta^{(k)}_I=0$ for all $k$ and $I$.
We further investigate the implications of the condition $\alpha_I^{(k)}=\Delta_{I,k}$.
Since we found that $\alpha^{(k)}_I = 0$, we get $\Delta_{I,k}=\theta(\wout{I}{1}\oplus_{k-1} 1)-\theta(\wout{I}{1}\oplus_{k-1} 0)=0$ for all $I$. 
This condition shows that for each $k=2,\dots,n$, $\theta(J)$ does not depend on the bit $j_{k-1}$ of $J=j_1\dots j_{n-1}$, so altogether we conclude that $\theta(J)$ is actually a constant of $J$. 
In particular, \cref{eq:fully_coherent_state} simplifies to
\begin{equation}\label{eq:fully_coherent_state_identity}
    U_{1}\dots U_{n}\ket{0}^{\otimes n}\eutp\ket{+}^{\otimes n}.
\end{equation}
From $a_\mathcal{M}((\iS_n\iS_{n-1}\dots\iS_1)^2)=\{1^n\}$, which is the sequence we denoted as $\vec{x}_{n,1}$, and \cref{eq:fully_coherent_state_identity}, we can deduce that also 
\begin{equation}\label{eq:fully_coherent_state_inverse_1}
    U_{n}^\dagger\dots U_{1}^\dagger \ket{1}^{\otimes n}\eutp\ket{+}^{\otimes n},
\end{equation} 
must hold.
As the next step, we examine the inputs $\vec{y}_{n,k}$ and the corresponding conditions $a_\mathcal{M}\left((\iS_n\dots\iS_1\iS_k)^2\right) = \{1^{k-1}01^{n-k}\}$. Since $U_k^\dagger$ acts as a local $S_y$ gate on $\ket{0}_{A_k}\otimes \ket{1}^{\otimes n-1}_{B_k}$ we find that with \cref{eq:fully_coherent_state_identity}, this leads to
\begin{equation}\label{app:eq:S_n_proof_U1dotsUnUk}
    U_1\dots U_nU_k\ket{+}^{\otimes n} \eutp U_k^\dagger\ket{0}_{A_k}\otimes\ket{1}^{\otimes n-1}_{B_k}\eutp \ket{-}_{A_k}\otimes\ket{1}^{\otimes n-1}_{B_k}.
\end{equation}
We continue by considering the pairing of the outer sides of \cref{app:eq:S_n_proof_U1dotsUnUk} with $\bra{1}^{\otimes n}$. With \cref{eq:fully_coherent_state_inverse_1}, we find that $\bra{+}^{\otimes n}U_k\ket{+}^{\otimes n} \eutp \frac{1}{\sqrt{2}}$ must hold.
Inserting the explicit form of $U_k$ from \cref{eq:global_channel_form} in the above equation, and keeping in mind that $\bra{+}S_y\ket{+} \eutp \frac{1}{\sqrt{2}}$, we obtain
\begin{equation}
    \sum_{I\in\{0,1\}^{n-1}}\e^{\ii\pi\beta^{(k)}_I} \kb{I}{I}\ket{+}^{\otimes n-1} \eutp \ket{+}^{\otimes n-1}.
\end{equation}

This of course necessitates, that $\beta^{(k)}_I=\beta^{(k)}_0$ for all $I$. 
Since we fixed all $\beta^{(k)}_{0^{n-1}}=0$ beforehand, we then find $U_k = (S_y)_{A_k}\otimes \openone_{B_k} =S_y^{(k)}$ for all $k=1,\dots,n$, which concludes the proof.  
\end{proof}
\begin{reptheorem}{res:universal}[restated]
For every $n\geq 2$, the $n$-qubit model $\mathcal{U}_n = \Cl_n + \{CS^{(1,2)}\}$ can be certified by the set of inputs $\mathcal{X}^u_n$ which has cardinality $\abs{\mathcal{X}^u_n}\in \LandauO(n 2^n)$.
\end{reptheorem}
\begin{proof}
\Cref{res:S_n} shows that the inputs $\mathcal{X}_n$ are regular for the model $\mathcal{S}_n$. A quick accounting shows that $\abs{\mathcal{X}_n}\in\mathcal{O}(n2^n)$.
Therefore, every $2^n$-dimensional model $\mathcal{N}\coloneqq (\rho,\{\Lambda_{j}\}_{j=1}^n,\{M_I\}_{I\in\{0,1\}^n})+\{\Lambda_{(1,j)}\}_{j=2}^n+\{\Lambda_\iH,\Lambda_{\mathrm{cs}}\}$ which is compatible with the inputs $\mathcal{X}_n $ can, without loss of generality, be assumed to be of the form $\mathcal{S}_n + \{\Lambda_{(1,j)}\}_{j=2}^n+\{\Lambda_\iH,\Lambda_{\mathrm{cs}}\}$.
Since $C_\iH X^{(1,2)}$ is diagonal in the $n$-qubit Hadamard basis, we can apply \cref{cor:adding_inputs} with $\mathcal{C}=\{\ket{\mathrm{++}_y},\ket{\mathrm{+}_y+},\ket{\mathrm{-+}_y}\}\otimes\ket{+_y}^{\otimes n-2}$ which satisfies the condition \ref{prop:3}.
Analogously, we can reconstruct all the remaining $C_\iH X^{(1,j)}$ gates for $j\in\{3,\dots,n\}$, and, thus, can further simplify $\mathcal{N}=\mathcal{S}_n+\{C_\iH X^{(1,j)}\}_{j=2}^n+\{\Lambda_\iH,\Lambda_{\mathrm{cs}}\}$.
By \cref{cor:adding_inputs}, adding each $C_\iH X^{(1,j)}$ gate requires $\mathcal{O}(2^n)$ tests and, therefore, the total number of inputs is still of order $\mathcal{O}(n 2^n)$.
It is straightforward to generalize \cref{res:Clifford_2} for all $\mathcal{C}l_n$ with $n\geq 3$, adding $\mathcal{O}(2^n)$ more inputs, and therefore we find $\mathcal{N}=\Cl_n+\{\Lambda_{\mathrm{cs}}\}$. 
Having access to the Clifford group allows us to prepare any state of the computational basis from $\ket{+}^{\otimes n}$.
Additionally, the set $\mathcal{C'}=\{\ket{0+},\ket{\mathrm{+}0},\ket{1+}\}\otimes \ket{+}^{\otimes n-2}$ is also among the attainable states of $\mathcal{C}l_n$. 
We can use these two facts and \cref{cor:adding_inputs} to obtain an input set that certifies the augmented model $\mathcal{C}l_n+\{CS^{(1,2)}\}$. 
Again this adds $\mathcal{O}(2^n)$ inputs, with the total scaling remaining at $\mathcal{O}(n2^n)$.
\end{proof}

% ------------------------------------------------------------------
\section{Robustness analysis}
\label{app:numerics}
Here we provide more details on the robustness analysis of the \ac{QSQ} protocol. 
In this analysis, we investigate the relation between the probability $\PP(\text{\vbt{pass}})$ of a noisy model $\mathcal{N} = (\tilde\rho,\Set{\tilde\Lambda_x}_{x\in \XX},\Set{\tilde M}_{a\in \AA})$ to pass a single repetition of \cref{protocol}, and the infidelity between $\mathcal{N}$ and the target model $\mathcal{M} = (\rho,\Set{\Lambda_x}_{x\in \XX},\Set{M_a}_{a\in \AA})$, both defined over $\HH\cong \CC^{2^n}$.
We repeat the definition~\eqref{eq:distNM} of the infidelity between two models:
\begin{equation}\label{app:eq:distNM}
    \dist(\mathcal{N},\mathcal{M}) \coloneqq \min_U\max\Set*{1-\Fid(\tilde\rho,U\rho U^\dagger),1-\Fid_\avg(\tilde\Lambda_x,U\circ\Lambda_x\circ U^{-1}),\d_{TV}(\Set{\tilde M_a}_{a\in\AA}, \Set{UM_aU^\dagger}_{a\in\AA})},
\end{equation}
where we have the total variation distance between the measurements
\begin{equation}
    \d_{TV}(\Set{\tilde M_a}_{a\in\AA}, \Set{M_a}_{a\in\AA}) \coloneqq \frac{1}{2}\max_{\sigma\in \DM(\HH)}\Big\{\sum_{a\in \AA}\abs*{\Tr[\sigma(\tilde M_a - M_a)]}\Big\}.
\end{equation}
Moreover, when $\rho$ is pure $\Fid(\tilde\rho,\rho)\coloneqq \Tr[\tilde\rho\rho]$ and similarly, for unitary $\Lambda$ the average gate infidelity becomes
\begin{equation}
    \Fid_\avg(\tilde\Lambda,\Lambda)\coloneqq \frac{1}{d+1} + \frac{1}{d(d+1)}\sum_{i,j=0}^{d-1}\Tr[\Lambda(\ketbra{i}{j})^\dagger\tilde\Lambda(\ketbra{i}{j})].
\end{equation} 
In \cref{app:eq:distNM}, the minimization is taken over (anti-) unitary transformations, the maximization is also taken over all $x\in\XX$, and with a slight abuse of notation, we used the same letter to denote the unitary channel corresponding to $U$
\begin{equation}
    U\circ\Lambda_x\circ U^{-1}(\argdot)\coloneqq U\Lambda_x(U^\dagger (\argdot) U)U^\dagger,
\end{equation}
which, in the case of $U$ being anti-unitary, corresponds to the unitary channel and the complex conjugation applied to $\Lambda_x$.

For the robustness analysis of the soundness property, we need to find an upper bound on $\PP(\text{\vbt{pass}})$ for models $\mathcal{N}$, with $\dist(\mathcal{N},\mathcal{M})\geq \veps$. 
For the case of completeness, we need to show that there is a lower bound on $\PP(\text{\vbt{pass}})$ for models with infidelity less than some threshold value.

\begin{reptheorem}{th:robustness}[restated]
    Let the probability of a model $\mathcal{N}$ to fail a single repetition of the \cref{protocol} for the target model $\mathcal{M}$ be $\PP(\text{\textnormal{\vbt{fail}}})$, and the infidelity between these models be $\dist(\mathcal{N},\mathcal{M})$ as defined in \cref{app:eq:distNM}.
    The \ac{QSQ} protocol is robust in the sense that there exist integer $k_1,k_2\geq 1$, and real $C_1,C_2>0$, such that the following holds
    \begin{equation}\label{eq:app:robust}
    \begin{split}
       C_1\dist(\mathcal{N},\mathcal{M})^{k_1}\leq \PP(\text{\textnormal{\vbt{fail}}})\leq C_2\dist(\mathcal{N},\mathcal{M})^{1/k_2}.
    \end{split}    
    \end{equation}
    Moreover, there exists $C_2\in \LandauO(n2^n)$ for $k_2=2$.
\end{reptheorem}
\begin{proof}
The proof rests on the result from algebraic geometry known as Lojasiewicz's inequality (see e.g., Ref.~\cite[Cor.~2.6.7]{Bochnak1998}), also used in Ref.~\cite{vanDam2007self-testing} to prove the robustness of self-tests.
This result states that if there exist two continuous semi-algebraic functions $f,g: Y\to \RR$, defined over a compact semi-algebraic set $Y\subseteq \RR^m$, such that for all $\vec{y}\in Y$, if $f(\vec{y})=0$ then $g(\vec{y})=0$, then there exist an integer $k\geq 1$ and a real $C>0$, such that $\forall \vec{y}\in Y$, $\abs{g(\vec{y})}^k\leq C\abs{f(\vec{y})}$.

Fix an \ac{ONB} in $\HH$ over which the model $\mathcal{N}$ is defined, and consider a parametrization of $\mathcal{N}$ w.r.t.~this \ac{ONB}, with the parameters $\vec{y}\in \RR^m$ for some large enough $m$.
Let $Y$ be a set of all $\vec{y}$ that correspond to the models $\mathcal{N}$ that are physical.
Since the physicality constraints are represented as a set of \ac{PSD} constraints and linear in $\vec{y}$ set of equations, the set $Y$ can be represented as 
\begin{equation}\label{eq:app:Y_def}
    Y = \Set{\vec{y}\in \RR^m\given P_1(\vec{y})\geq 0,\dots,P_{l_1}(\vec{y})\geq 0,P_{l_1+1}(\vec{y})=0,\dots,P_{l_1+l_2}(\vec{y})=0},
\end{equation}
where $P_i(\vec{y})$ are polynomials in the entries of $\vec{y}$, and the number of constraints $l_1+l_2$ is finite.
This implies that $Y$ is semi-algebraic and closed, since all the inequalities in \cref{eq:app:Y_def} are not strict.
Moreover, since all the operators in $\mathcal{N}$ are bounded, the set $Y$ is bounded itself and, thus, is compact.

Let us take $f(\vec{y})=\PP(\text{\textnormal{\vbt{fail}}})$.
Clearly, $f(\vec{y})$ is a polynomial in the entries of $\vec{y}$.
Hence, it is a continuous and semi-algebraic function, since the graph of $f(\vec{y})$, i.e., the set $\Set{(v,\vec{y})\in \RR^{m+1} \given v=f(\vec{y}), \vec{y}\in Y}$, is semi-algebraic.
The case of $\dist(\mathcal{N},\mathcal{M})$ as a function of $\vec{y}$ is more subtle, because in its definition, we have the optimization over the unitary gauge. 
First, we need to define a new function for the infidelity between two models
\begin{equation}
\begin{split}
    \dist_\Sigma(\mathcal{N},\mathcal{M}) = \frac{1}{3}&\min_{U\in \U(\HH)}\Bigg\{3-\Fid(\tilde\rho,U\rho U^\dagger)-\frac{1}{2^n}\sum_{a\in \AA}\Tr[\tilde M_a U M_aU^\dagger]-\frac{1}{\abs{X}}\sum_{x\in \XX}\Fid_\avg(\tilde\Lambda_x,U\circ\Lambda_x\circ U^\dagger)\Bigg\},
    \end{split}
\end{equation}
where, as compared to \cref{app:eq:distNM}, we took the average over the infidelities instead of the maximum, set the minimization to be only over the unitary transformations, and for the \acp{POVM} took a different figure of merit for their closeness. 
As we show in \cref{lemma:dist}, $\dist_\Sigma(\mathcal{N},\mathcal{M})\leq \dist(\mathcal{N},\mathcal{M})\leq 3\max\Set{2^{n},\abs{\XX}}\sqrt{\dist_\Sigma(\mathcal{N},\mathcal{M})}$, if the optimal gauge in \cref{app:eq:distNM} is unitary.
Clearly, when the optimal gauge is anti-unitary, the same relation holds for the model $\mathcal{M}^\ast$ with the complex conjugation applied to the state, gates, and measurement.

Let us first take $g(\vec{y}) = \dist_\Sigma(\mathcal{N},\mathcal{M})$. 
To prove that $g(\vec{y})$ is semi-algebraic, we use the consequence of Tarski-Seidenberg's theorem (see e.g., Ref.~\cite[Cor.~A.2.4]{Hoermander2005}), namely that a real function defined as $g(\vec{y}) = \inf_{\vec{z}\in Z}\Set{h(\vec{y},\vec{z})}$ where $Z\subseteq \RR^{m'}$ is semi-algebraic if $h: Y\times Z \to \RR$ and $Z$ are semi-algebraic.  
Let $\vec{z}$ be the parameters corresponding to a parametrization of unitary operators w.r.t.~the same \ac{ONB} as above, and $Z$ be the corresponding set for all unitary operators. 
Clearly, there exists a parametrization for which $Z$ is compact and semi-algebraic.
Moreover, $h(\vec{y},\vec{z})$ is again a polynomial in terms of the entries of $\vec{y}$ and $\vec{z}$, which means that its graph is semi-algebraic. 
Finally, due to (a trivial case of) Berge's Maximum Theorem (see e.g., Ref.~\cite[Th.~17.31]{aliprantis2006infinite}) in parameter optimization, the function $g(\vec{y})$ is continuous, since $h(\vec{y},\vec{z})$ is continuous over $Y\times Z$. 

The case of $g(\vec{y})=\dist_\Sigma(\mathcal{N},\mathcal{M}^\ast)$ is clearly analogous to the above. 
We, therefore, apply Lojasiewicz's inequality in both directions, i.e., also for $f(\vec{y})$ and $g(\vec{y})$ swapped, and obtain the relations in \cref{eq:app:robust} first for $\dist_\Sigma(\mathcal{N},\mathcal{M})$ (or $\dist_\Sigma(\mathcal{N},\mathcal{M}^\ast)$) and then due to \cref{lemma:dist} also for $\dist(\mathcal{N},\mathcal{M})$.
This completes the first part of the proof.

Now, we show that the upper bound in \cref{eq:app:robust} holds for $k_2=2$ and $C_2\in \LandauO(n2^n)$.
Let the model $\mathcal{N}$ be such that $\dist(\mathcal{N},\mathcal{M})\leq \veps$, for some $\veps\geq 0$. 
First, we notice that the unitary gauge $U$ does not play any role in the completeness analysis, because every model equivalent to $\mathcal{M}$ leads to the same observed statistics. 
Therefore, we simply set $U=\1$ in the argument below.
From $\dist(\mathcal{N},\mathcal{M})\leq \veps$, we directly have that each of the terms representing the infidelities in \cref{app:eq:distNM} for the state, measurement, and the channels must satisfy the upper bound of $\veps$.
In particular, we use the condition $1-\Fid_\avg(\tilde\Lambda_x,\Lambda_x)\leq \veps$, to conclude that $\norm{\tilde\Lambda_x-\Lambda_x}_\diamond\leq 2\sqrt{2^n(2^n+1)\veps}\in\LandauO(2^n\sqrt{\veps})$ (see e.g.,~\cite{kliesch2021theory}).
Similarly, for the state $\tilde\rho$, we have $\norm{\tilde\rho-\rho}_1\leq 2\sqrt{\veps}$~\cite{Fuchs1999}. 
Then, for any tested sequence $\vec{x} = x_1x_2\dots x_l$ of length $l$, for which the set of target outcomes is $a_\mathcal{M}(\vec{x})$, we have
\begin{equation}
    \begin{split}
        \abs*{\PP(a\in a_\mathcal{M}(\vec{x})\vert \vec{x})-1} & \leq \abs*{\sum_{a\in a_\mathcal{M}(\vec{x})}\left(\Tr[\tilde M_a\tilde \Lambda_{x_l}\circ\cdots\circ \tilde\Lambda_{x_2}\circ\tilde \Lambda_{x_1}(\tilde \rho)-\Tr[M_a \Lambda_{x_l}\circ\cdots\circ\Lambda_{x_2}\circ\Lambda_{x_1}(\rho)]\right)}\\
        & \leq \abs*{\sum_{a\in a_\mathcal{M}(\vec{x})}\left(\Tr[\tilde M_a\tilde \Lambda_{x_l}\circ\cdots\circ \tilde\Lambda_{x_2}\circ\tilde \Lambda_{x_1}(\rho)-\Tr[M_a \Lambda_{x_l}\circ\cdots\circ\Lambda_{x_2}\circ\Lambda_{x_1}(\rho)]\right)} + 2\sqrt{\veps} \\
        & \leq \abs*{\sum_{a\in a_\mathcal{M}(\vec{x})}\left(\Tr[\tilde M_a\tilde \Lambda_{x_l}\circ\cdots\circ \tilde\Lambda_{x_2}\circ \Lambda_{x_1}(\rho)-\Tr[M_a \Lambda_{x_l}\circ\cdots\circ\Lambda_{x_2}\circ\Lambda_{x_1}(\rho)]\right)} + \LandauO(2^n\sqrt{\veps}) \\
        & \leq \cdots \leq  \sum_{a\in a_\mathcal{M}(\vec{x})}\abs*{\Tr[(\tilde M_a-M_a)\Lambda_{x_l}\circ\cdots\circ\Lambda_{x_2}\circ \Lambda_{x_1}(\rho)]} + \LandauO(l 2^n\sqrt{\veps}) = \LandauO(l 2^n\sqrt{\veps}),
    \end{split}
\end{equation}
where the dots indicate the iteration of the previous estimate over the channels $\Lambda_{x_1},\dots,\Lambda_{x_l}$.
Since the maximal length of the tested sequences that we consider in this work is $l\in \LandauO(n)$, we can conclude that $\PP(\text{\vbt{pass}})\geq 1-\LandauO(n2^n\sqrt{\veps})$, i.e., there is a certain level of noise $\veps$ that can be tolerated by the \ac{QSQ} protocol for a given finite number of repetitions $N$. 
In our numerical investigations, we observe that the above general, but pessimistic in scaling, assessment is not tight, and, in fact, one should expect a linear scaling between $\PP(\text{\vbt{pass}})$ and $\dist(\mathcal{N},\mathcal{M})$. 
\end{proof}

% ------------------------------------------------------------------
\section{Technical Lemmata}
\label{app:lemmas}
In this section, we state and prove some technical results about channels.
Some of these results can be of independent interest, e.g., \cref{corr:unitarity} can be turned into a protocol for certifying unitarity of a channel, and, similarly, \cref{lemma:channel=U} could be useful for certifying the identity channel.
\begin{techlemma}\label{lemma:unitarity_from_subchannels}
Let $\Lambda\in\CPTP(\mathcal{H}_A\otimes \mathcal{H}_B)$, for finite-dimensional Hilbert spaces $\mathcal{H}_A,\mathcal{H}_B$ of dimensions $d_A$ and $d_B$, respectively.
Assume that there exists an orthonormal basis $\{\ket{\psi_i}\}_{i\in[d_A]}$ of $\mathcal{H}_A$, and a set of unitaries $\{U_i\}_{i\in [d_A]}$ on $\HH_B$ such that the following conditions are satisfied:
\begin{enumerate}
    \item[(i)] For all $\ket{\psi_i}$ and $\rho\in\DM(\HH_B)$ we have $ \Lambda(\kb{\psi_i}{\psi_i}\otimes\rho)=\kb{\psi_i}{\psi_i}\otimes U_i\rho U_i^\dagger $.
    \item[(ii)] There exists a set $\mathcal{C}\subseteq \DM(\mathcal{H}_A\otimes\mathcal{H}_B)$ of product states, such that the reduced states on the subsystem $A$ are fully coherent for the basis $\{\ket{\psi_i}\}_{i\in[d_A]}$ (in the sense of \cref{def:coherence_graph}), and $\Lambda(\rho)$ is a pure state for every $\rho\in\mathcal{C}$.
\end{enumerate}
Then the channel $\Lambda$ is unitary, i.e., $\exists U\in \U(\HH_A\otimes\HH_B)$ s.t.,
$\Lambda(\rho)=U\rho U^\dagger$, and there exist phases $\theta_i\in \RR$, such that
\begin{equation} 
    U = \sum_{i\in[d_A]} \e^{\ii\theta_i}\kb{\psi_i}{\psi_i}\otimes U_i.
\end{equation}
\end{techlemma}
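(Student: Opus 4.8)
The plan is to pin down the Kraus operators of $\Lambda$ completely from condition (i), reduce the desired unitarity to the statement that certain vectors built from the Kraus data are pairwise parallel, and then extract exactly this parallelism from the purity requirement in condition (ii), propagating it along the coherence graph.

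First I would fix a Kraus decomposition $\Lambda(\argdot)=\sum_k K_k(\argdot)K_k^\dagger$ with $\sum_k K_k^\dagger K_k=\1$ and show that condition (i) forces
\begin{equation}
K_k=\sum_{i\in[d_A]}\lambda_{k,i}\,\pure{\psi_i}\otimes U_i
\end{equation}
for scalars $\lambda_{k,i}$. To see this, note that for a pure input $\ket{\psi_i}\otimes\ket{\phi}$ condition (i) outputs the pure state $\pure{\psi_i}\otimes U_i\pure{\phi}U_i^\dagger$; a channel sends a pure state to a pure state only if every Kraus operator maps that input to a common output vector up to a scalar, and since the output is supported on the $\pure{\psi_i}$-sector, each $K_k\ket{\psi_i}\ket{\phi}$ must already lie in the range of $\pure{\psi_i}\otimes\1$. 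Hence $K_k(\ket{\psi_i}\otimes\ket{\phi})\eutp\ket{\psi_i}\otimes U_i\ket{\phi}$ for every $\ket{\phi}$, and since an operator sending every vector to a multiple of $U_i\ket{\phi}$ must be a scalar times $U_i$, the $B$-block of $K_k$ on the $i$-th $A$-sector is $\lambda_{k,i}U_i$. As $\{\ket{\psi_i}\}$ spans $\HH_A$, this determines $K_k$ as above, and trace preservation reads $\sum_k\abs{\lambda_{k,i}}^2=1$ for each $i$, so the column vectors $\vec\lambda_i\coloneqq(\lambda_{k,i})_k$ are unit vectors.

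The key reduction is that $\Lambda$ is unitary of the claimed form precisely when all $\vec\lambda_i$ are pairwise parallel: parallelism gives $\lambda_{k,i}=c_k\e^{\ii\theta_i}$ with $\sum_k\abs{c_k}^2=1$, whence $K_k=c_kU$ for $U\coloneqq\sum_i\e^{\ii\theta_i}\pure{\psi_i}\otimes U_i$, which is unitary, and $\Lambda(\argdot)=U(\argdot)U^\dagger$. So it remains to prove parallelism, which is where condition (ii) enters. For a pure product state $\ket{\alpha}\otimes\ket{\beta}$ with $\ket{\alpha}=\sum_i a_i\ket{\psi_i}$, the vectors $\{\ket{\psi_i}\otimes U_i\ket{\beta}\}_i$ are orthonormal and $K_k\ket{\alpha}\ket{\beta}=\sum_i\lambda_{k,i}a_i\,\ket{\psi_i}\otimes U_i\ket{\beta}$; requiring the output to be pure forces the matrix $(\lambda_{k,i}a_i)_{k,i}$ to have rank one on the support of $\ket{\alpha}$, i.e.\ $\vec\lambda_i\parallel\vec\lambda_j$ for every $i,j$ with $a_i,a_j\neq0$. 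A general, possibly mixed, product state $\sigma_A\otimes\sigma_B\in\mathcal{C}$ reduces to this case: since $\Lambda(\sigma_A\otimes\sigma_B)$ is a sum of positive terms $\Lambda(\pure{\alpha_m}\otimes\pure{\beta_l})$ over eigendecompositions and this sum is rank one, each summand is itself pure; and if $\bra{\psi_i}\sigma_A\ket{\psi_j}\neq0$ then some eigenvector $\ket{\alpha_m}$ has both $i$ and $j$ in its support, yielding $\vec\lambda_i\parallel\vec\lambda_j$ for every coherent pair.

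Finally I would invoke \cref{def:coherence_graph}: the coherent pairs are exactly the edges of the coherence graph of $\mathcal{C}$ with respect to $\{\ket{\psi_i}\}$, which is connected by hypothesis, and parallelism of unit vectors is transitive, so all $\vec\lambda_i$ are parallel and the reduction above closes the argument, delivering $U=\sum_i\e^{\ii\theta_i}\pure{\psi_i}\otimes U_i$. The main obstacle I anticipate is the clean extraction of \emph{pairwise} parallelism from a single scalar purity constraint, and in particular handling mixed states in $\mathcal{C}$ without losing the per-pair information; the rank-one decomposition trick is precisely what collapses the mixed case to the pure one, and keeping track of which pairs are certified by which state is the only delicate bookkeeping.
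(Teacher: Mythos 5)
Your proposal is correct and follows essentially the same route as the paper's proof: both derive the block Kraus structure $K_k=\sum_{i}\lambda_{k,i}\pure{\psi_i}\otimes U_i$ from condition (i), reduce unitarity of $\Lambda$ to pairwise parallelism of the unit vectors $\bm\lambda^i=(\lambda_{k,i})_k$, and extract that parallelism from the purity demanded by condition (ii), propagating it along the edges of the connected coherence graph. The only differences are in execution, not in approach: you obtain the block structure directly from the purity of outputs on pure inputs (where the paper invokes \cref{lemma:marginal_homomorphism} and then the proportionality of Kraus operators to $U_i$), and you certify parallelism via a rank-one matrix argument plus an eigendecomposition of mixed product states, where the paper instead evaluates $\tr[\Lambda(\sigma\otimes\rho)^2]$ explicitly and reads off $\abs{\langle\bm\lambda^l,\bm\lambda^k\rangle}=1$ on every coherent pair $(k,l)$.
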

\begin{proof}
Let $\{K_j\}_{j\in [n]}$ be a set of Kraus operators for the channel $\Lambda$, where $n\in \NN$ is the Kraus rank.
Set $\ket{i}=\ket{\psi_i}$ for brevity.
From \cref{lemma:marginal_homomorphism}, we have that 
$K_j = \sum_{i\in [d_A]} \kb{i}{i}\otimes K^i_j$. 
Next, we use the condition (i) of the lemma and find that for every $\rho\in \DM(\HH_B)$, it must hold that
\begin{equation}\label{app:eq:lemma_unit_from_sub_1}
    \sum_{j=0}^{n-1} \kb{i}{i}\otimes K^i_j\rho (K^i_j)^\dagger = \kb{i}{i}\otimes U_i\rho U_i^\dagger, \quad \forall i\in [d_A].
\end{equation} 
Since \cref{app:eq:lemma_unit_from_sub_1} holds for all $\rho\in \DM(\HH_B)$, it necessitates that all $K_j^i$ are proportional to $U_i$, i.e., $K_j^{i} = \lambda_{j}^i U_i$, for some appropriate $\lambda_j^i\in \CC$.
From the normalization of $\Lambda$, we have that $\sum_{j=0}^{n-1} \abs{\lambda_j^i}^2 = 1$ hold for all $i$, therefore, we can associate a normalized vector $\bm{\lambda}^i=(\lambda^i_0,\dots,\lambda^i_{n-1})$ to each element of the \ac{ONB} $\{\ket{i}\}_{i\in [d_A]}$.

Assume for a moment that all the vectors $\bm{\lambda}^i$ are equal up to a phase, that is, for each $i\in [d_A]$ we find $\theta_i\in \RR$, such that $\bm{\lambda}^i=\e^{\ii\theta_i}\bm{\lambda}^0$. 
In this case, the Kraus operators take the form 
\begin{equation}
    K_j = \lambda^0_j\sum_{i\in[d_A]}\kb{i}{i}\otimes \e^{\ii\theta_i} U_i,\quad \forall j\in [n].
\end{equation}
In particular, all $K_j$ differ only by the factor $\lambda^0_j$, and therefore the summation over $j$ in the Kraus decomposition factors out, which proves that the channel $\Lambda$ is unitary. 
The remainder of the proof is therefore dedicated to showing that all $\bm{\lambda}^i$ differ only by a phase factor.
Let $\sigma = \sum_{k,l\in [d_A]} c_{kl} \kb{k}{l}$, for some $c_{kl}\in\CC$, such that $\sigma\in \DM(\HH_A)$.
Then for any $\rho\in\DM(\HH_B)$ we have
\begin{align}
\label{eq:aux_1}\Lambda(\sigma\otimes\rho) &= \sum_{j=0}^{n-1} K_j \sigma\otimes\rho K_j^\dagger= \sum_{j=0}^{n-1}\sum_{k,l\in[d_A]}c_{kl} \kb{k}{l}\otimes K_j^{k}\rho {K_j^{l}}^\dagger \\
    &=\sum_{k,l} c_{kl}\kb{k}{l}\otimes \left(\sum_j\lambda_j^k\bar{\lambda}_j^l\right) U_k\rho U_l^\dagger = \sum_{k,l}c_{kl}\langle\bm\lambda^l,\bm\lambda^k\rangle\kb{k}{l}\otimes U_k\rho U_l^\dagger.
\end{align}
In order to use the condition (ii) of the lemma, we calculate the purity of $\Lambda(\sigma\otimes\rho)$, 
\begin{align}
\begin{split}\label{eq:subchannel_reconstruction_lemma}
    \tr[\Lambda(\sigma\otimes\rho)^2] &= \tr\left[\sum_{k,l,k',l'\in[d_A]}\left(c_{kl}c_{k'l'}\langle\bm\lambda^l,\bm\lambda^k\rangle\langle\bm\lambda^{l'},\bm\lambda^{k'}\rangle\right)\ket{k}\braket{l}{k'}\bra{l'}\otimes U_k\rho U_l^\dagger U_{k'}\rho U_{l'}^\dagger\right]\\
    &=\sum_{k,l\in [d_A]}\abs{c_{kl}}^2\abs{\langle\bm\lambda^l,\bm\lambda^k\rangle}^2\tr[\rho^2].
    \end{split}
\end{align}
Clearly, the factors $\abs{\langle\bm\lambda^l,\bm\lambda^k\rangle}$ are bounded by $1$. On the other hand, we also have $\sum_{k,l}\abs{c_{kl}}^2=\tr[\sigma^2]\leq 1$. 
Therefore, the condition $\tr[\Lambda(\sigma\otimes\rho)^2]=1$ requires, aside of the purity of the states $\sigma,\rho$, that $\abs{\langle\bm\lambda^l,\bm\lambda^k\rangle}=1$ for any pair of indices $k,l$ for which $c_{kl}=\bra{k}\sigma\ket{l}\neq 0$.
Now we exploit the assumption (ii) of the Lemma.
For every edge $(k,l)$, $k,l\in [d_A]$ of the coherence graph of $\mathcal{C}$ w.r.t.\ the basis $\{\ket{i}\}_{i\in[d_A]}$, we have a state $\sigma\otimes\rho\in \mathcal{C}$, for which $\bra{k}\sigma\ket{l}\neq 0$ and the output state of the channel is pure.
From the above argument, it then follows that for the corresponding pair $(k,l)$ we have that $\bm\lambda^k\eutp \bm\lambda^l$.
Finally, since the coherence graph of $\mathcal{C}$ is assumed to be connected, all the vectors $\{\bm\lambda^i\}_{i\in [d_A]}$ are equal up to a complex phase.
This finishes the proof.
\end{proof}

\begin{techcorollary}\label{corr:unitarity}
    Let $\Lambda\in\CPTP(\mathcal{H})$ be a channel that maps an \ac{ONB} $\{\ket{\psi_i}\}_{i\in[d]}$ of a $d$-dimensional Hilbert space $\HH$ to another \ac{ONB} $\{\ket{\phi_i}\}_{i\in[d]}$ of $\HH$.
    Let $\mathcal{C}\subseteq\mathcal{D}(\HH)$ be a finite set of states such that 
    \begin{enumerate}
        \item[(i)] $\Lambda(\rho)$ is pure for all $\rho\in\mathcal{C}$, and
        \item[(ii)] $\mathcal{C}$ is fully coherent with respect to the \ac{ONB} $\{\ket{\psi_i}\}_{i\in[d]}$ (see \cref{def:coherence_graph}). 
    \end{enumerate}
    Then the channel $\Lambda$ is unitary.
\end{techcorollary}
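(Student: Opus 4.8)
The plan is to deduce the statement from \cref{lemma:unitarity_from_subchannels} by regarding $\HH$ as a bipartite system whose second tensor factor is trivial. The only mismatch with the lemma is that the lemma requires the channel to \emph{fix} each basis state on the $A$-subsystem, whereas here $\Lambda$ carries $\ket{\psi_i}$ to a possibly different vector $\ket{\phi_i}$. I would remove this discrepancy first: define the unitary $V\coloneqq \sum_{i\in[d]}\kb{\psi_i}{\phi_i}$, so that $V\ket{\phi_i}=\ket{\psi_i}$, and pass to the channel $\Lambda'\coloneqq \mathcal{V}\circ\Lambda$, where $\mathcal{V}(\argdot)\coloneqq V(\argdot)V^\dagger$. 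Since $\mathcal{V}$ is unitary, $\Lambda'$ is unitary if and only if $\Lambda$ is, and $\Lambda'(\rho)$ is pure exactly when $\Lambda(\rho)$ is; moreover $\Lambda'(\kb{\psi_i}{\psi_i})=V\kb{\phi_i}{\phi_i}V^\dagger=\kb{\psi_i}{\psi_i}$, so $\Lambda'$ now fixes every basis state $\ket{\psi_i}$.

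Next I would apply \cref{lemma:unitarity_from_subchannels} with the identification $\HH_A\coloneqq\HH$ and $\HH_B\coloneqq\CC$, so that $d_A=d$ and $d_B=1$. With a one-dimensional second factor, every density operator is (trivially) a product state $\sigma\otimes 1$ whose reduction to $A$ is $\sigma$ itself, and the only unitaries on $\HH_B$ are phases. Condition (i) of the lemma then reads $\Lambda'(\kb{\psi_i}{\psi_i})=\kb{\psi_i}{\psi_i}$, which holds with all $U_i$ trivial — this is precisely the point for which the post-composition with $V$ was introduced. Condition (ii) follows immediately from the hypotheses of the corollary: the set $\mathcal{C}$ consists of product states, its elements (equal to their own reductions) are fully coherent with respect to $\{\ket{\psi_i}\}_{i\in[d]}$, and $\Lambda'(\rho)$ is pure for all $\rho\in\mathcal{C}$ by hypothesis (i) together with unitarity of $\mathcal{V}$. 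The lemma then yields that $\Lambda'$ is unitary with $\Lambda'(\argdot)=U'(\argdot)U'^\dagger$ for a diagonal $U'=\sum_{i\in[d]}\e^{\ii\theta_i}\kb{\psi_i}{\psi_i}$. Undoing $\mathcal{V}$ gives $\Lambda(\argdot)=V^\dagger\Lambda'(\argdot)V=U(\argdot)U^\dagger$ with $U=V^\dagger U'=\sum_{i\in[d]}\e^{\ii\theta_i}\kb{\phi_i}{\psi_i}$, a unitary, as desired.

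I do not expect a genuine obstacle: the entire analytic content is already carried by \cref{lemma:unitarity_from_subchannels}, and the only thing to get right is the bookkeeping of the trivial factor and the post-composition, so that both hypotheses of the lemma are literally satisfied. If a self-contained argument were preferred, the same conclusion follows by repeating the lemma's Kraus-operator computation verbatim: purity of $\Lambda(\kb{\psi_i}{\psi_i})$ forces $K_j\ket{\psi_i}=\lambda_j^i\ket{\phi_i}$, so that for $\sigma=\sum_{k,l}c_{kl}\kb{\psi_k}{\psi_l}$ one computes $\tr[\Lambda(\sigma)^2]=\sum_{k,l}\abs{c_{kl}}^2\abs{\langle\bm{\lambda}^l,\bm{\lambda}^k\rangle}^2$; requiring this to equal $1$ on the pure, fully coherent states of $\mathcal{C}$ forces $\abs{\langle\bm{\lambda}^l,\bm{\lambda}^k\rangle}=1$ along every edge of the coherence graph, and connectivity then makes all $\bm{\lambda}^i$ equal up to a phase, which is exactly the condition under which the Kraus sum factorizes into a single unitary.
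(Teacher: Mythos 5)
Your proposal is correct and follows essentially the same route as the paper's own proof: compose $\Lambda$ with the unitary $\sum_{i\in[d]}\ket{\psi_i}\bra{\phi_i}$ to reduce to the case $\ket{\phi_i}=\ket{\psi_i}$, then apply \cref{lemma:unitarity_from_subchannels} with $\HH_B=\CC$. You simply spell out the bookkeeping (and the optional self-contained Kraus computation) that the paper leaves implicit.
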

\begin{proof}
First, we notice that we can prove this Corollary for $\ket{\phi_i}=\ket{\psi_i}$, and a more general case follows by composing the channel $\Lambda$ with the unitary $\sum_{i\in[d]} \ket{\psi_i}\bra{\phi_i}$, since it does not affect the unitarity.
Then this result follows from \cref{lemma:unitarity_from_subchannels} by taking $\HH_B=\CC$. 
\end{proof}

\begin{techlemma}\label{lemma:channel=U}
    Let $U$ be a unitary on a Hilbert space $\mathcal{H}$ with eigenbasis $\{\ket{\psi_i}\}_{i\in[d]}$ and $\Lambda\in\CPTP(\mathcal{H})$ a channel such that $\Lambda(\psi_i)=\psi_i$ for all $i$. 
    Let there exists a set of pure states $\mathcal{C}\subseteq \DM(\HH)$ such that 
    \begin{enumerate}
        \item[(i)] $\Lambda(\rho) = U\rho U^\dagger$, $\forall \rho\in\mathcal{C}$, and
        \item[(ii)] $\mathcal{C}$ is fully coherent with respect to the \ac{ONB} $\{\ket{\psi_i}\}_{i\in[d]}$ (see \cref{def:coherence_graph}).
    \end{enumerate}
     Then $\Lambda(\rho)=U\rho U^\dagger$ for all $\rho\in\DM(\HH)$.
\end{techlemma}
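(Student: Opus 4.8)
The plan is to mirror the argument of \cref{lemma:unitarity_from_subchannels}, using that $\Lambda$ fixes each eigenstate of $U$ to force its Kraus operators to be diagonal, and then using the fully coherent set $\mathcal{C}$ to rigidify the remaining freedom. Write $U=\sum_{i\in[d]}\e^{\ii\theta_i}\kb{\psi_i}{\psi_i}$ in its eigenbasis and fix a Kraus decomposition $\{K_j\}_{j\in[n]}$ of $\Lambda$. The hypothesis $\Lambda(\psi_i)=\psi_i$ says that the positive operator $\sum_j K_j\kb{\psi_i}{\psi_i}K_j^\dagger$ equals the rank-one projector $\kb{\psi_i}{\psi_i}$; since a sum of positive rank-one terms is rank one only if every term lies in the same one-dimensional range, each $K_j\ket{\psi_i}$ must be a scalar multiple of $\ket{\psi_i}$. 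Hence every $K_j$ is diagonal in the eigenbasis, $K_j=\sum_{i\in[d]}\lambda^i_j\kb{\psi_i}{\psi_i}$, and trace preservation $\sum_j K_j^\dagger K_j=\1$ yields a unit vector $\bm\lambda^i\coloneqq(\lambda^i_0,\dots,\lambda^i_{n-1})$ for each $i$.

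Next I would compute the action on a general operator $\sigma=\sum_{k,l}c_{kl}\kb{\psi_k}{\psi_l}$, obtaining $\Lambda(\sigma)=\sum_{k,l}c_{kl}\langle\bm\lambda^l,\bm\lambda^k\rangle\kb{\psi_k}{\psi_l}$ exactly as in \cref{eq:aux_1}, while $U\sigma U^\dagger=\sum_{k,l}c_{kl}\,\e^{\ii(\theta_k-\theta_l)}\kb{\psi_k}{\psi_l}$. Applying this to each $\rho\in\mathcal{C}$ and comparing coefficients via condition (i), I would conclude that $\langle\bm\lambda^l,\bm\lambda^k\rangle=\e^{\ii(\theta_k-\theta_l)}$ whenever $\bra{\psi_k}\rho\ket{\psi_l}\neq 0$. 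Because the $\bm\lambda^i$ are unit vectors, saturating $\abs{\langle\bm\lambda^l,\bm\lambda^k\rangle}=1$ in the Cauchy--Schwarz inequality forces $\bm\lambda^k=\e^{\ii(\theta_k-\theta_l)}\bm\lambda^l$ for every such pair, that is, for every edge $(k,l)$ of the coherence graph of $\mathcal{C}$ with respect to $\{\ket{\psi_i}\}$.

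Finally I would invoke the assumed connectedness of this coherence graph (condition (ii)) to propagate the edge relations from a fixed reference vertex $0$ to every $i$: chaining $\bm\lambda^k=\e^{\ii(\theta_k-\theta_l)}\bm\lambda^l$ along a path telescopes to $\bm\lambda^i=\e^{\ii(\theta_i-\theta_0)}\bm\lambda^0$, and this is path-independent since the accumulated phase depends only on the endpoints. Substituting back gives $K_j=\lambda^0_j\,\e^{-\ii\theta_0}\sum_i \e^{\ii\theta_i}\kb{\psi_i}{\psi_i}=\lambda^0_j\,\e^{-\ii\theta_0}\,U$, so every Kraus operator is a scalar multiple of $U$; the normalization $\sum_j\abs{\lambda^0_j}^2=1$ then collapses the Kraus sum to $\Lambda(\argdot)=U(\argdot)U^\dagger$ on all of $\DM(\HH)$, as claimed.

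I expect the only genuinely delicate point to be the rigidity step: extracting the exact relation $\bm\lambda^k=\e^{\ii(\theta_k-\theta_l)}\bm\lambda^l$ rather than mere proportionality, which requires combining the modulus-one inner product (the Cauchy--Schwarz equality case) with the precise phase read off from the comparison against $U\rho U^\dagger$. Everything else---the diagonalization of the Kraus operators and the graph-connectedness propagation---is a direct transcription of the machinery already developed for \cref{lemma:unitarity_from_subchannels}.
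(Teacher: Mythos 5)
Your proof is correct, but it takes a genuinely different route from the paper's. The paper proves this lemma in two modular steps: it first invokes \cref{corr:unitarity} (itself a corollary of \cref{lemma:unitarity_from_subchannels}) to conclude that $\Lambda$ is unitary, $\Lambda(\argdot)=V(\argdot)V^\dagger$, noting that $V$ is diagonal in $\{\ket{\psi_i}\}_{i\in[d]}$ because $\Lambda(\psi_i)=\psi_i$; it then compares the phase-normalized eigenvalues of $U$ and $V$ edge-by-edge along the coherence graph of $\mathcal{C}$ and propagates equality by connectedness. You instead stay at the Kraus level throughout: the rank-one-sum argument forces every Kraus operator to be diagonal, $K_j=\sum_i\lambda^i_j\kb{\psi_i}{\psi_i}$, condition (i) pins the inner products $\langle\bm\lambda^l,\bm\lambda^k\rangle$ to the exact phases $\e^{\ii(\theta_k-\theta_l)}$ on every edge, and the Cauchy--Schwarz equality case plus connectedness collapses all $K_j$ to scalar multiples of $U$, delivering unitarity and the identification $\Lambda=U(\argdot)U^\dagger$ in a single pass. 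The paper's version is shorter because it reuses machinery already established; yours is self-contained and, as a bonus, never uses the purity of the states in $\mathcal{C}$ --- your rigidity step needs only condition (i) and coherence, whereas the paper's route through \cref{corr:unitarity} genuinely needs $\Lambda(\rho)$ pure --- so your argument in fact proves a marginally stronger statement. Both proofs rest on the same two pillars (diagonality from the fixed-point \ac{ONB}, and phase propagation along a connected coherence graph), so the difference is one of packaging rather than of underlying mechanism, but your packaging is tighter in its hypotheses.
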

\begin{proof}
The fact that $\Lambda(\psi_i)=\psi_i$, $\forall i\in [d]$ and that $U\phi U^\dagger=\Lambda(\phi)$ for all $\phi\in\mathcal{C}$ together with the coherence condition show that $\Lambda$ is unitary by \cref{corr:unitarity}. 
Let $V\in \mathrm{U}(\HH)$ be the corresponding unitary operator, i.e., $\Lambda(\argdot) \eqqcolon V(\argdot)V^\dagger$.
We know that $V$ is diagonal in the eigenbasis $\{\ket{\psi_i}\}_{i\in[d]}$ of $U$, since $\Lambda(\psi_i) = \psi_i$, $\forall i\in [d]$. 
Let $\{\mu_i\}_{i\in[d]}$ and $\{\lambda_i\}_{i\in[d]}$ be the eigenvalues of the operators $\overline{\bra{\psi_0}U\ket{\psi_0}}U$ and $\overline{\bra{\psi_0}V\ket{\psi_0}}V$, respectively, i.e., $\lambda_0=\mu_0=1$.
Then, if we prove that $\lambda_i=\mu_i$ for $i\in\{1,2,\dots,d-1\}$, we prove that $U\eutp V$, from which the claim of the lemma follows.

Let $(k,l)$ for $k,l\in [d]$ be an edge of the coherence graph of $\mathcal{C}$ (see \cref{def:coherence_graph}), and let $\rho=\sum_{i,j\in[d]} c_{ij}\kb{\psi_i}{\psi_j}\in\mathcal{C}$ be a state in $\mathcal{C}$ such that $c_{kl}\neq 0$.
The condition $\Lambda(\rho) = U\rho U^\dagger$ for that $\rho$ leads to,
\begin{equation}
    \sum_{i,j\in[d]} c_{ij}\mu_i\bar{\mu}_j \kb{\psi_i}{\psi_j} =
    \sum_{i,j\in [d]} c_{ij} \lambda_i\bar{\lambda}_j\kb{\psi_i}{\psi_j},  
\end{equation}
or equivalently $c_{ij}\mu_i\bar{\mu}_j=c_{ij}\lambda_i\bar{\lambda}_j$, for all $i,j\in [d]$, and, in particular $\mu_k\bar{\mu}_l = \lambda_k\bar{\lambda}_l$, since $c_{kl}\neq 0$.
For the edges $(k,0)$ that connect vertex ``$0$'', we then obtain that simply $\mu_k=\lambda_k$ must hold, since $\mu_0=\lambda_0$.
Consequently, for the edges $(k,l)$ that connect a neighbor ``$k$'' of vertex ``$0$'', we get $\mu_l=\lambda_l$.
Since the graph is connected, we then conclude that $\lambda_i=\mu_i$ for all $i\in [d]$, which finishes the proof.
\end{proof}

\begin{techlemma}\label{lemma:orthogonality}
    Let $\Lambda\in\CPTP(\HH_1,\HH_2)$ be any quantum channel. 
    For any $\rho,\sigma\in \DM(\HH_1)$, such that $\Tr[\Lambda(\rho)\Lambda(\sigma)]=0$ it necessarily holds that $\Tr[\rho\sigma]=0$.
\end{techlemma}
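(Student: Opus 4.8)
The plan is to exploit the complete positivity of $\Lambda$ via a Kraus representation and to reorganize $\Tr[\Lambda(\rho)\Lambda(\sigma)]$ into a manifestly nonnegative sum, so that its vanishing forces each summand to vanish. First I would fix Kraus operators $\{K_j\}_j$, with $K_j\colon\HH_1\to\HH_2$, realizing $\Lambda(\argdot)=\sum_j K_j(\argdot)K_j^\dagger$ and satisfying the trace-preservation identity $\sum_j K_j^\dagger K_j = \1$ on $\HH_1$. Since $\rho$ and $\sigma$ are positive semidefinite, their square roots $\rho^{1/2},\sigma^{1/2}$ exist, and using cyclicity of the trace I would rewrite each term of the double Kraus expansion as
\[
\Tr[K_j\rho K_j^\dagger K_k\sigma K_k^\dagger] = \Tr[C_{jk}^\dagger C_{jk}] \geq 0, \qquad C_{jk}\coloneqq \rho^{1/2}K_j^\dagger K_k\sigma^{1/2}.
\]
This is the conceptual heart of the argument: the \emph{a priori} indefinite cross terms ($j\neq k$) are each of the form $\Tr[C^\dagger C]$, hence nonnegative, so $\Tr[\Lambda(\rho)\Lambda(\sigma)]=\sum_{j,k}\Tr[C_{jk}^\dagger C_{jk}]$ is a sum of squares.

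Given the hypothesis $\Tr[\Lambda(\rho)\Lambda(\sigma)]=0$ and the nonnegativity of every summand, I would conclude that $C_{jk}=0$ for all $j,k$; in particular the diagonal operators $C_{jj}=\rho^{1/2}K_j^\dagger K_j\sigma^{1/2}$ all vanish. Summing these over $j$ and factoring $\rho^{1/2}$ and $\sigma^{1/2}$ out of the sum, the trace-preservation identity $\sum_j K_j^\dagger K_j=\1$ collapses the expression to the operator identity $\rho^{1/2}\sigma^{1/2}=0$.

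Finally, I would translate this back into the desired overlap: writing $\Tr[\rho\sigma]=\Tr[(\sigma^{1/2}\rho^{1/2})^\dagger(\sigma^{1/2}\rho^{1/2})]$ and noting $\sigma^{1/2}\rho^{1/2}=(\rho^{1/2}\sigma^{1/2})^\dagger=0$ yields $\Tr[\rho\sigma]=0$, as claimed.

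As for difficulty, there is no genuine obstacle: the only nonroutine observation is the rewriting above that exposes the sum-of-squares structure (equivalently, that $\Lambda(\rho)$ and $\Lambda(\sigma)$ are positive semidefinite, so $\Tr[\Lambda(\rho)\Lambda(\sigma)]=0$ already forces $\Lambda(\rho)\Lambda(\sigma)=0$, and that this orthogonality of supports propagates backward through the Kraus operators). I would also remark that finite-dimensionality is not essential here; existence of a (possibly countable) Kraus decomposition suffices, and all interchanges of summation and trace are justified precisely because the terms are nonnegative.
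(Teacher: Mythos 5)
Your proof is correct and takes essentially the same route as the paper's: both expand $\Tr[\Lambda(\rho)\Lambda(\sigma)]$ over a Kraus representation, recognize every term of the double sum as manifestly nonnegative (a square), conclude each vanishes, and then invoke trace preservation $\sum_j K_j^\dagger K_j = \1$ on the diagonal ($j=k$) terms to recover $\Tr[\rho\sigma]=0$. The only cosmetic difference is that you work with the operator square roots $\rho^{1/2},\sigma^{1/2}$ and the operators $C_{jk}$, whereas the paper inserts the spectral decompositions of $\rho$ and $\sigma$ and argues via the matrix elements $\bra{\psi_k}K_i^\dagger K_j\ket{\phi_l}$ --- your packaging is slightly more coordinate-free (and, as you note, dimension-independent), but the underlying argument is identical.
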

\begin{proof}
    Let $d\coloneqq \dim(\HH_1)$, $\{K_j\}_{j=0}^{n-1}$ be a set of Kraus operators of the channel $\Lambda$, where $n$ is the Kraus rank, and $\rho=\sum_{k\in[d]} \lambda_k\psi_k$, $\sigma=\sum_{l\in[d]}\mu_l\phi_l$ the respective spectral decompositions of $\rho$ and $\sigma$.
    We can express the condition of the lemma as
    \begin{equation}\label{app:eq:lemma_orthogonality}
        0=\Tr[\Lambda(\rho)\Lambda(\sigma)]=\sum_{i,j\in[n]} \Tr[K_i\rho K_i^\dagger K_j\sigma K_j^\dagger] =\sum_{i,j}\sum_{k,l\in[d]}\lambda_k\mu_l\Tr[K_i\psi_k K_i^\dagger K_j\phi_l K_j^\dagger]=\sum_{i,j,k,l}\lambda_k\mu_l\abs{\bra{\psi_k} K_i^\dagger K_j\ket{\phi_l}}^2,
    \end{equation}
    where we omitted the summation limits for brevity.
    From the above it follows that for all $l,k$ with non-zero $\mu_l,\lambda_k$, we must have $\bra{\psi_k}K_i^\dagger K_j\ket{\phi_l}=0$ for all $i$ and $j$, and, in particular, for $i=j$, which results in
    \begin{equation}
        0 = \sum_{i}\bra{\psi_k} K_i^\dagger K_i\ket{\phi_l}=\bra{\psi_k}\sum_{i} K_i^\dagger K_i\ket{\phi_l}=\braket{\psi_k}{\phi_l}.
    \end{equation}
    Therefore, it follows that $\Tr[\rho\sigma]=0$, which finishes the proof.
\end{proof}

\begin{techlemma}\label{lemma:marginal_homomorphism}
    Let $\mathcal{H}_A,\mathcal{H}_B$ be $d_A$- and $d_B$-dimensional Hilbert spaces, respectively, $\mathcal{W}=\{\ket{\psi_i}\}_{i\in[d_A]}$ an orthonormal basis of $\mathcal{H}_A$, and consider a subset of channels 
    \begin{equation}\label{eq:omega}
    \Omega_\mathcal{W} \coloneqq \Set*{\Lambda\in \CPTP(\mathcal{H}_A\otimes \mathcal{H}_B) \given \tr_B\left[\Lambda\left(\psi_i\otimes\rho\right)\right] = \psi_i,\phantom{\Big|} \forall \ket{\psi_i}\in \mathcal{W},\rho\in\DM(\HH_B)}.
    \end{equation}
    Then the following holds.
    \begin{enumerate}
        \item For any choice of Kraus operators $\{K_j\}_{j=0}^{n-1}$ of $\Lambda\in\Omega_{\mathcal{W}}$, there exist suitable $K_j^i\in \LL(\HH_B)$, $i\in[d_A]$ such that \begin{equation}\label{eq:Kraus_form_lemma_intermediate}
       K_j = \sum_{i\in [d_A]} \psi_i\otimes K_j^i.
    \end{equation}
    \item The following interchanging property between $\Lambda$ and the subchannels $\Lambda\vert_B^{\psi_i}$ holds
\begin{equation}\label{eq:important_subchannel_property}
\Lambda(\psi_i\otimes\rho)=\psi_i\otimes \Lambda\vert_B^{\psi_i}(\rho), \quad \forall \rho \in \DM(\HH_B),
\end{equation}
\item The set $ \Omega_\mathcal{W}$ is closed under the composition, and the maps 
    \begin{equation}\label{app:eq:lemma_hom}
    \argdot\vert_B^{\psi_i}:\Omega_\mathcal{W}\to \CPTP(\mathcal{H}_B),\; \Lambda\mapsto \Lambda\vert_B^{\psi_i}
    \end{equation} 
    which map the channels to their respective subchannels (see \cref{def:subchannels}) are homomorphisms.
    \end{enumerate}
\end{techlemma}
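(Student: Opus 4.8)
The plan is to prove the three claims in order, with the block-diagonal Kraus structure of part~1 carrying all the real content and parts~2 and~3 following by direct computation. The conceptual starting point is that the defining condition of $\Omega_{\mathcal{W}}$ is far more rigid than it appears: requiring $\tr_B[\Lambda(\psi_i\otimes\rho)]=\psi_i$ forces the \emph{entire} output state to be a product. I would first record that any bipartite state whose $A$-marginal is the rank-one projector $\psi_i$ is supported on $\ket{\psi_i}\otimes\HH_B$. Indeed, for any $\ket{\psi_i^\perp}$ orthogonal to $\ket{\psi_i}$, the defining relation gives $\tr[(\ketbra{\psi_i^\perp}{\psi_i^\perp}\otimes\1)\,\Lambda(\psi_i\otimes\rho)]=\bra{\psi_i^\perp}\psi_i\ket{\psi_i^\perp}=0$, and positivity of $\Lambda(\psi_i\otimes\rho)$ then forces $(\ketbra{\psi_i^\perp}{\psi_i^\perp}\otimes\1)\,\Lambda(\psi_i\otimes\rho)=0$. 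Hence $\Lambda(\psi_i\otimes\rho)$ is supported on $\ket{\psi_i}\otimes\HH_B$ for every $\rho$.

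For part~1 itself, I would fix an arbitrary Kraus set $\{K_j\}$ and exploit that $\Lambda(\psi_i\otimes\rho)=\sum_j K_j(\psi_i\otimes\rho)K_j^\dagger$ is a sum of positive operators whose total support lies in $\ket{\psi_i}\otimes\HH_B$. Since the support of a sum of positive operators is the span of the individual supports, every summand $K_j(\psi_i\otimes\rho)K_j^\dagger$ is itself supported on $\ket{\psi_i}\otimes\HH_B$. Choosing $\rho$ of full rank makes the range of $K_j(\ket{\psi_i}\otimes\sqrt{\rho}\,(\argdot))$ coincide with that of $K_j(\ket{\psi_i}\otimes(\argdot))$, so $K_j(\ket{\psi_i}\otimes\ket{\phi})\in\ket{\psi_i}\otimes\HH_B$ for all $\ket{\phi}\in\HH_B$. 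Thus $K_j$ maps each subspace $\ket{\psi_i}\otimes\HH_B$ into itself; as $\ket{\psi_i}$ ranges over the basis $\mathcal{W}$ these subspaces decompose $\HH_A\otimes\HH_B$, so $K_j$ is block-diagonal and can be written as $K_j=\sum_i\psi_i\otimes K_j^i$ with $K_j^i\coloneqq(\bra{\psi_i}\otimes\1)K_j(\ket{\psi_i}\otimes\1)$, establishing \cref{eq:Kraus_form_lemma_intermediate} for the fixed (hence arbitrary) representation.

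Parts~2 and~3 are then routine bookkeeping with this structure. For part~2 I would substitute the block-diagonal form; orthogonality of the basis, $\psi_k\psi_i\psi_l=\delta_{ki}\delta_{li}\psi_i$, kills all cross terms and leaves $\Lambda(\psi_i\otimes\rho)=\psi_i\otimes\sum_j K_j^i\rho(K_j^i)^\dagger$. Tracing out $A$ identifies the bracket as $\Lambda\vert_B^{\psi_i}(\rho)$ and shows $\Lambda\vert_B^{\psi_i}$ is completely positive with Kraus operators $\{K_j^i\}$, while $\tr[\Lambda\vert_B^{\psi_i}(\rho)]=\tr[\Lambda(\psi_i\otimes\rho)]=\tr[\rho]$ gives trace preservation; this is exactly \cref{eq:important_subchannel_property}. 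For part~3 I would iterate this identity: for $\Lambda,\Lambda'\in\Omega_{\mathcal{W}}$ and any state $\rho$,
\begin{equation}
(\Lambda'\circ\Lambda)(\psi_i\otimes\rho)=\Lambda'\big(\psi_i\otimes\Lambda\vert_B^{\psi_i}(\rho)\big)=\psi_i\otimes\big(\Lambda'\vert_B^{\psi_i}\circ\Lambda\vert_B^{\psi_i}\big)(\rho).
\end{equation}
Tracing out $B$ and using trace preservation of the subchannels yields $\tr_B[(\Lambda'\circ\Lambda)(\psi_i\otimes\rho)]=\psi_i$, so $\Omega_{\mathcal{W}}$ is closed under composition, while tracing out $A$ gives $(\Lambda'\circ\Lambda)\vert_B^{\psi_i}=\Lambda'\vert_B^{\psi_i}\circ\Lambda\vert_B^{\psi_i}$, i.e.\ the homomorphism property in \cref{app:eq:lemma_hom}.

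I expect the only genuine obstacle to be the passage in part~1 from ``every mixed output $\Lambda(\psi_i\otimes\rho)$ is a product state'' to ``every individual Kraus operator is block-diagonal''. The care required is twofold: justifying that each positive summand inherits the support restriction of the sum, and using a full-rank $\rho$ to strip off the $\rho$-dependence and obtain a statement about $K_j$ on all of $\ket{\psi_i}\otimes\HH_B$. One should also note that, because the argument is run for an arbitrary fixed Kraus set, the conclusion holds for any Kraus representation, as the lemma asserts. Everything downstream is direct manipulation of the resulting block structure.
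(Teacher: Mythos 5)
Your proof is correct, including at the one genuinely delicate point you flag yourself: positivity plus the zero-trace condition does force the output $\Lambda(\psi_i\otimes\rho)$ onto the subspace $\ket{\psi_i}\otimes\HH_B$, the support of a sum of positive operators does contain each summand's support, and the full-rank choice of $\rho$ legitimately strips the $\rho$-dependence, so every Kraus operator leaves each subspace $\ket{\psi_i}\otimes\HH_B$ invariant and is therefore block-diagonal. Parts 2 and 3 then coincide with the paper's own computations essentially verbatim. Your route to part 1, however, is organized differently from the paper's. The paper works componentwise: it expands each Kraus operator as $K_j=\sum_{k,l}\kb{\psi_k}{\psi_l}\otimes K_j^{kl}$, applies $\Lambda$ to $\psi_i\otimes\phi$ with $\phi$ ranging over an ONB of $\HH_B$, takes $\tr_B$, and uses the defining condition to conclude $\sum_j\tr[K_j^{ki}\phi(K_j^{ki})^\dagger]=0$ for $k\neq i$, whence positivity kills each block $K_j^{ki}$ individually. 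You instead argue coordinate-free, at the level of supports and invariant subspaces, never expanding $K_j$ into blocks until the final step. Both arguments hinge on the same positivity mechanism, so this is a difference of bookkeeping rather than of substance; still, each version buys something. Yours isolates a reusable structural fact (a positive operator whose $A$-marginal is a rank-one projector is supported on the corresponding slice) and makes the invariance picture explicit. The paper's basis-by-basis computation makes immediately visible its remark following the lemma, namely that it suffices to check the defining condition of $\Omega_\mathcal{W}$ on an ONB of $\HH_B$; your full-rank-state argument also reaches that refinement, but only after one observes that the condition is affine in $\rho$, so validity on an ONB extends to the (full-rank) maximally mixed state.
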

Note that it suffices to check the condition inside \cref{eq:omega} on an ONB of $\HH_B$ to conclude that the statement holds also for every $\rho\in\DM(\HH_B)$, as we will see in the proof. 
Although the second and third statements are essentially simple corollaries of the first, due to their importance for the proofs, we state them explicitly.
\begin{proof}
We abbreviate $\ket{i}=\ket{\psi_i}$ wherever it is convenient. 
Let $\Lambda\in\Omega_{\mathcal{W}}$, and let $\{K_j\}_{j=1}^{n}$ be the Kraus operators of $\Lambda$, where $n\in \NN$ is the Kraus rank of $\Lambda$. 
We can express each $K_j$ as
\begin{equation}\label{app:eq:lemma_marg_hom_kraus}
    K_j 
    = \sum_{k,l\in [d_A]} \kb{k}{l}\otimes  K_j^{kl},
\end{equation}
with $K_j^{kl}\in \LL(\HH_B)$, for all $k,l\in [d_A]$.
Pick a basis $\mathcal{V}=\{\ket{\phi_i}\}_{i\in[d_B]}$ for $\HH_B$ and let $\ket{\phi}\in\mathcal{V}$ arbitrary and fixed. We compute
\begin{align}\label{app:eq:lemma_marg_hom_1}
    \Lambda(\kb{i}{i}\otimes\phi)=&\sum_{j=1}^n K_j \kb{i}{i}\otimes\phi K_j^\dagger = \sum_j \sum_{k,l,k',l'\in[d_A]} \left(\kb{k}{l}\otimes K_j^{kl}\right)\left(\kb{i}{i}\otimes\phi\right)\left( \kb{l'}{k'}\otimes(K_j^{k'l'})^\dagger\right)\\
 =& \sum_{j,k,l,k',l'}\ket{k}\braket{l}{i}\braket{i}{l'}\bra{k'}\otimes K_j^{kl}\phi(K_j^{k'l'})^\dagger = \sum_{j,k,k'}\kb{k}{k'}\otimes K_j^{ki}\phi(K_j^{k'i})^\dagger.
\end{align}
We apply the partial trace $\tr_B[\argdot]$ to both sides of \cref{app:eq:lemma_marg_hom_1}, and since $\tr_B[\Lambda(\kb{i}{i}\otimes\phi)]=\kb{i}{i}$ by the assumption $\Lambda\in\Omega_{\mathcal{W}}$, we arrive at
\begin{equation}
    \sum_{j,k,k'}\tr[K_j^{ki}\phi (K_j^{k'i})^\dagger]\kb{k}{k'}=\kb{i}{i},
\end{equation}
which immediately implies that 
\begin{equation}
    \sum_{j=1}^n \tr[K_j^{ki}\phi(K_j^{ki})^\dagger]=0,\text{ for all }k\neq i.
\end{equation}
At the same time $K_j^{ki}\phi (K_j^{ki})^\dagger$ is positive semi-definite, so it has to be that  $K_j^{ki}\phi(K_j^{ki})^\dagger = 0$ for all $k\neq i$. 
Since $\phi\in\mathcal{V}$ was arbitrary, we conclude that $K_j^{ki}=0$ if $k\neq i$.
Thus, the decomposition of the Kraus operator $K_j$ in \cref{app:eq:lemma_marg_hom_kraus} simplifies as  $K_j = \sum_{i\in[d_A]}\psi_i\otimes K_j^{i}$, where $K_j^i\coloneqq K_j^{ii}$, and we get for arbitrary $\rho\in \mathcal{D}(\mathcal{H}_B)$ that
\begin{equation}
\Lambda(\psi_i\otimes\rho)=\psi_i\otimes \sum_{j=1}^n K_j^i\rho (K_j^i)^\dagger.
\end{equation}
With the above identity, the second claim follows immediately by the definition of subchannels (\cref{def:subchannels}). From here, it is also straightforward to conclude that $\Omega_W$ is closed under the composition of channels, i.e., for any $\Lambda,\Lambda'\in\Omega_{\mathcal{W}}$, it holds that $\Lambda\circ\Lambda'\in\Omega_{\mathcal{W}}$.
Finally, we can easily use the identity \cref{eq:important_subchannel_property} to show that the maps in \cref{app:eq:lemma_hom} are homomorphisms.
For $\Lambda,\Lambda'\in\Omega_{\mathcal{W}}$, we calculate
\begin{equation}
(\Lambda\circ\Lambda')\vert_B^{\psi_i}(\rho) = \tr_A[\Lambda\circ\Lambda'(\psi_i\otimes\rho)]=\tr_A\left[\Lambda\left(\psi_i\otimes \Lambda'\vert_B^{\psi_i}(\rho)\right)\right]=\Lambda\vert_B^{\psi_i}\circ \Lambda'\vert_B^{\psi_i}(\rho),
\end{equation}
for arbitrary $\rho\in\DM(\HH_B)$.
This finishes the proof.
\end{proof}

\begin{techlemma}\label{lemma:ONB_channel_purity}
    A channel $\Lambda\in\CPTP(\mathcal{H})$ that maps an \ac{ONB}   $\{\ket{\psi_i}\}_{i\in[d]}$ of a $d$-dimensional Hilbert space $\HH$ to another \ac{ONB} $\{\ket{\phi_i}\}_{i\in[d]}$ of $\HH$ cannot increase the purity of any input quantum state.
    \end{techlemma}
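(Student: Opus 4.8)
The plan is to prove the contractivity of purity, $\Tr[\Lambda(\rho)^2]\le\Tr[\rho^2]$ for every $\rho\in\DM(\HH)$, by exploiting the rigid structure that mapping one \ac{ONB} onto another forces on the Kraus operators of $\Lambda$. First I would reduce to the case $\ket{\phi_i}=\ket{\psi_i}$: composing $\Lambda$ with conjugation by the unitary $W\coloneqq\sum_i\kb{\psi_i}{\phi_i}$, i.e.\ passing to $\Lambda'(\argdot)\coloneqq W\Lambda(\argdot)W^\dagger$, leaves the purity of every output invariant (since $\Tr[(W X W^\dagger)^2]=\Tr[X^2]$) while turning the hypothesis into $\Lambda'(\psi_i)=\psi_i$ for all $i$. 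So without loss of generality the channel fixes each basis projector.

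The key step is then to read off the Kraus structure. Let $\{K_j\}_{j\in[n]}$ be Kraus operators of $\Lambda$. Since $\psi_i=\Lambda(\psi_i)=\sum_j K_j\ket{\psi_i}\bra{\psi_i}K_j^\dagger$ is a rank-one projector, every vector $K_j\ket{\psi_i}$ must be proportional to $\ket{\psi_i}$, so each Kraus operator is diagonal in the reference basis, $K_j=\sum_{i\in[d]}\lambda_{ji}\kb{\psi_i}{\psi_i}$ for scalars $\lambda_{ji}\in\CC$. The trace-preservation condition $\sum_j K_j^\dagger K_j=\openone$ then reduces to $\sum_j\abs{\lambda_{ji}}^2=1$ for each $i$, so the vectors $\bm\lambda^i\coloneqq(\lambda_{0i},\dots,\lambda_{(n-1)i})$ are all normalised, exactly as the $\bm\lambda^i$ appearing in the proof of \cref{lemma:unitarity_from_subchannels}.

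Finally I would run the same purity computation as there. Writing $\rho=\sum_{k,l}c_{kl}\kb{\psi_k}{\psi_l}$, the diagonal Kraus form gives $\Lambda(\rho)=\sum_{k,l}c_{kl}\langle\bm\lambda^l,\bm\lambda^k\rangle\kb{\psi_k}{\psi_l}$, which is again Hermitian because $c_{lk}=\bar c_{kl}$ and $\langle\bm\lambda^k,\bm\lambda^l\rangle=\overline{\langle\bm\lambda^l,\bm\lambda^k\rangle}$. Squaring and taking the trace yields, in full analogy to \cref{eq:subchannel_reconstruction_lemma}, the expression $\Tr[\Lambda(\rho)^2]=\sum_{k,l}\abs{c_{kl}}^2\,\abs{\langle\bm\lambda^l,\bm\lambda^k\rangle}^2$. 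By Cauchy--Schwarz each overlap obeys $\abs{\langle\bm\lambda^l,\bm\lambda^k\rangle}\le\norm{\bm\lambda^l}\,\norm{\bm\lambda^k}=1$, whence $\Tr[\Lambda(\rho)^2]\le\sum_{k,l}\abs{c_{kl}}^2=\Tr[\rho^2]$, which is the claim.

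The only genuinely substantive step is establishing the diagonal Kraus structure from the pure-to-pure hypothesis; once that is in place the estimate is a one-line Cauchy--Schwarz argument that reuses the machinery already developed for \cref{lemma:unitarity_from_subchannels}. I do not expect a real obstacle here, and the main thing to double-check is that the bookkeeping (the Hermiticity of $\Lambda(\rho)$ and the inner-product convention for $\bm\lambda^i$) is kept consistent with the conventions already fixed earlier in the appendix.
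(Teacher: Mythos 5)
Your proposal is correct and takes essentially the same route as the paper's proof: reduce to $\ket{\phi_i}=\ket{\psi_i}$ by composing with a unitary, deduce that all Kraus operators are diagonal in that basis, and bound $\Tr[\Lambda(\rho)^2]=\sum_{k,l}\abs{c_{kl}}^2\abs{\langle\bm\lambda^l,\bm\lambda^k\rangle}^2$ by Cauchy--Schwarz on the normalized vectors $\bm\lambda^i$. The only cosmetic difference is that you prove the diagonal Kraus structure directly from the rank-one output condition, whereas the paper obtains it by citing \cref{lemma:marginal_homomorphism} in the special case $\HH_B=\CC$; the computations are otherwise identical.
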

\begin{proof}
Without loss of generality, let us take $\ket{\phi_i} = \ket{\psi_i}, \forall i\in [d]$, otherwise we can apply a unitary $\sum_{i\in[d]} \kb{\psi_i}{\phi_i}$ after $\Lambda$, which does not affect the purity of any output state. 
Since $\Lambda$ acts as the identity on $\psi_i$ for all $i\in[d]$, the Kraus operators $\{K_j\}_{j=0}^{n-1}$ of the channel $\Lambda$ must be diagonal in the basis $\{\ket{\psi_i}\}_{i\in[d]}$, where $n\in \NN$ is the Kraus rank of $\Lambda$.
This is a simple corollary of \cref{lemma:marginal_homomorphism} for the special case of $\mathcal{H}_B=\mathbb{C}$.
Eq.~\eqref{eq:Kraus_form_lemma_intermediate}.
Let therefore $K_j = \sum_{i\in[d]} \lambda_j^{i}\psi_i$ for appropriately chosen $\lambda_j^{i}\in \CC$.
From the normalization condition of $\Lambda$, we get that the vectors $\bm\lambda^i=(\lambda^i_1,\dots,\lambda^i_n)$ are normalized. 
We can then calculate the purity of the output state for an arbitrary $\rho\in\DM(\HH)$,
\begin{align}\begin{split}
    \tr[\Lambda(\rho)^2] &= \sum_{j,j'\in [n]}\,\sum_{i,k,i',k'\in[d]}\tr\left[(\lambda^i_j\psi_i \rho \bar{\lambda}^k_j \psi_k)(\lambda^{i'}_{j'}\psi_{i'}\rho\bar{\lambda}^{k'}_{j'}\psi_{k'})\right]
    =\sum_{i,k\in [d]}\underbrace{\abs{\langle \bm\lambda^i,\bm\lambda^k\rangle}^2}_{\leq 1}\underbrace{\tr\left[\psi_i\rho \psi_k \rho\right]}_{\geq 0} \leq \tr[\rho^2],
\end{split}\end{align}
where in the last step we used the fact that $\{\ket{\psi_i}\}_{i\in[d]}$ is an \ac{ONB}.
\end{proof}

\begin{techlemma}\label{lemma:coherent_state}
Let $\Lambda\in\CPTP(\HH_A\otimes \HH_B)$ be a channel on a composite system described by Hilbert spaces $\HH_A$ and $\HH_B$, and let $\{\ket{\psi_i}\}_{i\in[d_A]}$, $\{\ket{\phi_l}\}_{l\in[d_B]}$ be \acp{ONB} of $\mathcal{H}_A$ and $\mathcal{H}_B$, respectively. 
Assume that $\Lambda$ admits a Kraus representation with Kraus operators $\{K_j\}_{j=0}^{n-1}$, which decompose as
\begin{equation}
    K_j = \sum_{k\in[d_A]}\lambda^{(k)}_j \psi_k\otimes U_k\ ,
\end{equation}
with $\lambda^{(k)}_j\in \CC$, and where the unitaries $U_k\in \U(\HH_B)$ are such that $\bra{\phi_l}U_k\ket{\phi_0}\neq 0, \forall l\in[d_B], k\in [d_A]$. 
Let $\rho\in\DM(\mathcal{H}_A)$ and $\mathcal{I}\subseteq [d_A]$ be such that $\tr[\psi_i\rho]>0$ for all $i\in \mathcal{I}$. 
Then we have
\begin{equation}
\tr\Big[\psi_i\otimes\phi_l \Lambda(\rho\otimes\phi_0)\Big]>0\quad \forall i\in \mathcal{I},l\in[d_B].
\end{equation}
\end{techlemma}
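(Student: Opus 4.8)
The plan is to compute $\tr[(\psi_i\otimes\phi_l)\,\Lambda(\rho\otimes\phi_0)]$ directly and show that it collapses to a product of three manifestly non-negative factors, each of which is in fact strictly positive under the hypotheses. First I would expand the channel using the given block-diagonal Kraus form,
\[
\Lambda(\rho\otimes\phi_0)=\sum_j K_j(\rho\otimes\phi_0)K_j^\dagger
=\sum_j\sum_{k,k'}\lambda_j^{(k)}\bar\lambda_j^{(k')}\,(\psi_k\rho\psi_{k'})\otimes(U_k\phi_0 U_{k'}^\dagger),
\]
where I use the appendix shorthand $\psi_k=\ketbra{\psi_k}{\psi_k}$ and $\phi_0=\ketbra{\phi_0}{\phi_0}$. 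The structural feature I would exploit is that $\{\psi_k\}$ are mutually orthogonal rank-one projectors on $\HH_A$, so that pairing against $\psi_i$ and tracing over the $A$-factor gives $\tr[\psi_i\,\psi_k\rho\psi_{k'}]=\delta_{ik}\delta_{ik'}\tr[\psi_i\rho]$, which forces $k=k'=i$ and annihilates every off-diagonal cross term.

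After this collapse only the diagonal $k=k'=i$ summand survives, and the remaining $B$-factor evaluates to $\tr[\phi_l\,U_i\phi_0 U_i^\dagger]=\abs{\bra{\phi_l}U_i\ket{\phi_0}}^2$. I would thus arrive at
\[
\tr[(\psi_i\otimes\phi_l)\,\Lambda(\rho\otimes\phi_0)]
=\Big(\textstyle\sum_j\abs{\lambda_j^{(i)}}^2\Big)\,\tr[\psi_i\rho]\,\abs{\bra{\phi_l}U_i\ket{\phi_0}}^2.
\]
Here $\tr[\psi_i\rho]>0$ for $i\in\mathcal{I}$ holds by assumption, and $\abs{\bra{\phi_l}U_i\ket{\phi_0}}^2>0$ is exactly the coherence condition $\bra{\phi_l}U_i\ket{\phi_0}\neq 0$ imposed on the unitaries $U_k$.

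The only factor that needs more than the block structure is $\sum_j\abs{\lambda_j^{(i)}}^2>0$, and for this I would invoke trace preservation. Computing $\sum_j K_j^\dagger K_j$ and again using $\psi_k\psi_{k'}=\delta_{kk'}\psi_k$ together with $U_k^\dagger U_k=\openone$ yields $\sum_j K_j^\dagger K_j=\sum_k\big(\sum_j\abs{\lambda_j^{(k)}}^2\big)\,\psi_k\otimes\openone$, which must equal $\openone\otimes\openone=(\sum_k\psi_k)\otimes\openone$; since the $\psi_k$ form a complete orthogonal family this pins down $\sum_j\abs{\lambda_j^{(i)}}^2=1$ for every $i$. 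Combining the three strictly positive factors then gives the claimed strict positivity for all $i\in\mathcal{I}$ and $l\in[d_B]$. I do not expect a genuine obstacle here, since the statement is essentially a bookkeeping computation; the only points requiring care are applying the orthogonality-induced vanishing of the off-diagonal terms correctly, and recognizing that it is normalization (rather than any extra assumption) that supplies positivity of the $\lambda$-sum.
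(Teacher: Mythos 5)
Your proof is correct and follows essentially the same route as the paper's: expand the block-diagonal Kraus form, use orthogonality of the $\psi_k$ to kill all cross terms, and reduce the quantity to $\bigl(\sum_j\abs{\lambda_j^{(i)}}^2\bigr)\tr[\psi_i\rho]\,\abs{\bra{\phi_l}U_i\ket{\phi_0}}^2$. The only (immaterial) differences are that the paper organizes the computation via the adjoint channel $\Lambda^\dagger$ while you work in the Schr\"odinger picture, and that you explicitly derive $\sum_j\abs{\lambda_j^{(i)}}^2=1$ from trace preservation where the paper simply cites normalization.
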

\begin{proof}
We abbreviate $\ket{\psi_i}\otimes\ket{\phi_l}=:\ket{i,l}$. 
Fix $i\in\mathcal{I}$, $l\in[d_B]$, and verify that
\begin{align}\begin{split}
    \tr\left[\kb{i,l}{i,l} \Lambda(\rho\otimes\kb{0}{0})\right]& =\tr\left[\Lambda^\dagger(\kb{i,l}{i,l})\rho\otimes\kb{0}{0}\right] = \sum_{j\in [n]}\Tr\left[K^\dagger_j\kb{i,l}{i,l}K_j\rho\otimes\kb{0}{0}\right]\\
    &=\sum_{j\in [n]}\sum_{k,k'\in[d_A]}\overline{\lambda_j^{(k')}}\lambda_j^{(k)}\tr\left[\left(\kb{k'}{k'}\otimes U_{k'}^\dagger\right)\kb{i,l}{i,l} \left(\kb{k}{k}\otimes  U_k\right) \rho\otimes\kb{0}{0} \right]\\
    &=\sum_{j\in [n]}\overline{\lambda_j^{(i)}}\lambda_j^{(i)}\tr\left[\left(\kb{i}{i}\otimes U_i^\dagger\kb{l}{l}U_i\right)\left(\rho\otimes\kb{0}{0}\right)\right]=\tr\left(\kb{i}{i}\rho\right)\abs{\bra{l}U_i \ket{0}}^2>0\ ,
\end{split}\end{align}
where we used the fact that $\sum_{j\in [n]}\abs*{\lambda^{(i)}_j}^2 = 1$ for all $i\in [d_A]$ due to the normalization of $\Lambda$.
\end{proof}

\begin{techlemma}\label{lemma:dist}
    Let $\mathcal{M} = (\ket{\psi},\Set{U_x}_{x\in\XX},\Set{M_a}_{a\in\AA})$ and $\mathcal{N} = (\rho,\Set{\Lambda_x}_{x\in\XX},\Set{N_a}_{a\in\AA})$ be two models defined over $\HH\cong\CC^d$, with $\AA=\{1,2,\dots,d\}$ and $\Set{M_a}_{a\in \AA}$ being rank-$1$ projective measurement.
    Define the following quantities
    \begin{align}
        \d_1&\coloneqq 1-\Fid(\rho,\kb{\psi}{\psi}), &  \d_2^{(x)}&\coloneqq 1-\Fid_\avg(\Lambda_x,U_x),\;\forall x\in \XX,\\
        \d_3&\coloneqq \frac{1}{2}\max_{\sigma\in\DM(\HH)}\Set*{\sum_{a\in\AA}\abs*{\Tr[\sigma(M_a-N_a)]}}, & \d'_3&\coloneqq 1-\frac{1}{d}\sum_{a\in\AA}\Tr[M_aN_a].
        \end{align}
        Then it holds that
        \begin{equation}\label{app:eq:dist_relation}
            \frac{1}{3}\left(\d_1+\frac{1}{\abs{X}}\sum_{x\in\XX}\d^{(x)}_2+\d'_3\right)\leq \max\Set{\mathrm{d_1},\d_2^{(x)},\d_3}\leq 3\max\Set{d,\abs{\XX}}\sqrt{\frac{1}{3}\left(\d_1+\frac{1}{\abs{X}}\sum_{x\in\XX}\d^{(x)}_2+\d'_3\right)}\ ,
        \end{equation}
        where the maximization is also taken over $x\in\XX$.
\end{techlemma}
\begin{proof}
    First, we prove that $\d'_3\leq d_3\leq \frac{d}{\sqrt{2}}\sqrt{\d'_3}$.
    Since $M_a$ is a rank-$1$ projective measurement, then we can take $\sigma=M_b$ for some $b\in \AA$, and get
    \begin{equation}
        \d_3\geq \frac{1}{2}\sum_{a\in\AA}\abs*{\Tr[M_b(M_a-N_a)]} = \frac{1}{2}\left(1-\Tr[M_bN_b]+\sum_{a\in\AA\setminus b}\Tr[M_bN_a]\right) = 1-\Tr[M_bN_b].
    \end{equation}
    Taking the average over $b\in \AA$, gives $\d_3\geq \d'_3$.
    For the upper bound, we use the relations between the norms
    \begin{equation}\label{app:eq:d3_relation}
        \d_3\leq \frac{1}{2}\sum_{a\in\AA}\norm{M_a-N_a}_\infty\leq \frac{1}{2}\sum_{a\in\AA}\sqrt{1+\Tr[N_a^2]-2\Tr[M_aN_a]}\leq \frac{d}{2}\sqrt{\frac{1}{d}\sum_{a\in\AA}\left(1+\Tr[N_a^2]-2\Tr[M_aN_a]\right)}\leq d\sqrt{\d'_3}\ .
    \end{equation}
    To prove the first inequality in \cref{app:eq:dist_relation}, we use $\d'_3\leq \d_3$ and a trivial upper bound $\d_1,\d_2^{(x)},\d_3\leq \max\Set{\d_1,\d_2^{(x)},\d_3}$ and take the average.
    For the second inequality in \cref{app:eq:dist_relation}, we use the fact that $\d_1,\d_2^{(x)}\in[0,1]$ together with \cref{app:eq:d3_relation} to obtain 
    \begin{equation}
        \max\Set{\d_1,\d_2^{(x)},\d_3}\leq \max\Set*{\sqrt{\d_1},\sqrt{\d_2^{(x)}},d\sqrt{\d'_3}}\leq 3\max\Set{d,\abs{\XX}}\left(\frac{1}{3}\sqrt{\d_1}+\frac{1}{3\abs{\XX}}\sum_{x\in\XX}\sqrt{\d_2^{(x)}}+\frac{1}{3}\sqrt{\d'_3}\right),
    \end{equation}
    from where the claim follows due to the concavity of the square root.
\end{proof}
\end{appendix}

\clearpage
\twocolumngrid
\bibliographystyle{myapsrev4-2}
\bibliography{ref,mk}

\end{document}